\newcommand{\vc}[1]{\ensuremath{\bm{#1}}\xspace}
\newcommand{\PP}{\ensuremath{\text{P}}\xspace}
\newcommand{\NP}{\ensuremath{\text{NP}}\xspace}
\newcommand{\IS}{\textsc{Independent Set}\xspace}
\newcommand{\HEGGFULL}{\textsc{Hypergraph Edge Guessing Game}\xspace}
\newcommand{\HEGG}{\textsc{HEGG}\xspace}
\newcommand{\PWF}{Sanitized Welfare\xspace}
\newcommand{\pwf}{sanitized welfare\xspace}
\newcommand{\psu}{sanitized sender utility\xspace}
\newcommand{\FullInfWel}{full-information welfare\xspace}
\newcommand{\TotWel}{full-information welfare\xspace}
\newcommand{\ServWel}{served social welfare\xspace}
\newcommand{\FISW}{\FullInfWel}
\newcommand{\NUMSIG}{\ensuremath{M}\xspace}
\newcommand{\ALLSIGSET}{\ensuremath{\Sigma}\xspace}
\newcommand{\SIGSET}[1][]{\ensuremath{%
\ifthenelse{\equal{#1}{}}{Q}{%
\ifthenelse{\equal{#1}{'}}{Q'}{%
\ifthenelse{\equal{#1}{''}}{Q''}{%
Q_{#1}}}}}\xspace}
\newcommand{\SIGSETS}{\ensuremath{\SIGSET}\xspace}
\newcommand{\SIGSETT}{\ensuremath{\hat{\SIGSET}}\xspace}
\newcommand{\PRSET}[1][]{\ensuremath{%
\ifthenelse{\equal{#1}{}}{S}{%
\ifthenelse{\equal{#1}{'}}{S'}{%
\ifthenelse{\equal{#1}{''}}{S''}{%
\Sigma_{#1}}}}}\xspace}
\newcommand{\SIG}[1][]{\ensuremath{%
\ifthenelse{\equal{#1}{}}{\sigma}{%
\ifthenelse{\equal{#1}{'}}{{\sigma'}}{%
\ifthenelse{\equal{#1}{^S}}{{\hat{\sigma}_{S}}}{%
\ifthenelse{\equal{#1}{^T}}{{\hat{\sigma}_{T}}}{%
\sigma^{#1}}}}}}\xspace}
\newcommand{\SIGS}[1][]{\ensuremath{%
\ifthenelse{\equal{#1}{}}{\sigma}{%
\ifthenelse{\equal{#1}{'}}{{\sigma'}}{%
\ifthenelse{\equal{#1}{^S}}{{\hat{\sigma}_{S}}}{%
\ifthenelse{\equal{#1}{^T}}{{\hat{\sigma}_{T}}}{%
\sigma^{#1}}}}}}\xspace}
\newcommand{\SIGT}[1][]{\ensuremath{%
\ifthenelse{\equal{#1}{}}{\omega}{%
\ifthenelse{\equal{#1}{'}}{{\omega'}}{%
\ifthenelse{\equal{#1}{^S}}{{\hat{\omega}_{S}}}{%
\ifthenelse{\equal{#1}{^T}}{{\hat{\omega}_{T}}}{%
\omega^{#1}}}}}}\xspace}
\newcommand{\GSIG}{\ensuremath{\perp}\xspace}
\newcommand{\PostD}[1][]{\ensuremath{\ifthenelse{\equal{#1}{}}{{\mu}}{{\mu}_{#1}}}\xspace}
\newcommand{\SigScheme}[1][]{\ensuremath{%
\ifthenelse{\equal{#1}{}}{X}{%
\ifthenelse{\equal{#1}{'}}{X'}{%
\ifthenelse{\equal{#1}{I}}{\varphi}{%
X^{#1}}}}%
}\xspace}
\newcommand{\CSigP}[2]{\ensuremath{\varphi(#1,#2)}\xspace}
\newcommand{\RMap}[1]{\ensuremath{\varphi(#1)}\xspace}
\newcommand{\SigP}[3][]{\ensuremath{%
\ifthenelse{\equal{#1}{}}{x_{#2,#3}}{%
\ifthenelse{\equal{#1}{'}}{x'_{#2,#3}}{%
x^{#1}_{#2,#3}}}}\xspace}
\newcommand{\SigC}[2][]{\ensuremath{%
\ifthenelse{\equal{#1}{}}{\vc{x}_{#2}}{%
\ifthenelse{\equal{#1}{'}}{\vc{x'}_{#2}}{%
\vc{x}^{#1}_{#2}}}}\xspace}
\newcommand{\OSVec}[1][]{\ensuremath{%
\ifthenelse{\equal{#1}{}}{\vc{y}}{%
\ifthenelse{\equal{#1}{'}}{\vc{y'}}{%
\vc{y}^{#1}}}}\xspace}
\newcommand{\OS}[2][]{\ensuremath{%
\ifthenelse{\equal{#1}{}}{y_{#2}}{%
\ifthenelse{\equal{#1}{'}}{y'_{#2}}{%
y^{#1}_{#2}}}}\xspace}
\newcommand{\TS}[2][]{\ensuremath{%
\ifthenelse{\equal{#1}{}}{y_{[#2,\NUMVAL]}}{%
\ifthenelse{\equal{#1}{'}}{y'_{[#2,\NUMVAL]}}{%
y^{#1}_{[#2,\NUMVAL]}}}}\xspace}
\newcommand{\PMY}[2][]{\ensuremath{%
\ifthenelse{\equal{#1}{}}{y_{#2}}{%
\ifthenelse{\equal{#1}{'}}{y'_{#2}}{%
y^{#1}_{#2}}}}\xspace}
\newcommand{\SigI}[1]{\ensuremath{k_{#1}}\xspace}
\newcommand{\SigIS}[1]{\ensuremath{k_{#1}}\xspace}
\newcommand{\SigIT}[1]{\ensuremath{\hat{k}_{#1}}\xspace}
\newcommand{\LSig}[2]{\ensuremath{{#1}_{\leq {#2}}}\xspace}
\newcommand{\PrP}{\ensuremath{k}\xspace}
\newcommand{\NUMVAL}{\ensuremath{n}\xspace}
\newcommand{\VAL}{\ensuremath{V}\xspace}
\newcommand{\Val}[2][]{\ensuremath{%
\ifthenelse{\equal{#1}{'}}{v'_{#2}}{v_{#2}}}\xspace}
\newcommand{\VVec}[1][]{\ensuremath{%
\ifthenelse{\equal{#1}{'}}{\vc{v'}}{\vc{v}}}\xspace}
\newcommand{\DIST}{\ensuremath{\Gamma}\xspace}
\newcommand{\PVal}[2][]{\ensuremath{
\ifthenelse{\equal{#1}{'}}{p'_{#2}}{%
p_{#2}}}\xspace}
\newcommand{\PVec}[1][]{\ensuremath{%
\ifthenelse{\equal{#1}{'}}{\vc{p'}}{\vc{p}}}\xspace}
\newcommand{\RPVec}[1][]{\ensuremath{%
\ifthenelse{\equal{#1}{}}{\vc{p'}}{%
\vc{p}^{(#1)}}}\xspace}
\newcommand{\RP}[2][]{\ensuremath{%
\ifthenelse{\equal{#1}{}}{p'_{#2}}{%
p^{(#1)}_{#2}}}\xspace}
\newcommand{\PRICE}{\ensuremath{P}\xspace}
\newcommand{\NATURE}{\ensuremath{\Omega}\xspace}
\newcommand{\NState}{\ensuremath{\omega}\xspace}
\newcommand{\PNat}[1]{\ensuremath{p_{#1}}\xspace}
\newcommand{\PNatV}{\ensuremath{\vc{p}}\xspace}
\newcommand{\RECACT}{\ensuremath{A}\xspace}
\newcommand{\recact}{\ensuremath{a}\xspace}
\newcommand{\ANUTIL}[1]{\ensuremath{
\ifthenelse{\equal{#1}{S}}{u_S}{%
u_R}}\xspace}
\newcommand{\ANUtil}[3]{\ensuremath{\ANUTIL{#1}(#2,#3)}\xspace}
\newcommand{\GAME}{\ensuremath{(\NATURE, \PNatV, \RECACT, \ANUTIL{S}, \ANUTIL{R})}\xspace}
\newcommand{\Util}[2][]{\ensuremath{%
\ifthenelse{\equal{#1}{}}{U(#2)}{%
\ifthenelse{\equal{#1}{-}}{\tilde{U}(#2)}{%
U_{#1}(#2)}}
}\xspace}
\newcommand{\SUtil}[2][]{\ensuremath{%
u(#2)}\xspace}
\newcommand{\PWelfare}[2][]{\ensuremath{%
\ifthenelse{\equal{#1}{}}{\tilde{W}(#2)}{%
\tilde{W}_{#1}(#2)}}\xspace}
\newcommand{\Welfare}[2][]{\ensuremath{%
\ifthenelse{\equal{#1}{}}{W(#2)}{%
W_{#1}(#2)}}\xspace}
\newcommand{\SSW}[2]{\ensuremath{w_{#1}(#2)}\xspace}
\newcommand{\SW}[1]{\ensuremath{W(#1)}\xspace}
\newcommand{\PSW}[1]{\ensuremath{\tilde{W}(#1)}\xspace}
\newcommand{\Revenue}[3][]{\ensuremath{%
\ifthenelse{\equal{#1}{}}{R(#2,#3)}{%
R_{#1}(#2,#3)}}\xspace}
\newcommand{\Intv}[1]{\ensuremath{I_{#1}}\xspace}
\newcommand{\Rate}[1]{\ensuremath{\rho_{#1}}\xspace}
\newcommand{\PMS}{\ensuremath{X}\xspace}
\newcommand{\PMSP}{\ensuremath{X^+}\xspace}
\newcommand{\PMT}{\ensuremath{\hat{X}}\xspace}
\newcommand{\PMTP}{\ensuremath{\hat{X}^+}\xspace}
\newcommand{\inversediff}[2]{\frac{1}{#1}-\frac{1}{#2}}
\newcommand{\inversediffi}[2]{\inversediff{v_{#1}}{v_{#2}}}
\newcommand{\pmS}[2]{\ensuremath{%
\ifthenelse{\equal{#1}{}}{x}{x_{#2,#1}}}\xspace} 
\newcommand{\pmSp}[2]{\ensuremath{%
\ifthenelse{\equal{#1}{}}{x^+}{x^+_{#2,#1}}}\xspace} 
\newcommand{\pmT}[2]{\ensuremath{%
\ifthenelse{\equal{#1}{}}{\hat{x}}{\hat{x}_{#2,#1}}}\xspace}
\newcommand{\pmTp}[2]{\ensuremath{%
\ifthenelse{\equal{#1}{}}{\hat{x}^+}{\hat{x}^+_{#2,#1}}}\xspace}
\newcommand{\dS}[2]{\ensuremath{\delta^{#1}_{#2}}\xspace}
\newcommand{\dT}[2]{\ensuremath{\hat{\delta}^{#1}_{#2}}\xspace}
\newcommand{\nfS}[2][]{\ensuremath{%
\ifthenelse{\equal{#1}{}}{e_{#2}}{e^{{#1}}_{#2}}}\xspace} 
\newcommand{\nfT}[2][]{\ensuremath{%
\ifthenelse{\equal{#1}{}}{\hat{e}_{#2}}{\hat{e}^{{#1}}_{#2}}}\xspace}
\newcommand{\nfSn}[1][]{\ensuremath{%
\ifthenelse{\equal{#1}{}}{m}{m_{#1}}}}
\newcommand{\nfTn}[1][]{\ensuremath{%
\ifthenelse{\equal{#1}{}}{\hat{m}}{\hat{m}_{#1}}}}
\newcommand{\StoT}[1]{\ensuremath{\lfloor #1\rfloor}}
\newcommand{\ITtoIS}[1]{\ensuremath{\lfloor #1\rfloor}}
\newcommand{\ITtoISSUM}[1]{\SigIS{#1}\le \SigIT{\SIGT} < \SigIS{#1-1},
\SigIT{\SIGT}\le k}
\newcommand{\START}{\ensuremath{j^*}}
\newcommand{\opmS}[2]{\pmS{#1}{#2}}
\newcommand{\opmSp}[2]{\pmSp{#1}{#2}}
\newcommand{\opmT}[2]{\pmT{#1}{#2}}
\newcommand{\opmTp}[2]{\pmTp{#1}{#2}}
\DeclareMathOperator{\Rev}{Rev}
\DeclareMathOperator{\Welf}{Welfare}
\renewcommand{\hat}{\widehat}
\renewcommand{\tilde}{\widetilde}
\def\min{\qopname\relax n{min}}
\def\max{\qopname\relax n{max}}
\newcommand{\RR}{\mathbb{R}}
\def\eps{\epsilon}
\newcommand{\NUMSUPP}{M'}
\DeclareMathOperator{\MSW}{SW}
\newcommand{\EqualRevN}{n}
\newcommand{\MP}[2][]{\ensuremath{%
\ifthenelse{\equal{#1}{'}}{q'_{#2}}{%
\ifthenelse{\equal{#1}{2}}{q''_{#2}}{%
q_{#2}}}}\xspace}
\newcommand{\MPV}[1][]{\ensuremath{%
\ifthenelse{\equal{#1}{'}}{\vc{q'}}{%
\ifthenelse{\equal{#1}{2}}{\vc{q''}}{%
\vc{q}}}}\xspace}
\newcommand{\Bin}[1]{\ensuremath{B_{#1}}\xspace}
\providecommand{\xhdr}[1]{\subsubsection*{{#1}}}
\newcommand{\algoname}[1]{\textnormal{\textsc{#1}}}
\begin{document}

\title{Persuasion with Limited Communication}
 \author{Shaddin Dughmi \\shaddin@usc.edu \and David Kempe \\
   David.M.Kempe@gmail.com \and Ruixin Qiang\\ rqiang@usc.edu}



\maketitle
\begin{abstract}
We examine information structure design, 
also called ``persuasion'' or ``signaling,''
in the presence of a constraint on the amount of communication. 
We focus on the fundamental setting of bilateral trade, 
which in its simplest form involves a seller with a single item to price,  
a buyer whose value for the item is drawn from a common prior
distribution over $n$ different possible values, 
and a take-it-or-leave-it-offer protocol. 
A mediator with access to the buyer's type may partially reveal such
information to the seller in order to further some objective such as
the social welfare or the seller's revenue.   
%
We study how a limit on the number of bits of communication affects
this setting in two respects:
(1) How much does this constraint reduce the optimal welfare or
revenue? 
(2) What effect does constraining communication have on the
computational complexity of the mediator's optimization problem? 


In the setting of maximizing welfare under bilateral trade, 
we exhibit positive answers for both questions (1) and (2).
Whereas the optimal unconstrained scheme may involve $n$
signals (and thus $\log(n)$ bits of communication),
we show that $O(\log(n) \log \frac{1}{\eps})$ signals suffice for a
$1-\eps$ approximation to the optimal welfare, and this bound is tight.
This largely justifies the design of algorithms for signaling subject
to drastic limits on communication.
As our main result, we exhibit an efficient algorithm for computing a 
$\frac{\NUMSIG-1}{\NUMSIG} \cdot (1-1/e)$-approximation to the
welfare-maximizing scheme with at most \NUMSIG signals. 
This result hinges on an intricate submodularity argument 
which relies on the optimality of a greedy algorithm for
solving a certain linear program. 
For the revenue objective, the surprising logarithmic bound on
the number of signals does not carry over: 
we show that $\Omega(n)$ signals are needed for
a constant factor approximation to the revenue of a fully informed seller.
From a computational perspective, however, the problem gets easier:
we show that a simple dynamic program computes the 
signaling scheme with \NUMSIG signals maximizing the seller's revenue.

Observing that the signaling problem in bilateral trade is a
special case of the fundamental \emph{Bayesian Persuasion} model of
Kamenica and Gentzkow, we also examine the question of
communication-constrained signaling more generally. 
Specifically, in this model there is a sender (the mediator), a
receiver (the seller) looking to take an action (setting the price),
and a state of nature (the buyer's type) drawn from a common prior. 
The state of nature encodes both the receiver's utility and the
sender's objective as a function of the receiver's action.
Our results for bilateral trade with the revenue objective imply that
limiting communication to \NUMSIG signals can scale the sender's utility
by a factor of $O(\frac{\NUMSIG}{n})$ in general, where $n$ denotes the number
of states of nature.
We also show that our positive algorithmic results for bilateral trade do not extend
to communication-constrained signaling in the Bayesian Persuasion model. 
Specifically, we show that it is NP-hard to approximate the optimal
sender's utility to within any constant factor in the presence of
communication constraints.

%


\end{abstract}

\section{Introduction} \label{sec:introduction}

Strategic interactions are often rife with uncertainty and information
asymmetry. Auctions and markets on the Internet feature sellers with
privileged information regarding their products, and buyers with
private information regarding their willingness to pay. 
The \emph{information structures} which govern these interactions play
a key role in determining the equilibria of the resulting games.
In Akerlof's \emph{market for lemons} \cite{Akerlof70}, 
for instance, information asymmetries between the buyers and sellers of
used cars --- buyers cannot distinguish good cars from bad 
whereas sellers can --- can lead to the collapse of the entire market. 
In other cases, information asymmetries can serve a useful purpose; 
for example, a seller of advertising impressions may conflate
different impressions in order to prevent ``cherry picking'' by
advertisers, increase competition, and as a 
result increase her\footnote{For clarity, we will throughout
  use female pronouns for the seller and male pronouns for the buyer.}
revenue \cite{levinmilgrom}. 
It is for these reasons that information structure design, also known
as \emph{signaling}, is emerging as a new \emph{mechanism design for
  information}. This new frontier, like traditional mechanism design,
raises deep algorithmic and complexity-theoretic questions whose
exploration has only recently begun 
(see, e.g., \cite{emek:feldman:gamzu:paes-leme:tennenholtz,bro-miltersen:sheffet:mixed,Guo13,dughmi:immorlica:roth:auction,dughmi:signaling-hardness,DX16}).


Perhaps one of the most fundamental economic interactions
governed by the presence or absence of information is 
\emph{bilateral trade} between two parties \cite{mas-collel:whinston:green}. 
In (a simplified form of) bilateral trade, 
one side can choose whether to participate in the trade, 
and by doing so would generate a social surplus which is
private information to him.
The other side can propose to take a fixed amount of the social surplus.
Two particularly natural instantiations of bilateral trade are the
following: 
\begin{enumerate}
\item Trade of an item between a seller (who has no value for the item) 
and a buyer via a posted price. 
The seller's posted price is the amount of surplus she proposes to take, 
while the buyer's valuation for the item, 
drawn from a commonly known distribution, 
is the amount of social surplus the trade would generate. 
The buyer chooses whether to accept the seller's posted price.
We call the resulting game the \emph{pricing game}.
\item Trade between an employer and an agent:\footnote{In the
literature, this falls into the class of \emph{principal-agent} models. 
However, in this article, we use the word ``principal'' for a
different role, so we will use the non-standard nomenclature to avoid
misunderstandings.}
an employer would like to hire an agent to complete a project, 
and has (known) utility $u$ for its completion.
The agent has a private cost $c$ (drawn from a known distribution) for
completing the project, 
and the social surplus generated is the difference $u-c$.
Without knowing the cost of the agent, the employer posts a
proposed payment $p$, which is equivalent to posting the share $u-p$
of the social welfare which she proposes to keep.
We call this game the \emph{employment game}.
\end{enumerate}

The buyer/agent is assumed to be rational with quasilinear utility, 
while the seller/employer aims to maximize her own utility.
The buyer/agent accepts an offer if he would derive non-negative
utility from it, and rejects it otherwise; this corresponds to the
valuation exceeding the price in the pricing game, and the payment
exceeding the cost in the employment game.
For concreteness, we will state all of our results in the language of
the pricing game; however, all results carry over verbatim to the
employment game, and we will occasionally remark on the interpretation
of results in this context.

The seller's chosen offer price to the buyer depends on what she knows
about the buyer's value.
If the seller has no information other than the distribution
\DIST of values in the population of potential buyers, 
she chooses the price $\PRICE^* = \PRICE^*(\DIST)$ maximizing her revenue 
$\Rev(\PRICE,\DIST) = \PRICE \cdot \Prob[\VAL \sim \DIST]{\VAL \geq \PRICE}$. 
This leads to a revenue of
$\Rev(\DIST) = \Rev(\PRICE^*(\DIST),\DIST) = \PRICE^* \cdot \Prob[\VAL \sim \DIST]{\VAL \geq \PRICE^*}$
for the seller,
and a social welfare of 
$\Welf(\DIST) = \Prob[\DIST]{\VAL \geq \PRICE^*} 
          \cdot \ExpectC[\VAL \sim \DIST]{\VAL}{\VAL \geq \PRICE^*}$ 
for both players combined. 
At the other extreme is the case when the seller is fully
informed about $\VAL$. She can now set a price $\PRICE = \VAL$; 
trade will always occur, leading to a maximum social welfare of 
$\Expect[\VAL \sim \DIST]{\VAL}$, which is fully extracted as
revenue by the seller.
Notice the difference caused by different amounts of information being
communicated: with no information, the social welfare can be
arbitrarily smaller than with full information.


In order to fully inform the seller, very fine-grained information had
to be communicated. In reality, for practical and logistical
reasons, the information received by the seller
about the buyer is typically limited. 
For example, in the employment game, the information
may be provided by a university or certification agency, which 
may initially only be able to communicate a coarse-grained
rating of the agent via a GPA or the performance on a certification
test.
When the ``agent'' provides a product (e.g., a piece of
  clothing or medication), the location or label it is sold under
  (boutique/brand-name vs.~discount/generic) sends a coarse signal to
  the potential employer/buyer about the distribution of qualities she
  is to expect.
Indeed, the study of communication constraints and their
impact on the outcomes of games dates back at least to the work of
\cite{blumrosen2006implementation,blumrosen2007auctions} on
communication constraints in auctions.

\emph{How to optimally inform the seller with limited communication is
the subject of the present article.}\footnote{
We note that an alternate interpretation of this goal is in terms of
\emph{market segmentation}. 
Specifically, \emph{how would a market designer optimally partition
the market into a limited number of segments?}}
This question actually comprises two separate thrusts:
(1) What is the inherent \emph{price of limited communication}?
In other words, how much social welfare is lost
because the principal can only communicate limited information to the seller?
A particularly stark version of this question is the following:
in the presence of a self-interested seller, how much social welfare
can a principal salvage by sending a single bit of information,
which is equivalent to merely being able to \emph{exclude} some buyers from
the market? 
(2) What are the \emph{algorithmic} consequences of limited communication? 
How well can the principal optimize the welfare with limited
communication, if the computation has to be efficient as well?



In order to formalize the notion of limited communication, 
we first define information structures.
At its most general, an \emph{information structure} for the seller
is a (possibly randomized) map \SigScheme[I] from the realized value 
$\VAL \in \RR$ of the buyer to a signal $\SIG \in \ALLSIGSET$ presented
to the seller. 
This in effect partitions the probability histogram of \DIST
--- or equivalently, the population of buyers --- into different
segments, with each corresponding to a signal. 
Specifically, we write
$\CSigP{v}{\SIG} = \ProbC{\RMap{v} = \SIG}{\VAL = v}$, 
where the randomness is over the internal coins of \SigScheme[I].
Receiving signal \SIG induces, via Bayes' rule, a posterior
distribution \PostD[\SIG] for the seller, characterized by
$\Prob[{\PostD[\SIG]}]{\VAL = v} 
= \frac{\Prob[\DIST]{\VAL = v} \cdot \CSigP{v}{\SIG}}{\Prob{\RMap{\VAL} = \SIG}}$;
here, the randomness in the denominator is over both \DIST and the
internal coins of \SigScheme[I].
Upon receiving \SIG, the seller's optimal price is $\PRICE^*(\PostD[\SIG])$
maximizing $\Rev(\PRICE,\PostD[\SIG])$.
This price induces a revenue of $\Rev(\PostD[\SIG])$ for the seller
and a social welfare of $\Welf(\PostD[\SIG])$. 
The expected revenue 
and social welfare over all draws of the buyer's value and randomness
in the scheme \SigScheme[I] are then given by 
$\Rev(\SigScheme[I],\DIST) 
= \sum_{\SIG \in \ALLSIGSET} \Prob{\RMap{\VAL} = \SIG} \cdot \Rev(\PostD[\SIG])$  
and $\Welf(\SigScheme[I],\DIST) 
= \sum_{\SIG \in \ALLSIGSET} \Prob{\RMap{\VAL} = \SIG} \cdot \Welf(\PostD[\SIG])$,
respectively. 

We now revisit our motivating examples.
Suppose that the distribution \DIST has support 
$\SET{\Val{1}, \Val{2}, \ldots, \Val{\NUMVAL}}$.
Then, \VAL can be precisely communicated to the seller by choosing
a signal set $\ALLSIGSET = \SET{1, \ldots, \NUMVAL}$
and setting $\CSigP{\Val{i}}{i} = 1$ (and $\CSigP{\Val{i}}{j} = 0$ for
$i \neq j$).
By way of contrast, the signaling scheme communicating no information
to the seller is implemented with $\ALLSIGSET = \SET{1}$
and $\CSigP{\Val{i}}{1} = 1$ for all $i$.
Notice that the latter uses much lower ``communication complexity,''
as measured by $\SetCard{\ALLSIGSET}$. 
Also notice that both these extreme information structures are inherently
algorithmically efficient:
given an explicit representation of \DIST and a value \VAL, 
it is trivial to compute \RMap{\VAL}. 
For the design of signaling schemes, we use the constraint that
$\SetCard{\ALLSIGSET} = \NUMSIG$, for a given bound \NUMSIG,
as the ``low communication complexity'' constraint.

We would like to understand the impact of the communication
complexity on the social welfare  that can be (in principle) achieved, 
and to analyze the algorithmic question of computing optimal signaling 
schemes \SigScheme[I] with limited communication complexity.
The ``gold standard'' for social welfare is 
$\Expect[\VAL \sim \DIST]{\VAL}$, which we will call the
 \FullInfWel.

Formally, we are interested in the following questions: 
Given an explicit representation of the distribution \DIST
and a bound \NUMSIG on the number of available signals, 
(1) What fraction of the \TotWel can be obtained by a signaling scheme
with \NUMSIG signals?
(2) Are there (approximately) optimal and computationally efficient
signaling schemes for maximizing social welfare?

\subsection*{Our Results}

Our first main result (proved in Section~\ref{sec:single-segment})
shows that signaling schemes even with an extremely limited number of
signals are surprisingly powerful in extracting welfare.

\begin{theorem} \label{thm:single-signal-welfare}
For any distribution \DIST with support size $n$, 
there is a single segment of the market 
(i.e., a non-negative vector indexed by buyer types that is
  pointwise upper-bounded by the prior over buyer types)
with social welfare at least a $\Theta(1/\log n)$ 
fraction of the \TotWel $\Expect[\VAL \sim \DIST]{\VAL}$.
The $\Theta(1/\log n)$ bound is tight.
\end{theorem}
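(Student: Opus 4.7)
The plan is to establish both the positive direction (existence of a segment achieving $\Omega(W/\log n)$) and the tightness (some distribution on which no segment beats $O(W/\log n)$), where $W = \Expect[\VAL \sim \DIST]{\VAL}$ denotes the full-information welfare.

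For the positive direction, I will use a bucketing argument together with a careful case split. Sort the types so that $\Val{1} < \Val{2} < \cdots < \Val{\NUMVAL}$, and let $M = \max_i \PVal{i} \Val{i}$ be the largest single-type welfare contribution. In the easy case $M \geq W/(C\log \NUMVAL)$ for a suitable constant $C$, the one-type segment $x_i = \PVal{i}\cdot\one[i = i^*]$ (where $i^* = \arg\max_i \PVal{i}\Val{i}$) already delivers welfare $M = \Omega(W/\log \NUMVAL)$ because the seller's optimal posted price is exactly $\Val{i^*}$, so that revenue equals welfare. In the harder case $M < W/(C\log \NUMVAL)$, the idea is to bucket types by the product $\PVal{i}\Val{i}$: for $k = 0, 1, \ldots, \lceil\log_2 \NUMVAL\rceil$, let $C_k = \{i : \PVal{i}\Val{i} \in (M/2^{k+1}, M/2^k]\}$. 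Types with $\PVal{i}\Val{i} \leq M/\NUMVAL$ (the ``tail'') contribute at most $\NUMVAL \cdot (M/\NUMVAL) = M < W/(C\log \NUMVAL)$ in aggregate, so the non-tail welfare is at least $W/2$. Since there are only $O(\log \NUMVAL)$ non-tail buckets, pigeonholing yields some $k^*$ with $\sum_{i \in C_{k^*}} \PVal{i}\Val{i} \geq \Omega(W/\log \NUMVAL)$.

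The next step, and the main technical obstacle, is to show that the segment $x_i = \PVal{i}\cdot\one[i \in C_{k^*}]$ (slightly perturbed) actually realizes welfare close to the bucket's total $\sum_{i \in C_{k^*}} \PVal{i}\Val{i}$. Within $C_{k^*}$ the quantities $\PVal{i}\Val{i}$ lie in a factor-$2$ window, so the bucket has an approximate equal-revenue structure; the seller's revenue curve $\PRICE \mapsto \PRICE\sum_{i \in C_{k^*}, \Val{i} \geq \PRICE}\PVal{i}$ is approximately constant across the values in the bucket. The subtlety is that a worst-case adversarial tiebreaking could still push the seller toward the largest $\Val{i}$ in the bucket, trivializing the welfare. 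I will handle this by perturbing $x_i$ slightly downward on high-value types (retaining $x_i \leq \PVal{i}$) to force the seller's revenue-maximizing price to be the smallest value in $C_{k^*}$, at which point all of $\sum_{i \in C_{k^*}} x_i \Val{i}$ is captured as welfare. The perturbation can be taken arbitrarily small, losing only a $(1-\epsilon)$ factor.

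For tightness, I will construct a distribution on $\NUMVAL$ types spread across $\Theta(\log \NUMVAL)$ geometric value scales with an equal-revenue structure within and across scales, chosen so that every scale contributes the same $\Theta(W/\log \NUMVAL)$ to full-information welfare. The key property is that for any segment $\vc{x} \leq \vc{p}$, the seller's optimal posted price concentrates extractable welfare onto a single scale, bounding the welfare by $O(W/\log \NUMVAL)$; verifying this for every possible choice of $\vc{x}$ is the hardest part of the proof and will require arguing that no interleaving of mass across scales can beat the single-scale benchmark, using the fact that revenue maximization truncates the extractable portion of any multi-scale segment from below.
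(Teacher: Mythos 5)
Your high-level bucketing strategy by $\PVal{i}\Val{i}$ matches the paper's Lemma~\ref{lem:power-2-single}, and the pigeonhole argument for the existence of a rich bucket is fine. However, the crucial step — turning the chosen bucket $C_{k^*}$ into a segment whose \emph{served} welfare is $\Omega(1)$ of the bucket's \emph{total} welfare — is where your argument breaks down. Your justification rests on two claims that are both false in general: (i) that within a bucket the seller's revenue curve $\PRICE \mapsto \PRICE \sum_{i \in C_{k^*}, \Val{i} \geq \PRICE} \PVal{i}$ is ``approximately constant across the values in the bucket,'' so only tiebreaking is at issue; and (ii) that an ``arbitrarily small'' downward perturbation on high-value types forces the seller to the bucket's minimum value. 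To see (i) fail, take $\Val{i_j} = 3^j$ for $j = 1, \ldots, m$ with $\PVal{i_j} = u \cdot 3^{-j}$ for $j < m$ and $\PVal{i_m} = 2u \cdot 3^{-m}$ (all $\PVal{i_j}\Val{i_j} \in (u, 2u]$, so one bucket): the revenue at price $\Val{i_\ell}$ is $u\bigl(3/2 + \tfrac{1}{2} 3^{\ell-m}\bigr)$ for $\ell < m$ and $2u$ at $\ell = m$, which is \emph{strictly increasing} in $\ell$, not constant. With the full masses $x_i = \PVal{i}$ the seller picks $\ell = m$, serving only $2u$ out of $(m+1)u$ of bucket welfare — a $\Theta(1/m)$ fraction, not $\Omega(1)$. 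To see (ii) fail in the same example, a computation shows you must shave at least a factor $1/4$ off $x_{i_m}$ (not $\epsilon$) before any lower price becomes competitive. So ``small perturbation, losing only $(1-\epsilon)$'' is not available, and you would need to argue that a \emph{constant}-factor perturbation always suffices and always leaves a constant fraction of bucket welfare — a genuinely different and harder claim that you have not established.

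The paper sidesteps this entirely by first \emph{rounding values down to powers of two} (losing only a factor of 2 in welfare), which guarantees that distinct values within any bucket form a geometric sequence of ratio $\geq 2$ and hence that tail sums $\sum_{k \geq i} 1/\Val{k}$ are dominated by the leading term. It then uses fractional masses — $\MP{i^*} = u/\Val{i^*}$ at the bucket's lowest index and $\MP{i} = u/(2\Val{i})$ at all others — precisely calibrated so that the revenue at $i^*$ is $\geq u$ while the revenue at any higher $i$ is $\leq u$, making the lowest price provably optimal. Your proposal omits both the power-of-two reduction and the careful mass calibration, and your constructed segment does not have the key property you assert. Finally, your tightness sketch (``geometric scales with an equal-revenue structure within and across scales,'' plus a claim that ``no interleaving can beat the single-scale benchmark'') names the right intuition but gives no concrete distribution or argument; the paper exhibits the explicit equal-welfare distribution $\Val{i} = 2^i$, $\PVal{i} = \frac{1}{2^i(n-i)}$ and verifies the bound by a short telescoping-sum computation using the revenue-optimality of the chosen price.
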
%

Notice that this result is quite surprising.
There are value distributions (such as equal-revenue distributions)
under which the presence of a self-interested seller results in only a
fraction $1/n$ of the \TotWel being realized.
The theorem states that merely by \emph{excluding} some buyers,
the principal can improve this bound to $\Theta(1/\log n)$.
Perhaps a ``natural'' conjecture would have been that using \NUMSIG
segments, at most a fraction $O(\NUMSIG/n)$ of welfare could be
attained in the worst case. 
The theorem and the subsequent corollary show that this conjecture is
false. However, as we will see shortly, it is in fact true for the
objective of maximizing the seller's \emph{revenue}.

Applying Theorem \ref{thm:single-signal-welfare} repeatedly yields
Corollary \ref{cor:logn-signal-welfare}, which 
shows that with a relatively small number of signals, we can
get arbitrarily close to the \TotWel.

\begin{corollary} \label{cor:logn-signal-welfare}
For any $\eps >0$ and any distribution \DIST with support size $n$, 
there is a signaling scheme with $O(\log n \log(1/\eps))$ signals
which obtains a $(1-\eps)$ fraction of the \FullInfWel.
\end{corollary}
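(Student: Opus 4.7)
My plan is to apply Theorem~\ref{thm:single-signal-welfare} iteratively. Let $\alpha = \Theta(1/\log n)$ be the constant from that theorem, let $W := \Expect[\VAL \sim \DIST]{\VAL}$, and initialize the residual market mass to the full prior $\vc{r}^{(1)} = \vc{p}$. At step $t = 1, 2, \ldots, T$, I apply Theorem~\ref{thm:single-signal-welfare} to the residual $\vc{r}^{(t)}$, viewed (after rescaling) as a probability distribution over at most $n$ types, to obtain a segment $\vc{q}^{(t)} \leq \vc{r}^{(t)}$ whose welfare $w_t$ satisfies $w_t \geq \alpha R_t$, where $R_t := \sum_i r_i^{(t)} v_i$ is the full-information welfare of the current residual. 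Setting $\vc{r}^{(t+1)} := \vc{r}^{(t)} - \vc{q}^{(t)}$, and using that $\sum_i q_i^{(t)} v_i \geq w_t \geq \alpha R_t$, the residual full-information welfare shrinks geometrically: $R_{t+1} \leq (1-\alpha) R_t$.

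By induction, $R_{T+1} \leq (1-\alpha)^T W$. Choosing $T = \lceil \log(1/\eps) / \log\frac{1}{1-\alpha} \rceil = O(\log n \log(1/\eps))$ guarantees $R_{T+1} \leq \eps W$. The final scheme uses $T+1 = O(\log n \log(1/\eps))$ signals: one per extracted segment $\vc{q}^{(t)}$, plus a terminal signal carrying the residual $\vc{r}^{(T+1)}$.

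The main obstacle will be converting the geometric decay of $R_t$ into a lower bound on the scheme's realized welfare $\sum_{t=1}^{T+1} w_t$, rather than merely on the extracted full-information value $\sum_{t=1}^T \sum_i q_i^{(t)} v_i = W - R_{T+1}$. Within an extracted segment, the seller's revenue-optimal pricing may exclude some low-value types, so in principle $w_t < \sum_i q_i^{(t)} v_i$, and the telescoping identity $\sum_t w_t = R_1 - R_{T+1}$ need not hold. To close this gap, I would revisit the construction behind Theorem~\ref{thm:single-signal-welfare} and argue that the segment it produces can always be taken so that the seller's revenue-maximizing price serves every type in the segment's support --- intuitively, by restricting the segment to a sufficiently thin value interval, the seller is pinned to price near the interval's minimum and no type is left unserved. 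With this strengthening, $w_t = \sum_i q_i^{(t)} v_i$ for each $t$, the telescoping becomes exact, and the total scheme welfare is at least $R_1 - R_{T+1} \geq (1-\eps) W$.
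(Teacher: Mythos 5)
Your plan matches the paper's proof: both greedily iterate Theorem~\ref{thm:single-signal-welfare} on the residual market, extracting a segment with served welfare at least an $\alpha = \Omega(1/\log n)$ fraction of the residual full-information welfare, and observe that $O(\log n \log(1/\eps))$ iterations drive the residual below $\eps W$. The gap you flag --- that the served welfare $w_t$ may be strictly smaller than the extracted full-information value $\sum_i q_i^{(t)} v_i$, so the telescoping $\sum_t w_t = W - R_{T+1}$ need not hold --- is genuine: the paper's one-line argument tacitly relies on this identity, and the segment constructed inside the proof of Theorem~\ref{thm:single-signal-welfare} (the vector $\MPV[']$) is in fact not fully served (only a constant fraction of the mass in $[2^{i^*}, 2^{i^*+1})$ is served). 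Your proposed repair is the right idea, though the ``restrict to a thin interval'' heuristic is more than is needed: given any segment $\vc{q}$ whose seller-optimal price is $\Val{p}$, simply zero out $q_i$ for $i < p$. The truncated segment has the same tail sums at every index $\geq p$, so $\Val{p}$ remains revenue-optimal, the served welfare is unchanged, and the truncated segment is now fully served. Since the greedy step can always pick such a truncated (hence fully served) segment without losing served welfare, one gets $w_t = \sum_i q_i^{(t)} v_i$ exactly, the telescoping becomes an identity, and the geometric-decay bound $\sum_{t\le T} w_t \geq (1-(1-\alpha)^T)W \geq (1-\eps)W$ follows. So your proof is correct, follows the paper's route, and supplies a step the paper leaves implicit.
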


In terms of the number of bits of communication required,
Corollary~\ref{cor:logn-signal-welfare} implies that communicating 
$O(\log \log n + \log \log(1/\eps))$ bits extracts a $(1-\eps)$
fraction of the social welfare that could be extracted using $\log n$ bits.

Corollary~\ref{cor:logn-signal-welfare} also has the following
algorithmic implication: exhaustively searching over all 
signaling schemes with $\Theta(\log n \cdot \log (1/\epsilon))$
signals, one obtains a QPTAS (quasi-polynomial time approximation scheme). 

Obtaining a truly polynomial-time algorithm appears quite a
bit more challenging, although --- as discussed in
Section~\ref{sec:conclusion} --- we currently do not have any
hardness results, even for exact optimization. 
Our main technical result is Theorem~\ref{thm:welfare-intro}, 
which shows that one can obtain a constant factor approximation to the
social welfare in polynomial time.

\begin{theorem}\label{thm:welfare-intro}
For any $\NUMSIG>1$, there is a polynomial-time 
$\frac{\NUMSIG-1}{\NUMSIG} \cdot (1-1/e)$ approximation algorithm 
for the problem of implementing a welfare-maximizing signaling scheme
with at most \NUMSIG signals, given an explicit representation of \DIST.
\end{theorem}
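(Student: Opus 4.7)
My plan is to prove Theorem~\ref{thm:welfare-intro} in two stages: an averaging step that reserves one of the \NUMSIG signals as a catch-all, followed by a submodular maximization argument for the remaining $\NUMSIG-1$ segments.

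\textbf{Averaging reduction.} Let \OPT denote the optimal \NUMSIG-signal welfare, attained by a scheme whose segments have welfares $W_1\ge W_2\ge\cdots\ge W_\NUMSIG$. By averaging, $W_\NUMSIG\le\OPT/\NUMSIG$, so segments $1,\ldots,\NUMSIG-1$ of this scheme, taken in isolation on their original masses, contribute at least $\sum_{j<\NUMSIG} W_j\ge\frac{\NUMSIG-1}{\NUMSIG}\OPT$. Observing further that any two segments sharing the same intended posted price may be merged without losing welfare (welfare is linear in mass and the incentive inequalities aggregate), we may assume these $\NUMSIG-1$ segments correspond to $\NUMSIG-1$ distinct prices. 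Hence it suffices to find, within a $(1-1/e)$ factor, the best welfare attainable by a scheme whose segments use some subset $T\subseteq\{\Val{1},\ldots,\Val{\NUMVAL}\}$ of at most $\NUMSIG-1$ distinct posted prices; the $\NUMSIG$-th signal then simply absorbs the residual mass at no cost to the lower bound.

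\textbf{A structural LP with a greedy optimum.} For any fixed price set $T=\{v_{k_1},\ldots,v_{k_t}\}$ with $t\le\NUMSIG-1$, consider the LP that assigns nonnegative mass vectors $y^{(j)}$ to segments $j=1,\ldots,t$ so as to maximize $\sum_j \sum_{i\ge k_j} v_i\, y^{(j)}_i$ subject to mass conservation $\sum_j y^{(j)}\le\vc{p}$ and the incentive constraints forcing $v_{k_j}$ to be the seller-optimal posted price for the posterior induced by $y^{(j)}$. The technical heart of the proof is a structural lemma asserting that this LP admits a \emph{greedy} optimum: processing segments in a canonical price-sorted order and, for each, filling in unused mass of types in the order that maximizes marginal welfare while preserving the incentive constraint, yields an LP optimum.

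\textbf{Submodularity and greedy.} Define $f(T)$ to be the LP optimum for the price set $T$. Using the greedy optimality of the inner LP, I would establish that $f$ is monotone submodular on the ground set $\{\Val{1},\ldots,\Val{\NUMVAL}\}$ by coupling the greedy executions on $T$ and on $T\cup\{v\}$: the marginal welfare from adding a price $v$ is bounded by the residual mass still available to a segment at price $v$ under the greedy schedule, and this residual can only shrink as $T$ grows, giving the diminishing-returns property. The standard greedy of Nemhauser--Wolsey--Fisher for monotone submodular maximization under a cardinality-$(\NUMSIG-1)$ constraint then produces in polynomial time a set $\hat T$ with $f(\hat T)\ge(1-1/e)\max_T f(T)$, and composing with the averaging reduction yields the stated $\frac{\NUMSIG-1}{\NUMSIG}(1-1/e)$ approximation. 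All steps run in polynomial time since the ground set has size \NUMVAL and each inner LP is of polynomial size.

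\textbf{Main obstacle.} The crux lies in the pair of structural claims of the last two steps: proving that the inner LP is solved greedily despite the nontrivial seller-incentive constraints, and then coupling the greedy executions on nested price subsets to extract submodularity of $f$. Everything else---the averaging argument, the merging of same-price segments, the polynomial-size ground set, and the invocation of Nemhauser--Wolsey--Fisher---is routine once the greedy LP structure is in hand.
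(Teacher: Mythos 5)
Your high-level plan matches the paper's own: reserve one signal as a catch-all to lose only a $\frac{\NUMSIG-1}{\NUMSIG}$ factor (the paper calls this ``sanitized welfare,'' Proposition~\ref{prop:welfare-sanitized}); for a fixed price set, show the inner LP~\eqref{lp:welfare-maximization} is solved by a greedy construction (Theorem~\ref{thm:greedy-optimal}, with signals built in \emph{decreasing} price order); establish that the resulting value $\PWelfare{S}$ is a monotone submodular function of the price set (Theorem~\ref{thm:submodularity}); invoke Nemhauser--Wolsey--Fisher. So the route is the paper's route, not a different one.

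However, your sketch of the submodularity step has a genuine gap. You argue that ``the marginal welfare from adding a price $v$ is bounded by the residual mass still available to a segment at price $v$ under the greedy schedule, and this residual can only shrink as $T$ grows.'' This accounts only for the \emph{direct} gain from the newly inserted segment. But inserting a segment at price $k$ into the greedy schedule also siphons probability mass away from every subsequently constructed segment with a lower price point, reducing the welfare served by those later segments. So the true marginal value $f(T\cup\{k\})-f(T)$ is the direct gain \emph{minus} these downstream losses, and the shrinkage of residual mass under the larger set $T$ shrinks both quantities, not just the direct gain; the diminishing-returns inequality does not follow from a one-sided bound. The paper explicitly flags this as the obstruction right after stating Theorem~\ref{thm:submodularity}: the ``downstream'' effects on later signals ``need to be carefully balanced against the gains from the signal with price point $k$.'' Their actual proof (Lemma~\ref{lem:small-increase-welfare} plus Lemmas~\ref{lem:pointwise-dominance}--\ref{lem:prefix-sums-imply-combination}) handles this by gradually raising a budget $B$ on the new signal's tail mass in infinitesimal steps, at each step using seller-indifference (tight revenue constraints) to track exactly how mass redistributes across all later segments, and coupling the $S$-run and $T$-run via nested prefix-sum dominance. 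None of that structure is visible in your sketch, and without it the claimed diminishing-returns property is unsupported. Everything else in your proposal (the averaging reduction, merging same-price segments, greedy optimality of the inner LP, and the polynomial-size ground set) is on target, but the submodularity lemma is precisely where the real work lies and your intuition for it is not strong enough to carry the proof.
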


The proof of this theorem is quite involved.
At the heart of it is a proof that the social welfare achieved from a set of
signals is submodular.
More precisely, the proof focuses on the social welfare accrued from
all but one signal 
(which we denote by \GSIG and call the \emph{garbage signal} 
--- it is a signal of buyer types for which our solution will
  not be credited with any reward).
Each non-garbage signal \SIG induces an optimal equilibrium price
$\PRICE^*(\PostD[\SIG])$ chosen by the seller.
We can think of the computation of \SigScheme[I] as first choosing the
set $S = \SET{\PRICE_1, \PRICE_2, \ldots, \PRICE_{\NUMSIG-1}}$ of equilibrium
prices of the optimum solution, and then choosing an optimal signaling
scheme inducing these particular prices.
The key insight is that the optimum social welfare with price set $S$
(and one garbage signal) is a monotone and submodular function of $S$.
This fact is proved as Theorem~\ref{thm:submodularity} in
Section~\ref{sec:submodularity}.
The proof relies heavily on a characterization of the optimal
signaling scheme inducing the set $S$ of prices.
While it is easy to see that, given the target prices
$S = \SET{\PRICE_1, \PRICE_2, \ldots, \PRICE_{\NUMSIG-1}}$,
this optimal signaling scheme can be
computed using a linear program (see Section~\ref{sec:preliminaries}),
we require a better characterization, and thereto show 
(as Theorem~\ref{thm:greedy-optimal} in Section~\ref{sec:greedy}) 
that it is also the output of a greedy algorithm.

\subsubsection*{Optimizing Seller Revenue}

The reader may have noticed our focus on social welfare.
Almost equally frequently studied is the objective of maximizing
seller revenue.
It turns out that results for seller revenue are much more
straightforward, both technically and in terms of their implications.
First, the communication constraint can severely curtail the seller's
revenue: when the buyer's value is drawn from an
equal-revenue distribution supported on a geometric
progression of length $n$, the revenue-maximizing signaling scheme
with \NUMSIG signals recovers only an $O(\NUMSIG/n)$ fraction of
the full-information revenue.
On the other hand, \emph{computing} the optimal signaling scheme for
revenue maximization is straightforward (see Section~\ref{sec:seller-revenue}):

\begin{theorem}\label{thm:revenue-intro}
The optimal signaling scheme for maximizing seller revenue
  groups buyers into \NUMSIG contiguous segments by valuations, i.e., if
  the same signal \SIG is sent for buyers with valuations $v < v'$, 
  then \SIG is also sent for all buyers with valuations $v'' \in [v,v']$.
As a result, there exists a polynomial-time dynamic
programming algorithm which, given an explicit representation of \DIST
and a bound \NUMSIG on the number of signals, 
computes a signaling scheme maximizing the seller's expected revenue.
\end{theorem}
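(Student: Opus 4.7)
\emph{Interval structure.} My plan for the structural claim is an exchange argument. Fix any optimal scheme with induced prices $P_1, \ldots, P_\NUMSIG$ (taken distinct after merging identical-price signals). I first preprocess: any mass at a value $v < P_i$ sent to signal $i$ contributes zero to signal $i$'s revenue at price $P_i$, so moving such mass to any other signal preserves signal $i$'s revenue at the still-feasible price $P_i$ and, by monotonicity of the single-segment Myerson revenue $R(S) := \max_P P \cdot \Pr[V \in S,\, V \geq P]$, weakly increases the recipient's revenue. After preprocessing, every value supported on signal $i$ satisfies $v \geq P_i$. Next I uncross inversions: if $P_i < P_{i'}$ and there exist values $v < v'$ with $v$ supported on signal $i'$ and $v'$ supported on signal $i$, preprocessing gives $v \geq P_{i'} > P_i$, so both values exceed both prices. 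A fractional swap moving $\delta$ mass of $v$ from $i'$ to $i$ and simultaneously $\delta$ mass of $v'$ from $i$ to $i'$ leaves each signal's revenue at the original prices unchanged, so re-optimizing the posterior prices can only weakly improve total revenue. Taking $\delta$ maximal removes one inversion; iterating produces a scheme whose signal supports are contiguous intervals, and a final cleanup at boundary values yields a pure interval partition.

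\emph{Dynamic program.} Sort the support of $\DIST$ as $v_1 < v_2 < \cdots < v_n$ with probabilities $p_\ell$, and let $f(j, k)$ denote the maximum revenue achievable by partitioning $\{v_1, \ldots, v_j\}$ into $k$ contiguous intervals. The recurrence is
\[
f(j, k) \;=\; \max_{0 \le j' < j}\, \Bigl( f(j', k-1) + R\bigl(\{v_{j'+1}, \ldots, v_j\}\bigr) \Bigr),
\]
with base case $f(0, 0) = 0$, where $R(\{v_a, \ldots, v_b\}) := \max_{a \le \ell \le b} v_\ell \sum_{\ell' = \ell}^{b} p_{\ell'}$ is the optimal single-segment Myerson revenue. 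Precomputing tail sums $\sum_{\ell' \geq \ell} p_{\ell'}$ makes each $R$ evaluation $O(n)$, so the DP fills a table of size $O(n\NUMSIG)$ in polynomial time overall.

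\emph{Main obstacle.} The preprocessing step is essential to the exchange argument. Without it, a naive value swap can strictly decrease revenue---for instance, when $v < P_{i'} \leq v' < P_i$, value $v$ is already priced out of signal $i$ so moving it to $i'$ gains nothing there, while moving $v'$ to signal $i$ (where $v' < P_i$ is also priced out) destroys $v'$'s nonzero contribution to signal $i'$'s revenue. Removing priced-out mass first confines every subsequent swap to the regime where both involved values exceed both involved prices; in that regime the fractional swap is revenue-neutral at the original prices and hence weakly profitable after re-optimization.
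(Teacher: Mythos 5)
Your overall plan is sound and parallels the paper's Lemma~\ref{lem:intervals}: a preprocessing step that evicts ``priced-out'' mass (identical to the paper's first step), followed by an exchange argument and a standard interval DP. The fractional-swap calculation itself is correct---after preprocessing, both $v$ and $v'$ lie weakly above both $P_i$ and $P_{i'}$, so the swap is revenue-neutral at the original prices and weakly profitable after the seller re-optimizes---and your DP is fine (in fact slightly cleaner than the paper's, since you explicitly compute the best within-segment price $R(\{v_a,\dots,v_b\})$ rather than implicitly assuming the price is the segment's minimum value).

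The gap is the termination of the uncrossing loop. ``Taking $\delta$ maximal removes one inversion; iterating produces \dots'' is not justified: the swap adds mass of $v$ to signal $i$ and mass of $v'$ to signal $i'$, and either addition can \emph{create} inversions that were not present before. For example, if some $v'' > v$ sits on a signal $i''$ with $P_{i''} < P_i$, then once $v$ acquires positive mass on $i$, the pair $\big((v,i),(v'',i'')\big)$ is a fresh inversion. So ``number of inversions'' is not a potential, and it is not clear the process halts (and a priori the $\delta$'s could shrink). The paper sidesteps this entirely by inserting an intermediate step between your preprocessing and your uncrossing: it first proves that in an optimal scheme each type's \emph{entire} probability mass can be placed on a single signal (move all of type $i$'s mass to the highest-priced signal that already carries some of it; after preprocessing this is a weak improvement). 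Once the allocation is integral, there are only finitely many allocations, and the uncrossing move---reallocating a whole type from a lower-priced to a higher-priced signal---\emph{strictly} increases revenue whenever the two prices differ, so termination is immediate. You could instead patch your argument with a linear potential such as $\Phi = \sum_{\SIG,i} x_{i,\SIG}\, v_i \, P_{\SIG}$ (with $P_\SIG$ frozen at the original prices), each swap increasing $\Phi$ by $\delta(v'-v)(P_{i'}-P_i)>0$; but you would still need an argument that the increments don't vanish (e.g.\ appeal to an extreme point of the feasible polytope that maximizes $\Phi$ and show it has no inversions), or just adopt the paper's integralize-first step. The ``final cleanup at boundary values'' is also doing real work---your uncrossing only gives nested-or-ordered interval supports that may share a boundary type, and moving that shared type entirely to the higher-priced signal must be argued to weakly help---but that step is routine given the preprocessing; the termination of the main loop is the substantive hole.
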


In the employment game, Theorem~\ref{thm:revenue-intro} confirms
(for the purpose of employer utility maximization) the generally
agreed-upon form of grading or performance evaluation, wherein the
highest performers are grouped together in one category (`A'),
followed by the next highest category (`B'), etc.
Such signaling schemes are generally not optimal if the
goal is to maximize social welfare, and indeed, the optimum
signaling scheme for welfare maximization can be fairly complex.
As an example, consider a buyer distribution supported on types
$(1, 2, 3, 4)$ with probabilities
$(2/12, 1/12, 2/12, 7/12)$, respectively.
When $\NUMSIG=2$ signals are allowed, the optimal signaling scheme
obtains full welfare by sending signals with posterior unnormalized
probabilities of $(2/12, 1/12, 0, 1/12)$ and $(0, 0, 2/12, 6/12)$.


\subsubsection*{Bayesian Persuasion}

While our main focus is on bilateral trade, the
framework of communication-bounded signaling naturally applies to
other games as well. 
A more general setting has been termed \emph{Bayesian Persuasion} by
\citet{Kamenica11}: a sender observes a random variable
capturing the ``state of the world,'' and can send a signal to a
receiver. 
The receiver, based on the received signal, chooses an action. 
The utility of both the sender and the receiver depend on the state of
the world and the chosen action, and are not necessarily aligned. 
Thus, the sender's goal is to design the information structure
such that the receiver will choose actions which are in expectation
beneficial to him.

Notice that signaling in bilateral trade fits in this framework. 
In the pricing game, the seller is the receiver, the buyer's valuation
is the state of the world, and the sender is a market designer with
the goal of maximizing the seller's revenue or social welfare. 
In the employment game, the employer is the receiver, the
employee's cost is the state of the world, and the sender is an
educational institution or crowdsourcing 
website aiming to generate welfare for its participants.

Our results for revenue in the pricing game already imply that the
price of limited communication is high in some persuasion games. 
While it may be natural to conjecture that the submodularity property 
carries over from bilateral trade to more general
persuasion games, this is not the case:
we establish a strong hardness result for maximizing the
sender's utility in general.
(A more formal version of this theorem and the proof are given in
Section~\ref{sec:hardness}.)

\begin{theorem} \label{thm:hardness-intro}
For any constant $c > 0$, it is \NP-hard to construct a signaling scheme
approximating the maximum expected sender utility to within a factor
$c$, given an explicit representation of a Bayesian Persuasion game
and a bound \NUMSIG on the number of signals.
\end{theorem}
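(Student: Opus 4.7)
The plan is to establish the hardness of approximation via reduction from a combinatorial problem with known strong inapproximability. A natural route, and the one foreshadowed by the macros defined in the preamble, is to introduce the \HEGGFULL (\HEGG) as an intermediate problem and argue in two stages: (i) reduce \IS to \HEGG, inheriting Håstad-style inapproximability; (ii) reduce \HEGG to the communication-bounded Bayesian Persuasion problem in a gap-preserving manner.

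For stage (ii), given an \HEGG instance consisting of a hypergraph $H = (V,E)$ and budget $k$, construct a Bayesian Persuasion game \GAME as follows. Let the state space \NATURE coincide with the hyperedge set $E$ under the uniform prior \PNatV; let the receiver's action set be $\RECACT = V \cup \{\perp\}$, where $\perp$ is an ``abstain'' action; and set the signal budget to $\NUMSIG = k$. Design $\ANUTIL{R}(v,e)$ to yield $1$ when $v \in e$ and $-L$ when $v \notin e$ for a large penalty $L$, and let $\ANUTIL{R}(\perp,e)$ be a small positive constant. With $L$ sufficiently large, the receiver strictly prefers a vertex $v$ over $\perp$ only when the posterior \PostD places all its mass on hyperedges containing $v$. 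Define $\ANUTIL{S}(v,e) = 1$ whenever $v \in V$ and $\ANUTIL{S}(\perp, e) = 0$. Under this design, every signaling scheme with \NUMSIG signals corresponds (up to the receiver's tie-breaking) to a partition of $E$ into \NUMSIG groups together with a ``covering vertex'' per group, and the sender's expected utility equals exactly the fraction of hyperedges successfully covered — the \HEGG value.

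For stage (i), the goal is a gap-preserving reduction from \IS to \HEGG. Given a graph $G = (V_G, E_G)$, build a hypergraph whose hyperedges encode the adjacency constraints of $G$, so that a set of $k$ vertices covering many hyperedges corresponds to an independent set of roughly proportional size, and vice versa. Composing the linear gap-preservation of this construction with Håstad's $n^{1-\eps}$-inapproximability of \IS yields that \HEGG is NP-hard to approximate within any constant factor $c > 0$. Transporting this gap through stage (ii) then yields the theorem.

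The main obstacle lies in ensuring that stage (ii) is truly gap-preserving, which requires ruling out randomized signaling strategies that circumvent the hardness of the underlying combinatorial problem. In particular, one must verify that no scheme benefits from sending a signal whose posterior mixes hyperedges with no common vertex: in that case the receiver's best response is $\perp$ (by choice of $L$), and the sender earns zero from that signal. Once this is established, every nontrivial signal corresponds to a genuine ``covering cluster'' in $H$, and the optimal sender utility over all (possibly randomized) signaling schemes with \NUMSIG signals coincides with the \HEGG optimum. A secondary subtlety is that the penalty magnitude $L$ and the $\perp$-utility must be tuned so that the receiver's tie-breaking on the boundary does not accidentally hand the sender additional utility; standard perturbation arguments handle this without affecting the polynomial size of the reduction.
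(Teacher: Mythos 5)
There is a fundamental flaw in your stage (ii): the game you construct reduces to \textsc{Max Coverage}, which is \emph{not} hard to approximate within a constant. Two features of the paper's \HEGG are missing, and both matter. First, in the paper the sender's payoff when the receiver guesses vertex $v$ with $v$ incident on $e^*$ is $1/d_v$ --- normalized by degree --- not $1$. Second, the receiver in the paper's game can also \emph{guess a hyperedge} (earning $1$ if correct, $0$ otherwise) rather than merely abstain. That option is essential: if the posterior on a vertex-signal is not uniform over the hyperedges incident on $v$, the receiver strictly prefers to guess the single most likely hyperedge, yielding the sender $0$ (this is exactly the content of Lemma~\ref{lem:uniform-posterior}). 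Together, these two features make the optimal \psu equal (after a $1/|E|$ rescaling) to the value of Program~\eqref{signaling-scheme-LP}, a cardinality-constrained fractional packing LP whose hardness then comes from the clique/Ramsey reduction.

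Your construction has neither feature. The $-L$ penalty plus a slightly positive abstain utility only forces the posterior on a vertex-signal to be \emph{supported} on hyperedges incident on $v$, not uniform, and with $u_S(v,e)=1$ the sender simply earns the full probability mass of each non-garbage signal. Working through the constraints, the optimal \psu of your game equals
\[
\frac{1}{|E|}\cdot \max_{S\subseteq V,\ |S|\le \NUMSIG-1}\ \big|\{e\in E:\ S\cap e\neq\emptyset\}\big|,
\]
which is exactly the \textsc{Max Coverage} objective with budget $\NUMSIG-1$ and sets $\{e : v\in e\}$ for $v\in V$. Greedy achieves a $(1-1/e)$-approximation for this in polynomial time, so your target of reduction is constant-factor approximable; no choice of stage (i) can repair this, because the intermediate problem is not hard enough. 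Note also that even if you added the $1/d_v$ normalization to $u_S$, without the hyperedge-guessing action the posterior still need not be uniform, and one can cook up small examples where the resulting relaxed LP strictly exceeds the value of Program~\eqref{signaling-scheme-LP}, so the reduction still would not give the required equality.

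Separately, your stage (i) is a sketch, not a reduction. The paper takes the hyperedges to be all $(r+1)$-cliques of $G$, sets the support bound to $\NUMSUPP = n^{1-\epsilon}$, and uses the Ramsey bound $R(r+1,b) = O(b^r)$ to convert any solution of value $\NUMSUPP/r$ into an independent set of size $\omega(n^\epsilon)$; the completeness direction uses an independent set of size $\NUMSUPP$ directly. ``Hyperedges encode adjacency constraints'' does not pin any of this down, and the gap analysis is the bulk of the argument you would still need to supply.
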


\subsection*{Related Work}


Our focus is on the classical model of \emph{bilateral trade} 
(see \cite[Chapter 23]{mas-collel:whinston:green}).
Our choice of protocol, namely the take-it-or-leave-it offer, 
is arguably the simplest mechanism for bilateral trade, 
and in the case of the pricing game corresponds to the revenue-optimal
mechanism by the classical result of \citet{myerson:optimal-auction}.
The study of the impact of auxiliary information on trade --- also
known as \emph{third degree price discrimination} --- 
has a long history, starting at least as early as \cite{pigou:welfare}.
We refer the reader to \cite{Bergemann14} for an in-depth look at this
economic literature.

The work most directly related to ours is that of \citet{Bergemann14},
who examine the effects of information in the same pricing game. 
Their main result is a remarkable characterization
of buyer and seller expected utilities that are attainable by varying
the information structure of the seller, i.e., by segmenting the
market and allowing the seller to \emph{price discriminate} between
segments. 
They characterize the space of \emph{realizable} pairs $(r,u)$ for
which there exists an information structure \SigScheme[I] 
such that the seller's expected revenue is $r$ and the buyer's
expected utility is $u$:
$(r,u)$ is realizable if and only if $r \geq \Rev(\DIST)$ 
(i.e., the seller at least matches her ``uninformed'' revenue),
$ u \geq 0$, and $r + u \leq \Expect[\VAL \sim \DIST]{\VAL}$.

%

Implicit in \cite{Bergemann14} is a family of
algorithms --- parametrized by the distribution \DIST and a realizable
pair of utilities $(r,u)$ --- which \emph{implement} a signaling
scheme \SigScheme[I] realizing the pair of utilities $(r,u)$.  
When \DIST is an explicitly-described distribution with support size
\NUMVAL, the signaling schemes implicit in \cite{Bergemann14} are
efficient; their runtime is a low-order polynomial in
$n$.\footnote{A particularly beautiful
example of the schemes implicit in \cite{Bergemann14} is the greedy
algorithm achieving $r = \Rev(\DIST)$ and 
$u = \Expect[\VAL \sim \DIST]{\VAL} - \Rev(\DIST)$.}
However, the most interesting signaling schemes implied in
\cite{Bergemann14} --- in particular those with largest and smallest $u$
---  use as many signals as the support size of the buyer distribution
\DIST.
This realization  motivates our examination of schemes with
limited communication.

\citet{roesler2016buyer} also study the impact of
information revelation on bilateral trade. In their model, the
buyer can observe a signal of his value for the item, and will pay
for it if the conditional expected value is weakly larger than the
price. The seller will choose the optimal monopoly price according
to the buyer's information structure. 

The signaling problem in bilateral trade
is a special case of \emph{Bayesian Persuasion}, as formalized by
\citet{Kamenica11}, generalizing an earlier model
by \citet{Brocas07}. 
%
Instantiations, variants, and generalizations of the Bayesian
Persuasion problem have seen a flurry of interest in recent years. 
For example, persuasion has been examined in the context of
voting~\cite{Alonso14}, security~\cite{Xu15,Rabinovich15},
multi-armed bandits~\cite{crowds,bandit_exploration}, 
medical research~\cite{kolotilin}, 
and financial regulation~\cite{stresstest1,stresstest2}.
\citet{DX16} also consider persuasion algorithmically. 
However, they focus on Bayesian Persuasion without any communication
constraint, but allowing exponentially (or infinitely) many states of
nature in the number of actions. 

More generally, Bayesian Persuasion is a special case of
optimal information structure design in games. 
Recent work in computer science has examined this
question algorithmically, mostly in the context of
auctions~\cite{emek:feldman:gamzu:paes-leme:tennenholtz,bro-miltersen:sheffet:mixed,Guo13,dughmi:immorlica:roth:auction,DIOT15}.
In all these works, the uncertainty (i.e., state of nature) concerns
the item being sold, rather than the type of the buyer as in our model. 
Nevertheless, \cite{DIOT15,dughmi:immorlica:roth:auction} are related to our
work in that they also examine communication-limited signaling schemes. 
The work of \citet{dughmi:signaling-hardness} 
examines the complexity of signaling in abstract two-player
normal form games, while the recent work of \citet{mixture_selection}
presents an algorithmic framework for tackling a number of
(unconstrained) signaling problems.

An analogy can be drawn between our work and some of the work
on auction design subject to communication constraints. 
\citet{blumrosen2007auctions} study single-item auctions in which
bidders can only communicate a limited number of bits to the auctioneer. 
They show that even severe bounds on communication only lead
to mild losses in welfare and revenue. Moreover, they show that bidders
simply report an interval in which their value for the item lies when
faced with an optimal auction. 
\citet{blumrosen2006implementation} study communication-constrained
mechanism design in single-parameter problems more generally, 
and examine necessary and sufficient conditions under which a
communication-constrained mechanism matches or approximates the
optimal (unconstrained) mechanism.





\section{Preliminaries}
\label{sec:preliminaries}

Throughout, we use the following conventions for notation.
Vectors are denoted by bold face. When we write
$\vc{x} \leq \vc{y}$ for vectors $\vc{x}, \vc{y}$, we mean that
$x_i \leq y_i$ for all $i$.
We will frequently want to reason about the sums of entries of a
vector over a given set of indices.
We then write $x_I = \sum_{i \in I} x_i$.
We also apply this notation for elements of a matrix
$X = (x_{i,j})_{i,j}$, writing 
$x_{I,J} = \sum_{i \in I} \sum_{j \in J} x_{i,j}$.
We will particularly use this notation when $I, J$ are 
(closed or half-open) intervals of integers.

\subsection{Signaling Schemes}
When constructing a signaling scheme, we assume that the distribution
\DIST of buyer valuations is given explicitly as input. 
In particular, this means that it must have finite support of size \NUMVAL.
We assume that it is given by the valuations
$\Val{1} < \Val{2} < \ldots < \Val{\NUMVAL}$,
and their associated probabilities
$\PVal{1}, \PVal{2}, \ldots, \PVal{\NUMVAL}$, satisfying
$\sum_i \PVal{i} = 1$.
We write \VVec and \PVec for the vectors of all these values and
probabilities, respectively.

In the introduction, for ease of exposition, we described a signaling
scheme \SigScheme[I] in terms of its conditional probabilities
$\CSigP{v}{\SIG} = \ProbC{\RMap{v}=\SIG}{\VAL=v}$.
For the remainder of this article, we use notation differing in two
ways: (1) since all values are of the form \Val{i}, we can index the
buyer types by $i$ instead of $v$, and (2) it is much more convenient
to use unnormalized probabilities instead of conditional
probabilities: 
$\SigP{i}{\SIG} = \CSigP{\Val{i}}{\SIG} \cdot \Prob[\DIST]{\VAL = \Val{i}}$ is the
probability that the buyer's valuation is \Val{i} and the signal \SIG is sent.
The signaling scheme is then fully described by the matrix 
$\SigScheme \in [0,1]^{\NUMVAL \times \NUMSIG} =
(\SigP{i}{\SIG})_{i \in \SET{1, \ldots, \NUMVAL}, \SIG \in \SET{1, \ldots, \NUMSIG}}$,
satisfying that $\sum_{\SIG} \SigP{i}{\SIG} = \PVal{i}$.
From now on, we will therefore simply refer to the signaling scheme as
\SigScheme instead of \SigScheme[I].  

We sometimes describe a signal \SIG in isolation by a nonnegative
type-indexed vector $\vc{0} \leq \SigC[']{\sigma} \leq \PVec$.
We call such a vector a \emph{segment} of the market \PVec. 
Thus, a signaling scheme can be thought of as a family of segments,
one per signal, whose sum is the entire market \PVec. 

As discussed in the introduction, upon receiving the signal \SIG, the
seller will choose a price $\PRICE(\SIG)$ maximizing 
$\PRICE \cdot \Prob[{\VAL \sim \PostD[\SIG]}]{\VAL \geq \PRICE}$.
This price will always be one of the possible buyer valuations
\Val{i}, as any other price could be raised slightly without losing
any buyers. 
Furthermore, by merging signals with the same price into
one signal, without loss of generality, there are no two signals for
which the seller chooses the same price \cite{Kamenica11}. 
Hence, any signaling scheme \SigScheme induces indices 
$\SigI{1}, \SigI{2}, \ldots, \SigI{\NUMSIG}$ such that upon receiving
signal \SIG, the seller chooses price \Val{\SigI{\SIG}}.
Without loss of generality, we can rearrange the signals so that
$\SigI{1} > \SigI{2} > \ldots > \SigI{\NUMSIG}$.

For any signal \SIG, the expected welfare resulting from \SIG under
\SigScheme is called the \emph{\ServWel} and defined as
$\SSW{\SigScheme}{\SIG} = \sum_{i \geq \SigI{\SIG}} \Val{i} \cdot \SigP{i}{\SIG}$;
the social welfare is then
$\SW{\SigScheme} = \sum_{\SIG} \SSW{\SigScheme}{\SIG}$. 

We call the singaling scheme $\SigScheme$ \emph{optimal for
welfare} if $\SigScheme$ maximizes $\SW{\SigScheme}$. 
If $\SW{\SigScheme} \geq \alpha \SW{\SigScheme^\star}$, 
where $\SigScheme^\star$ is optimal for welfare, 
we call $\SigScheme$ an \emph{$\alpha$-approximation signaling scheme for
welfare}.

\subsection{Welfare and \PWF}
While our goal for most of this article is to maximize the social
welfare, it turns out that a slightly modified objective function is
significantly more amenable to analysis, both in terms of positive
results for bilateral trade and the hardness result for more general
persuasion. 
Specifically, there is a designated \emph{garbage signal} \GSIG,
and any welfare accrued when \GSIG is sent is discounted:
we thus define the \emph{\pwf} of \SigScheme to be
$\PSW{\SigScheme} = \sum_{\SIG \neq \GSIG} \SSW{\SigScheme}{\SIG}$.
By designating the signal \SIG minimizing \SSW{\SigScheme}{\SIG}
as the garbage signal, we observe:

\begin{proposition} \label{prop:welfare-sanitized}
For all signaling schemes \SigScheme, we have that
$\frac{\NUMSIG-1}{\NUMSIG} \cdot \SW{\SigScheme}
\leq \PSW{\SigScheme} \leq \SW{\SigScheme}$.
\end{proposition}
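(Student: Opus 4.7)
The plan is to handle the two inequalities separately. Both are elementary, and the key idea has already been flagged in the sentence immediately preceding the proposition: designate the signal with the smallest served welfare as the garbage signal.

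For the upper bound $\PSW{\SigScheme} \le \SW{\SigScheme}$, I would simply observe that the served welfare $\SSW{\SigScheme}{\SIG}$ of any signal is non-negative, since it is a sum of non-negative terms $\Val{i} \cdot \SigP{i}{\SIG}$. Hence dropping the one term corresponding to \GSIG from the sum defining $\SW{\SigScheme}$ can only decrease it, giving $\PSW{\SigScheme} = \SW{\SigScheme} - \SSW{\SigScheme}{\GSIG} \le \SW{\SigScheme}$.

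For the lower bound, I would use an averaging argument. Let $\GSIG = \argmin_{\SIG} \SSW{\SigScheme}{\SIG}$ be the designated garbage signal (this is exactly the choice the preceding text proposes). Since $\GSIG$ achieves the minimum of $\SSW{\SigScheme}{\SIG}$ over the \NUMSIG signals, it is at most the average: $\SSW{\SigScheme}{\GSIG} \le \tfrac{1}{\NUMSIG} \sum_{\SIG} \SSW{\SigScheme}{\SIG} = \tfrac{1}{\NUMSIG} \SW{\SigScheme}$. Subtracting from $\SW{\SigScheme}$ then gives $\PSW{\SigScheme} = \SW{\SigScheme} - \SSW{\SigScheme}{\GSIG} \ge \SW{\SigScheme} - \tfrac{1}{\NUMSIG}\SW{\SigScheme} = \tfrac{\NUMSIG-1}{\NUMSIG} \SW{\SigScheme}$, as desired.

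There is no real obstacle in this proof; the only subtlety worth noting is that the proposition is implicitly a statement about the existence of a choice of garbage signal (namely the min-welfare one) that makes the bound hold, rather than a statement about arbitrary choices of \GSIG. I would make this explicit in one sentence so that subsequent uses of $\PSW{\SigScheme}$ in the paper can freely assume this canonical choice.
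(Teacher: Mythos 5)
Your proof is correct and matches the paper's (implicit) argument exactly: the paper does not spell out a proof but relies precisely on the observation you make, that choosing the minimum-welfare signal as \GSIG means its served welfare is at most the average $\frac{1}{\NUMSIG}\SW{\SigScheme}$, while non-negativity of each $\SSW{\SigScheme}{\SIG}$ gives the upper bound. Your closing remark that the proposition concerns the canonical (min-welfare) choice of garbage signal is also accurate and worth keeping, since later sections silently rely on that convention.
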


In particular, any signaling scheme \SigScheme maximizing \pwf to
within a factor $\alpha$ also maximizes social welfare to within a
factor $\frac{\NUMSIG-1}{\NUMSIG} \cdot \alpha$.
Since we will always focus on \pwf in the context of welfare
maximization, to avoid having to write $\NUMSIG-1$ for the number of
signals everywhere, we will explicitly assume that our signaling
schemes can use \NUMSIG signals \emph{in addition to} the garbage
signal \GSIG.

\subsection{\PWF Maximization Given Price Points}
Conceptually, the task of designing a good signaling scheme can be
divided into two steps:
(1) Choose the seller's price points 
$\SigI{1} > \SigI{2} > \cdots > \SigI{\NUMSIG}$
for non-garbage signals;
(2) Design a signaling scheme \SigScheme maximizing \PSW{\SigScheme}
such that the seller's best response to each signal \SIG is in fact
\SigI{\SIG}.

Given the chosen price point indices
$\SigI{1}, \ldots, \SigI{\NUMSIG}$, the goal of finding the
welfare-maximizing signaling scheme is characterized by the following
linear program.

\begin{LP}[lp:welfare-maximization]{Maximize}{%
\sum_{\SIG=1}^\NUMSIG \sum_{i=\SigI{\SIG}}^\NUMVAL \Val{i} \SigP{i}{\SIG}}

\sum_{\SIG=1}^\NUMSIG \SigP{i}{\SIG} \leq \PVal{i}
     & \text{ for all } i \quad \text{ (probability)}\\
\Val{\SigI{\SIG}} \cdot \sum_{i \geq \SigI{\SIG}} \SigP{i}{\SIG} 
   \geq \Val{k} \cdot \sum_{i \geq k} \SigP{i}{\SIG}
     & \text { for all } \SIG, k
       \quad \text{ (revenue)} \\
\SigP{i}{\SIG} \geq 0 
     & \text { for all } i, \SIG.
\end{LP}

The probability constraints capture that we do indeed have a valid
signaling scheme (with the garbage signal being assigned all residual
probabilities $\PVal{i} - \sum_{\SIG=1}^\NUMSIG \SigP{i}{\SIG}$).
The revenue constraints capture that it is a best response for the seller
to set the price point \SigI{\SIG} when receiving the signal \SIG.

Notice that for step (2), the LP~\eqref{lp:welfare-maximization}
actually achieves the optimal solution for given price points. 
The approximation is necessary because for step (1), choosing the
optimal price points appears more difficult.

\subsection{Bayesian Persuasion}
Bayesian Persuasion \cite{Kamenica11} is a natural and strong generalization
of signaling in bilateral trade.
The game involves a sender and a receiver, and is characterized by the following:
(1) a distribution over states of nature $\NState \in \NATURE$,
(2) a set \RECACT of actions the receiver can choose from, and
(3) utility functions \ANUtil{S}{\NState}{\recact}, \ANUtil{R}{\NState}{\recact}
mapping the state of nature and action chosen by the receiver to the
utilities obtained by the two players.

To illustrate these abstract concepts, for the pricing game,
the state of nature was the buyer's valuation,
and the receiver's actions were prices \PRICE.
In the present work, we assume that the distribution over states of
nature is  given explicitly by the vector \PNatV of probabilities \PNat{\NState}.

As with the pricing game, the sender can choose a signaling scheme
$\SigScheme = (\SigP{\NState}{\SIG})_{\NState \in \NATURE, \SIG \in \ALLSIGSET}$ 
with \NUMSIG signals. 
Upon receiving the signal \SIG, the receiver will choose an action
$\recact(\SIG) \in \RECACT$ maximizing
$\Expect[{\NState \sim \PostD[\SIG]}]{\ANUtil{R}{\NState}{\recact}}$,
breaking ties in favor of the sender; 
here, \PostD[\SIG] denotes the posterior distribution over states of
nature conditioned on receiving signal \SIG.

The sender's utility from sending the signal \SIG is
$\SUtil{\SIG} = \sum_{\NState} \SigP{\NState}{\SIG} \cdot \ANUtil{S}{\NState}{\recact(\SIG)}$,
and he would like to maximize his overall expected utility
$\Util{\SigScheme} = \sum_{\SIG} \SUtil{\SIG}$.
As with the special case of social welfare, we will mostly focus on
the \psu 
$\Util[-]{\SigScheme} = \sum_{\SIG \neq \GSIG} \SUtil{\SIG}$
of signaling schemes with a dedicated garbage signal \GSIG;
by the same argument as for social welfare, the \psu is within at least
a factor $\frac{\NUMSIG-1}{\NUMSIG}$ of the sender utility.


\section{Welfare with Limited Communication}
\label{sec:single-segment}

In this section, we provide a proof of Theorem~\ref{thm:single-signal-welfare}.
We then discuss some of its implications, including
Corollary~\ref{cor:logn-signal-welfare} and a QPTAS for the problem of
maximizing social welfare subject to limited communication.  

We begin by showing a special case of the theorem when each $\Val{i}$
is a power of 2. 
We then  reduce the general case to this
special case at a loss of a constant factor.

\begin{lemma} \label{lem:power-2-single}
If each $\Val{i}$ is a power of $2$, then the ratio of the \FISW to
the maximum \ServWel of a single segment of the market 
is at most $O(\log n)$.
\end{lemma}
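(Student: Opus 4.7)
The plan is to identify, via a dyadic bucketing of the type space, a subset of buyer types carrying large welfare contribution, and then construct an equal-revenue single segment supported on that subset. First, I would define levels $L_j = \{i : v_i p_i \in [2^j, 2^{j+1})\}$. Types with $v_i p_i < V/(cn\log n)$ (for a suitable constant $c$) together contribute at most $V/\Theta(\log n)$ to $V = \sum_i v_i p_i$, so they may be discarded without affecting the target bound. The surviving types occupy only $O(\log n)$ relevant levels, and by pigeonhole some level $L^*$ with characteristic value $w = 2^{j^*}$ contributes at least $\Omega(V/\log n)$ to $V$.

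Next, I would consider the single segment $x_i = w/v_i$ for $i \in L^*$ and $x_i = 0$ otherwise. This is a valid segment (since $v_i p_i \geq w$ on $L^*$ implies $x_i \leq p_i$), and the posterior is \emph{exactly} equal-revenue on $L^*$, i.e.\ $v_i x_i = w$ for all $i \in L^*$. The crucial consequence of $v_i$ being distinct powers of $2$ is that, for any price $v_k$ in the support, the seller's revenue equals $w \cdot \sum_{i \in L^*,\, v_i \geq v_k} v_k/v_i$, a sum of reciprocals of distinct powers of $2$, hence lying in $[w, 2w]$. In particular, every price yields revenue within a factor of $2$ of the seller's optimum.

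With favorable tie-breaking by the seller, I would argue that the welfare captured is within a constant factor of $|L^*| \cdot w$, which is itself within a constant factor of $L^*$'s contribution to $V$; multiplying the two losses then yields welfare $\Omega(V/\log n)$ for this segment. The hard part will be the last step: although all prices yield revenues within a factor of $2$ of each other, the seller could in principle strictly prefer a price that captures only a small subset of $L^*$'s types (for example, when the values in $L^*$ form a tight clump of nearby powers of $2$ at the top of their range, separated by a large gap from the rest). Addressing this will require either a more refined design of the segment---such as restricting $L^*$ to contiguous sub-intervals of values or truncating the highest types and taking the maximum over the resulting family of sub-segments---or a structural argument showing that the seller's best response on an equal-revenue posterior over distinct powers of $2$ cannot be too ``top-heavy.'' Since the claim is existential, it suffices to show that among polynomially many natural sub-segments of $L^*$, one attains the $\Omega(V/\log n)$ bound.
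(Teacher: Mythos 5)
Your high-level plan matches the paper's: bucket types dyadically by their welfare contribution $v_i p_i$, discard the types with contribution below $\MSW/\poly(n)$, pigeonhole a heavy bin $L^*$, and build an (approximately) equal-revenue segment supported on $L^*$. That part is correct, and the bound $x_i = w/v_i \leq p_i$ on $L^*$ is sound. However, the gap you flag in the last paragraph is a real one, and your tentative fixes do not close it.

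Concretely: with the exactly equal-revenue segment $x_i = w/v_i$ for $i \in L^*$, the seller's revenue at price $v_k$ equals $w\sum_{i\in L^*,\,i\geq k} v_k/v_i$, which lies in $[w,2w]$, but nothing forces the seller to pick a low price. If the values in $L^*$ are widely separated powers of $2$ (say $1, 2^n, 2^{2n}, \dots$), every price yields revenue $\approx w$, and a tie-break at the top price serves only a $1/|L^*|$ fraction of the welfare; the ``not too top-heavy'' structural claim you hope for is simply false in this regime. The paper's key idea --- which your proposal is missing --- is a small but crucial asymmetry in the segment: keep the \emph{lowest}-value type $i^*$ at its full equal-revenue weight $q_{i^*}=u/v_{i^*}$, but \emph{halve} every other weight, $q_i = u/(2v_i)$ for $i\in L^*\setminus\{i^*\}$. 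Then the revenue at $i^*$ is at least $u$, while for any $i>i^*$ the revenue is at most
\begin{equation*}
v_i\sum_{k\geq i,\,k\in L^*}\frac{u}{2v_k}\;\leq\; v_i\cdot\frac{u}{2v_i}\sum_{t=0}^{\infty}2^{-t}\;=\;u,
\end{equation*}
using that the $v_k$ are distinct powers of $2$ so the tail is dominated by a geometric series. Hence the lowest price is a best response, the entire segment is served, and $\sum_{i\in L^*} q_i v_i\geq \tfrac14\sum_{i\in L^*} p_i v_i = \Omega(\MSW/\log n)$. Your suggested alternatives (restricting to contiguous value intervals, or truncating the top) are more complicated detours that are not obviously sufficient; the halving trick is the clean way to guarantee the seller serves all of $L^*$.
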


\begin{proof}
The lemma is trivial for $n=1$, so assume that $n \geq 2$.
Let $\MSW=\sum_i \PVal{i} \Val{i}$ be the \FISW. 
First, we remove (i.e., exclude from the segment we
construct) all types $i$ with 
$\PVal{i} \Val{i} < \frac{\MSW}{n^2}$.
Because there are at most $n$ such types, this can decrease the \FISW
by at most a factor of $1-1/n$. 

Now, we group all (remaining) types $i$ into $O(\log{n})$ bins according 
to $\PVal{i}\Val{i}$.
Specifically, bin $\Bin{j}, j \geq 0$ contains all types $i$ with
$\PVal{i}\Val{i} \in (\frac{\MSW}{2^{j+1}}, \frac{\MSW}{2^{j}}]$. 
Since $\PVal{i}\Val{i} \geq \frac{\MSW}{n^2}$,
there are at most $2\log{n}$ bins. 
Thus, there is at least one bin $\Bin{j}$ such that 
$\sum_{i \in \Bin{j}} \PVal{i}\Val{i} \geq \frac{n-1}{n} \cdot \frac{1}{2 \log n} \cdot \MSW$.
Fix such a $j$ for the rest of the proof, and define $u=\frac{\MSW}{2^{j+1}}$.

Let $i^* = \min \Bin{j}$ be the type in \Bin{j} corresponding to the smallest value.
We define a market segment \MPV as follows: 
Let $\MP{i^*} = \frac{u}{\Val{i^*}}$, 
let $\MP{i} = \frac{u}{2\Val{i}}$ for all $i \in \Bin{j}$ with $i \neq i^*$, 
and let $\MP{i}=0$ for $i \not\in B_j$.

By definition of the bins, 
for all $i \in B_j$ we have $u < \PVal{i}\Val{i} \leq 2u$,
and therefore $\PVal{i}/4 \leq \MP{i} \leq \PVal{i}$.
In particular, $\vc{0} \leq \MPV \leq \PVec$,
and therefore \MPV describes a valid segment of \PVec.

We next show that, when facing the market segment \MPV,
the seller will choose the price point $i^*$. 
The seller's revenue from choosing $i^*$ is at least
$\MP{i^\ast} \Val{i^\ast} = u$.
On the other hand, for any $i > i^*$, the seller obtains revenue at most
$\Val{i} \cdot \sum_{k \geq i, k \in \Bin{j}} \frac{u}{2\Val{k}}
\leq \Val{i} \cdot \frac{u}{2\Val{i}} \cdot \sum_{k=0}^\infty \frac{1}{2^k} = u$;
here, we used that the valuations are powers of 2.
Therefore, $\Val{i^*}$ is the revenue-maximizing price for \MPV.
With the seller choosing $\Val{i^*}$ as the price, the \ServWel of \MPV is 
\begin{align*}
\sum_{i \in \Bin{j}} \MP{i} \cdot \Val{i} 
& \geq \quarter \sum_{i \in \Bin{j}} \PVal{i} \cdot \Val{i}
\; \geq \; \frac{(n-1)}{8 n\log n} \cdot \MSW
\; \geq \; \frac{1}{16 n \log n} \cdot \MSW.
\end{align*}
%
Thus, \MPV is a segment with \ServWel at least 
$\frac{1}{O(\log n)} \cdot \MSW$.
\end{proof}

We utilize Lemma~\ref{lem:power-2-single} to prove the upper-bound
portion of Theorem~\ref{thm:single-signal-welfare}.
The idea is to round down all valuations to the nearest power of 2
(losing at most a factor of 2 in the welfare), then utilize 
Lemma~\ref{lem:power-2-single} to construct a segment for the new
valuation distribution achieving at least a factor $\Omega(1/\log n)$
of the \FISW, and finally reconstruct a segment/signal for
the original distribution.
For the matching lower bound, we use an ``equal welfare distribution.''

\begin{extraproof}{Theorem~\ref{thm:single-signal-welfare}}
Let $\Val[']{j}, 1 \leq j \leq m$, be the valuations 
of the original distribution,
with associated probabilities $\PVal[']{j}$.
The new valuations are $\Val{i} = 2^i$, with
$\PVal{i} = \sum_{2^i \leq \Val[']{j} < 2^{i+1}} \PVal[']{j}$. 
We allow the index $i$ to be negative for notational convenience.
Let $n$ denote the support size of \PVec, and note that $n \leq m$. 
Because valuations were reduced by at most a factor of 2, the \FISW
of the new distribution is at least half that of the original one, i.e., 
$\sum_{i} \PVal{i} \Val{i} \geq \half \sum_{j} \PVal[']{j} \Val[']{j}$. 

Let \MPV be a single segment with \ServWel
at least a $1/O(\log n)$ fraction of the \FISW of \PVec.
By Lemma~\ref{lem:power-2-single}, such a \MPV exists.
Let $i^*$ be the price point chosen by the seller under \MPV,
  and $R$ the seller's revenue.
By optimality of $i^*$ for the seller, we get that

\begin{align}\label{eq:rev-q}
R & = \Val{i^*} \cdot (\MP{i^*} + \sum_{k=i^*+1}^n \MP{k}) 
\; \geq \; \Val{i^*+1} \cdot \sum_{k=i^*+1}^n \MP{k}
\; = \; 2 \Val{i^*} \cdot \sum_{k=i^*+1}^n \MP{k}.
\end{align}

Hence, $\MP{i^*} \geq \sum_{k=i^*+1}^n \MP{k}$,
meaning that at least half of the probability mass, and thus half of
the seller's revenue $R$ from \MPV, comes from type $i^*$.
Now define a segment \MPV['] of the original type distribution \PVec['] as follows. 
For type $i^*$, take a total of probability mass $\MP{i^*}$ from
types $j$ with $2^{i^*} \leq \Val[']{j} < 2^{i^*+1}$.
For all types $i > i^*$, 
take a total of probability mass $\MP{i}/8$ from
types $j$ with $2^{i} \leq \Val[']{j} < 2^{i+1}$,
and for types $j$ with $\Val[']{j}< 2^{i^*}$, set $\MP[']{j} = 0$.
Observe that the \FISW of \MPV['] is at least a $1/8$ fraction of the \ServWel of \MPV.
It remains to show that a constant fraction of that social welfare is
above the seller's revenue-maximizing offer price for \MPV['], which
we do next.

First, note that the price $2^{i^*}$ gives the seller a revenue of at
least $R/2$, even just from all buyers of types $j$ with 
$2^{i^*} \leq \Val[']{j} < 2^{i^*+1}$.
Next, consider a price of $\Val[']{j} \geq 2^{i^*+1}$, 
say $2^i \leq \Val[']{j} \leq 2^{i+1}$ with $i \geq i^* + 1$.
The revenue of such a price is at most
\begin{align}\label{eq:highvals-rev}
\Val[']{j} \cdot \sum_{k \geq j} \MP[']{k} 
& \leq  2^{i+1} \cdot \sum_{k \geq i} \MP{k}/8
\; = \; \quarter v_i \cdot \sum_{k \geq i} \MP{k} 
\; \leq \; R/4.
\end{align} 
Hence, no price $\Val[']{j} \geq 2^{i^* + 1}$ can be
revenue-maximizing for \MPV['], and all types in the segment \MPV[']
with value at least $2^{i^*+1}$ are served. 
It remains to show that a constant factor of the remaining social
welfare --- associated with values between $2^{i^*}$ and $2^{i^*+1}$
--- is also served.

Because the price $2^{i^*}$ dominates all prices 
$\Val[']{j} \geq 2^{i^* + 1}$,
the seller will choose some price $j$ with 
$2^{i^*} \leq \Val[']{j} < 2^{i^* + 1}$,
and the chosen price must give the seller revenue at least $R/2$.
The calculation in \eqref{eq:highvals-rev} implies that the revenue
extracted from types $j$ with $\Val[']{j} \geq 2^{i^* + 1}$ is at most $R/4$.
Hence, at least a revenue (and thus also social welfare) of 
$R/4$ must come from types $j$ with $2^{i^*} \leq \Val[']{j} < 2^{i^* + 1}$. 
By construction, the \FullInfWel associated with
those types is at most $2^{i^*+1} \MP{i^*} = 2\Val{i^*} \MP{i^*}$, 
which is at most $2R$ by \eqref{eq:rev-q}. 
Consequently, at least a $1/8$ fraction of the total social welfare
associated with those types is served, as needed.

In summary, \MPV['] serves a constant fraction of the \FISW
of \MPV, which in turn is a $1/O(\log n) \geq 1/O(\log m)$
fraction of the \FISW of the distribution \PVec[']. 
Hence, for any distribution supported on $m$ types, a single segment is enough to serve a
$1/O(\log m)$ fraction of the social welfare. 

\medskip

To show that this bound is tight, we construct an
``equal-welfare distribution'' with the property that no single
segment extracts more than an $O(1/\log \EqualRevN)$ fraction of
the full-information social welfare. 
Let $\Val{i} = 2^i$ for $0 \leq i \leq \EqualRevN$, 
and $\PVal{i} = \frac{1}{2^i \cdot (\EqualRevN-i)}$ for $0 \leq i < \EqualRevN$, 
with $\PVal{\EqualRevN} = \frac{1}{2^{\EqualRevN-1}} = \PVal{\EqualRevN-1}$.
Notice that these ``probabilities'' do not sum to 1;
we omit the normalization constants for legibility, since they will
cancel out in the subsequent calculations.

The \FISW of \PVec is 
$\sum_{i=0}^\EqualRevN \PVal{i}\Val{i} = 2 + \sum_{i=1}^\EqualRevN \frac{1}{i} = \Theta(\log \EqualRevN)$. 
Now consider any segment \MPV of \PVec, and let $\Val{k}=2^k$
be the seller's revenue-maximizing price for \MPV. 
If $k=\EqualRevN$, the \ServWel is 
$\Val{\EqualRevN} \MP{\EqualRevN} \leq \Val{\EqualRevN}\PVal{\EqualRevN} = 2=O(1)$, as needed. 
Assume now that $k < \EqualRevN$. 
For each $i$ with $k < i \leq \EqualRevN$, the optimality of the price \Val{k} implies that 
$\Val{i} \MP{[i,\EqualRevN]} \leq \Val{k} \MP{[k,\EqualRevN]}$, 
or equivalently $\MP{[i,\EqualRevN]} \leq \frac{\Val{k}}{\Val{i}} \MP{[k,\EqualRevN]}$. 
By choosing $i=k+1$, we get that 
$\MP{[k+1,\EqualRevN]} \leq \frac{\Val{k}}{\Val{k+1}} \MP{[k,\EqualRevN]} = \frac{\MP{[k,\EqualRevN]}}{2}$, 
and consequently $\MP{[k,\EqualRevN]} \leq 2\MP{k}$. 
We can now bound the \ServWel of \MPV by a constant:

\begin{align*}
\sum_{i=k}^\EqualRevN \Val{i} \MP{i} 
& = \Val{k} \MP{[k,\EqualRevN]} + \sum_{i=k+1}^\EqualRevN (\Val{i} - \Val{i-1}) \MP{[i,\EqualRevN]}
\; \leq \; 2 \Val{k} \MP{k} + 2 \sum_{i=k+1}^\EqualRevN (\Val{i} - \Val{i-1})  \frac{\Val{k}}{\Val{i}} \MP{k} \\
& = 2 \Val{k} \MP{k} \left(1+ \sum_{i=k+1}^\EqualRevN \frac{\Val{i} - \Val{i-1}}{\Val{i}} \right) 
\; = \; \Val{k} \MP{k} (\EqualRevN+2-k).
\end{align*}
Because $\MP{k} \leq \PVal{k}$, we can bound
$\Val{k} \MP{k} (\EqualRevN+2-k) \leq \Val{k} \PVal{k} (\EqualRevN+2-k)
= \frac{\EqualRevN+2-k}{\EqualRevN-k} \leq 3$.
In summary, any single segment can obtain at most constant welfare,
and thus at most an $O(1/\log \EqualRevN)$ fraction of the total social
welfare for this instance.
\end{extraproof}

To derive Corollary~\ref{cor:logn-signal-welfare}, 
simply pick segments greedily, always choosing the next
segment to have largest possible \ServWel.
By Theorem~\ref{thm:single-signal-welfare}, each subsequent segment
obtains at least a $\frac{1}{c\log n}$ fraction of the residual
welfare at that point, meaning that after adding
$c \log n \log (1/\eps)$ signals, the fraction of the total
welfare obtained by the signaling scheme is at least
$1-(1-\frac{1}{c\log n})^{c\log n\log (1/\epsilon)} \geq 1-\epsilon$.

\begin{remark}
One can obtain a positive result very directly when the ratio
$\rho = \frac{\max_i \Val{i}}{\min_i \Val{i}}$ is bounded.
Group all buyers according to their values into $\log \rho$ bins: 
for $u=\min_i \Val{i}$, buyers with $2^{j}u\le\Val{i}<2^{j+1}u$ are
put into bin $j$, 
By considering each bin as a segment, we can see that the revenue
(also, the \ServWel) of each segment is at least half of the its \TotWel. 
Choosing the best \NUMSIG bins leads to a
bound of $\Omega(\min(1,\NUMSIG / \log \rho))$.
However, our result is of more interest when the ratio between the
largest and smallest valuations can be very large.
\end{remark}



To get a quasi-polynomial time approximation scheme (QPTAS) for the
problem of finding a welfare-maximizing signaling scheme with at most
\NUMSIG signals, consider  a desired approximation parameter
$\eps$.
We distinguish two cases.
If $\NUMSIG \leq c \log n \log (1/\eps)$ 
(again, $c$ is the constant in $O(\log n)$ in Theorem~\ref{thm:single-signal-welfare}), 
enumerate all possible choices of price points, and find the truly
optimal one (i.e., there is no approximation factor lost in this case).
Notice that there are at most ${n \choose \NUMSIG} = O(n^\NUMSIG)$
such combinations to consider, and for each of them, 
either the LP~\eqref{lp:welfare-maximization} or the greedy algorithm
(Theorem~\ref{thm:greedy-optimal}) allows us to evaluate the maximum
welfare attainable with that set of prices.
Thus, the running time is quasi-polynomial for fixed $\epsilon$.

When $\NUMSIG > c \log n \log (1/\eps)$, 
by Corollary~\ref{cor:logn-signal-welfare},
choosing  the \NUMSIG price points greedily obtains a $(1-\eps)$
fraction of the \FISW.
Thus, it certainly obtains the same fraction of the maximum 
that could be achieved with \NUMSIG signals.

\section{\PWF Maximization with a Greedy Algorithm}

\label{sec:greedy}
In this section, we show that, given the set of price points,
an optimal solution to the LP~\eqref{lp:welfare-maximization} 
(i.e., the problem of maximizing \pwf) can be computed by a greedy
algorithm constructing the signals one by one.
Besides the faster running time, the main value of the greedy
algorithm is as an analysis tool for the problem of choosing the 
(near-)optimal price points; the analysis for that problem is carried
out in Section~\ref{sec:submodularity}.

We begin with an algorithm for constructing just one signal \SIG with a
given price point $\PrP = \SigI{\SIG}$.
Let \RPVec be a vector of residual probabilities for buyer types,
i.e., the probability of each type that has not been allocated to any
signals previously. Hence, $\vc{0} \leq \RPVec \leq \PVec$.
The goal is to construct the probabilities \OSVec for a single signal
with price point \PrP; that is, we require that
$0 \leq \OS{i} \leq \RP{i}$ for all $i$, and the revenue constraint 
$\Val{\PrP} \cdot \sum_{j \geq \PrP} \OS{j} 
\geq \Val{i} \cdot \sum_{j \geq i} \OS{j}$ must be satisfied for all $i$.
The revenue constraint can be written 
as
\begin{align}
\TS{i} & \leq 
\frac{\Val{\PrP}}{\Val{i}} \cdot \TS{\PrP} \quad \mbox{ for all } i.
\label{eqn:tailsum-inequality}
\end{align}

In an optimal (single) signal, for each $i \geq k$, either
Inequality~\eqref{eqn:tailsum-inequality} must be tight, or $\OS{i} = \RP{i}$;
otherwise, \OS{i} could be increased, raising social welfare.
For the same reason, 
$\OS{\PrP} = \RP{\PrP}$.
Because no buyer of type $i < \PrP$ will ever purchase at price
\Val{\PrP}, 
such buyers contribute 0 to revenue and (social or buyer) welfare. 
Hence, without loss of generality, 
the optimum signal has $\OS{i} = 0$ for all $i < \PrP$.

These observations suggest the following algorithm: 
Gradually raise \OS{\PrP} from 0 to \RP{\PrP}. 
Simultaneously raise the \OS{i} for $i > \PrP$ in such
a way that at all times, for each $i$, either $\OS{i} = \RP{i}$ or
Inequality~\eqref{eqn:tailsum-inequality} is tight.
This is accomplished by raising the tail sums \TS{i} at a rate of
$\frac{\Val{\PrP}}{\Val{i}}$ while \TS{\PrP} is raised at rate 1.
Solving for the necessary rates of increase of individual \OS{i} gives
rise to the algorithm \algoname{Construct-One-Signal}.

\begin{algorithm}
\caption{Construct-One-Signal $(\PrP, \VVec, \RPVec)$ \label{algo:one-signal}}
\begin{algorithmic}
    \STATE $\OSVec \gets \vc{0}$
    \STATE $I \gets \Set{i \geq \PrP}{\RP{i} > 0}$ 
         \COMMENT{Throughout, $I$ contains all types for which \OS{i} can still be raised.}
    \WHILE{$\PrP \in I$} 
      \STATE $m \gets \max_{i \in I} i$
      \STATE $\Rate{m} \gets \frac{1}{v_m}$ 
         \COMMENT{Rate of increase that keeps Inequality~\eqref{eqn:tailsum-inequality} tight for $m$.}
      \FORALL {$i \in I \setminus \SET{m}$}
        \STATE $j \gets \min \Set{i' > i}{i' \in I}$
        \STATE $\Rate{i} \gets \frac{1}{\Val{i}}-\frac{1}{\Val{j}}$
         \COMMENT{Rate of increase that keeps Inequality~\eqref{eqn:tailsum-inequality} tight for $i$.}
      \ENDFOR
      \STATE $i^{*} \gets \arg \min_{i \in I}(\RP{i} - \OS{i})/\Rate{i}$
         \COMMENT{Index for which $\RP{i^*} = \OS{i^*}$ will become tight first.}
      \STATE $\alpha \gets (\RP{i^*} - \OS{i^*})/\Rate{i^*}$
         \COMMENT{Amount of increase in \OS{\PrP} until $\OS{i^*} = \RP{i^*}$.}
      \FORALL{$i \in I$}
         \STATE $\OS{i} \gets \OS{i} + \alpha \Rate{i}$
      \ENDFOR
      \STATE $I \gets \Set{i \geq \PrP}{\OS{i} < \RP{i}}$ 
         \COMMENT{Update the set of indices that can still be raised.}
    \ENDWHILE
    \RETURN \OSVec
\end{algorithmic}
\end{algorithm}

Our first lemma simply restates the revenue constraint, 
and captures the fact that if $\OS{i} < \RP{i}$ at any time
during the algorithm, then the revenue constraint
\eqref{eqn:tailsum-inequality} must be tight for $i$.
The proof is straightforward by induction on iterations.

\begin{lemma} \label{lem:revenue-tight}
For all indices $i$ and every step of the algorithm, 
$\TS{i} \leq \frac{\Val{\PrP} \cdot \PMY{[\PrP,i)}}{\Val{i} - \Val{\PrP}}$. 

Furthermore, if $i \in I$ at some point of the algorithm (including at
termination), then at that point in time, 
$\Val{i} \cdot \TS{i} = \Val{\PrP} \cdot \TS{\PrP}$.
The latter condition is equivalent to saying that
$\TS{i} = \frac{\Val{\PrP} \PMY{[\PrP,i)}}{\Val{i} - \Val{\PrP}}$.
\end{lemma}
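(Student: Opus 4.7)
The plan is to prove both parts by a joint induction on the iterations of the \texttt{while} loop in \algoname{Construct-One-Signal}, strengthening the inductive hypothesis to the conjunction of
\textbf{(a)} $\TS{i} \leq \frac{\Val{\PrP} \cdot \PMY{[\PrP,i)}}{\Val{i} - \Val{\PrP}}$ for every $i > \PrP$, and
\textbf{(b)} equality in (a) (equivalently, $\Val{i}\TS{i} = \Val{\PrP}\TS{\PrP}$) for every $i \in I$.
The equivalence between the two forms in (b) follows immediately from the identity $\TS{\PrP} = \TS{i} + \PMY{[\PrP,i)}$ and one line of algebra, so I will treat them interchangeably.

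The base case is trivial: initially $\OSVec = \vc{0}$, so both sides of (a) are $0$, and since $\TS{\PrP}=0$, (b) holds for every $i \in I$. For the inductive step, I will analyze the changes $\Delta\TS{i}$ and $\Delta\PMY{[\PrP,i)}$ caused by one iteration that increases $\OS{k}$ by $\alpha\Rate{k}$ for every $k \in I$. The key computation is a telescoping identity: if $i_1 < i_2 < \cdots < i_T$ are the elements of $I$ at the start of the iteration, and $\ell_s$ is the smallest element $\geq i$, then the definition of the rates gives $\sum_{j \geq s} \Rate{i_j} = 1/\Val{\ell_s}$ (using $\Rate{i_T}=1/\Val{i_T}$ and $\Rate{i_j}=1/\Val{i_j}-1/\Val{i_{j+1}}$ for $j<T$). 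Consequently $\Delta\TS{i} = \alpha/\Val{\ell_s}$, and an analogous telescoping shows $\Delta\PMY{[\PrP,i)} = \alpha\bigl(1/\Val{\PrP}-1/\Val{\ell_s}\bigr)$ whenever $\PrP \in I$.

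When $i \in I$, we have $\ell_s = i$, so $\Delta\TS{i} = \alpha/\Val{i}$ and $\Delta\PMY{[\PrP,i)} = \alpha(1/\Val{\PrP}-1/\Val{i})$. A short computation shows that multiplying the change in $\PMY{[\PrP,i)}$ by $\Val{\PrP}/(\Val{i}-\Val{\PrP})$ also yields $\alpha/\Val{i}$, so the equality in (b) is preserved exactly. This confirms that the rates prescribed by the algorithm are precisely those needed to maintain the revenue constraint with equality for all indices still in $I$.

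The main obstacle, and the most delicate part of the argument, is verifying (a) for indices $i \notin I$, since these indices have been ``frozen'' at different past iterations while the remainder of $I$ continues to evolve. Here I will use the telescoping identity with $\ell_s > i$: the change in the right-hand side of (a) is $\frac{\alpha(\Val{\ell_s}-\Val{\PrP})}{\Val{\ell_s}(\Val{i}-\Val{\PrP})}$, while the change in the left-hand side is $\alpha/\Val{\ell_s}$. Because $\Val{\ell_s} > \Val{i}$, the former strictly dominates the latter, so the inequality is preserved and in fact the slack grows. The corner cases (no element of $I$ lies $\geq i$, or $\PrP$ has just been removed from $I$ terminating the loop) are handled by observing that in each such case $\Delta\TS{i}=0$ while $\Delta$RHS $\geq 0$. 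Combining these observations yields the lemma.
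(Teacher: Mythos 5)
Your proof is correct, and it is precisely the argument the paper has in mind when it says the lemma ``is straightforward by induction on iterations'' without giving further detail. The strengthened inductive hypothesis (inequality for all $i > \PrP$, equality for all $i$ currently in $I$), the telescoping identity $\sum_{j \in I,\, j \geq i} \Rate{j} = 1/\Val{\ell_s}$ implied by the rate definitions, and the resulting per-iteration increments $\Delta\TS{i} = \alpha/\Val{\ell_s}$ and $\Delta\PMY{[\PrP,i)} = \alpha(1/\Val{\PrP} - 1/\Val{\ell_s})$ are exactly the bookkeeping the paper is suppressing. The comparison for frozen indices ($\ell_s > i$ implies the right-hand side grows at least as fast as the left) is the one non-trivial check, and you handle it correctly, including the corner case where $I$ has no element $\geq i$. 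Two minor points worth tightening if you were to write this up in full: (1) the lemma's first inequality should be read as applying only to $i > \PrP$, as you implicitly do, since for $i \leq \PrP$ the denominator is non-positive; (2) the claim that (b) holds ``at termination'' follows because $I$ only shrinks across iterations, so the invariant established after the last update applies to the final $I$ as well.
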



The key property of the algorithm \algoname{Construct-One-Signal} is
that it maximizes all tail sums \TS{i}:

\begin{lemma} \label{lem:maxtail}
Let $\OSVec['] \leq \RPVec$ be any signal.
If \PrP is the chosen price by the seller under \OSVec['], i.e.,
$\Val{\PrP} \cdot \TS[']{\PrP} \geq \Val{i} \cdot \TS[']{i}$ for all $i$,
then $\TS{i} \geq \TS[']{i}$ for all $i \geq \PrP$.
\end{lemma}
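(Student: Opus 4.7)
The plan is to combine a structural analysis of the algorithm's terminal state with a reverse induction. I will first characterize the form of $\OSVec$ at termination, use this to establish the key inequality $\TS{\PrP}\ge\TS[']{\PrP}$, and then extend this inequality to every $i\ge\PrP$.

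Let $I^*$ denote the final value of the set $I$ when \algoname{Construct-One-Signal} halts. The loop terminates exactly when $\PrP\notin I$, and every index leaves $I$ precisely at the moment $\OS{i}$ reaches $\RP{i}$ (indices with $\RP{i}=0$ never enter $I$ to begin with). Consequently $\OS{i}=\RP{i}$ for every $i\in[\PrP,n]\setminus I^*$, while Lemma~\ref{lem:revenue-tight} provides the tightness $\Val{i}\,\TS{i}=\Val{\PrP}\,\TS{\PrP}$ for every $i\in I^*$.

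The heart of the argument is the inequality $\TS{\PrP}\ge\TS[']{\PrP}$. When $I^*=\emptyset$ it is immediate, since then $\OS{i}=\RP{i}$ for all $i\ge\PrP$, and hence $\TS[']{\PrP}\le\RP{[\PrP,n]}=\TS{\PrP}$. Otherwise, let $j_1=\min I^*$. Since every $i$ with $\PrP\le i<j_1$ lies outside $I^*$, splitting the tail sum at $j_1$ and invoking tightness at $j_1$ yields the closed-form identity
$$\TS{\PrP}\;=\;\RP{[\PrP,j_1)}+\tfrac{\Val{\PrP}}{\Val{j_1}}\,\TS{\PrP}\quad\Longrightarrow\quad \TS{\PrP}\;=\;\tfrac{\Val{j_1}\,\RP{[\PrP,j_1)}}{\Val{j_1}-\Val{\PrP}}.$$
Performing the analogous decomposition for $\OSVec[']$---with the inequalities $\OS[']{i}\le\RP{i}$ and the revenue constraint $\TS[']{j_1}\le\tfrac{\Val{\PrP}}{\Val{j_1}}\TS[']{\PrP}$ replacing the corresponding equalities---produces the matching upper bound
$$\TS[']{\PrP}\;\le\;\RP{[\PrP,j_1)}+\tfrac{\Val{\PrP}}{\Val{j_1}}\,\TS[']{\PrP}\quad\Longrightarrow\quad \TS[']{\PrP}\;\le\;\tfrac{\Val{j_1}\,\RP{[\PrP,j_1)}}{\Val{j_1}-\Val{\PrP}}\;=\;\TS{\PrP}.$$

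Finally, I propagate the inequality to every $i\ge\PrP$ by reverse induction from $i=n+1$ (where both tail sums vanish) down to $i=\PrP$. In the inductive step, if $i\notin I^*$, then $\OS{i}=\RP{i}\ge\OS[']{i}$, and combining with $\TS{i+1}\ge\TS[']{i+1}$ gives $\TS{i}\ge\TS[']{i}$; if $i\in I^*$, then tightness and the revenue constraint on $\OSVec[']$ yield $\TS{i}=\tfrac{\Val{\PrP}}{\Val{i}}\TS{\PrP}\ge\tfrac{\Val{\PrP}}{\Val{i}}\TS[']{\PrP}\ge\TS[']{i}$. The main obstacle is the middle step: a na\"ive reverse induction alone breaks at indices in $I^*$, because the bound on $\TS{i}$ there is mediated by $\TS{\PrP}$, which has not yet been established by the induction. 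Identifying $j_1=\min I^*$ as the binding constraint and turning the telescoping identity for $\TS{\PrP}$ into a matching upper bound on every feasible $\TS[']{\PrP}$ is the crucial structural insight that unlocks the proof.
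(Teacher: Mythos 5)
Your proof is correct, and it takes a genuinely different route from the paper's. Both arguments rest on the same two structural facts --- Lemma~\ref{lem:revenue-tight} gives $\Val{i}\TS{i} = \Val{\PrP}\TS{\PrP}$ for $i \in I^*$, and $\OS{i}=\RP{i}$ for $i \geq \PrP$ outside $I^*$ --- but the paper argues by \emph{contradiction}: it fixes a minimal $i$ with $\TS{i} < \TS[']{i}$, takes the minimal $j \geq i$ with $\OS{j} < \RP{j}$, and then derives a contradiction by splitting $\TS{i}$ at $j$. That argument implicitly uses $\TS{\PrP} \geq \TS[']{\PrP}$ (to conclude $\frac{\Val{\PrP}\TS{\PrP}}{\Val{j}} \geq \TS[']{j}$), which follows from minimality of $i$ only when $i > \PrP$; the boundary case $i = \PrP$ requires the same closed-form computation you perform in your second step. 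Your version makes this explicit and is structurally cleaner: you establish $\TS{\PrP}\ge\TS[']{\PrP}$ first as a stand-alone fact by solving the telescoping identity at $j_1 = \min I^*$, and then propagate by reverse induction, appealing to that fact directly whenever $i \in I^*$. The tradeoff is that your proof has two phases rather than a single minimal-counterexample argument, but in exchange it surfaces exactly where the global comparison at $\PrP$ is needed and eliminates the edge-case subtlety --- a worthwhile clarification.
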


\begin{emptyproof}
For contradiction, assume that $\TS{i} < \TS[']{i}$ for some $i$, and
fix the minimum such $i$.
Let $j \geq i$ be the minimum index with $\OS{j} < \RP{j}$; note that
such an index must exist, as otherwise, 
$\TS{i} = \sum_{j \geq i} \OS{j} = \sum_{j \geq i} \RP{j} 
\geq \sum_{j \geq i} \OS[']{j} = \TS[']{i}$.
As a result, the index $j \in I$ at the termination of the algorithm,
implying by Lemma~\ref{lem:revenue-tight} that 
$\Val{\PrP} \cdot \TS{\PrP} = \Val{j} \cdot \TS{j}$.
Now, we obtain the following contradiction:
\begin{align*}
\TS{i} & = \TS{j} + \sum_{i'=i}^{j-1} \OS{i'}
\; = \; \frac{\Val{\PrP} \cdot \TS{\PrP}}{\Val{j}} + \sum_{i'=i}^{j-1} \RP{i'}
\; \geq \; \TS[']{j} + \sum_{i'=i}^{j-1} \OS[']{i'}
\; = \; \TS[']{i}. \hfill \QED
\end{align*}
\end{emptyproof}

To construct a complete signaling scheme, we invoke 
the algorithm \algoname{Construct-One-Signal} repeatedly, constructing the
signals one at a time. The important part here is that the signals
must be constructed in decreasing order of the target price.
The intuitive reason is that the inclusion of high-value buyers in a
signal makes a higher price more attractive to the seller, thus posing
additional constraints on the required probability mass of
lower-valued buyers that must be included. Thus, it is always better
to include as many high-valued buyers in the high-priced signals as
possible, and this is accomplished by constructing those signals first.
Hence, we assume that the signals are sorted in descending order of
their prices.

\begin{algorithm}[htb]
\caption{Construct-Signaling-Scheme 
  $(\VVec, \PVec, S = \SET{\SigI{1} > \SigI{2} > \ldots > \SigI{\NUMSIG}})$}
\begin{algorithmic}
 \STATE $\RPVec[1] \gets \PVec$
 \FOR {$\SIG \gets 1$ to \NUMSIG}
    \STATE $\SigC{\SIG} \gets \algoname{Construct-One-Signal}(\RPVec[\SIG], \VVec, \SigI{\SIG})$
        \COMMENT{\SigC{\SIG} is column \SIG of the signaling scheme \SigScheme's matrix.}
    \STATE $\RPVec[\SIG+1] \gets \RPVec[\SIG] - \SigC{\SIG}$
 \ENDFOR
 \RETURN \SigScheme
\end{algorithmic}
\end{algorithm}

\begin{theorem} \label{thm:greedy-optimal}
The algorithm \algoname{Construct-Signaling-Scheme} solves 
the linear program \eqref{lp:welfare-maximization} optimally.
\end{theorem}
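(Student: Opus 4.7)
The plan is to prove optimality of greedy by induction on the number of signals $\NUMSIG$, with Lemma~\ref{lem:maxtail} as the core tool. The starting observation is that, by Abel summation, each signal's contribution to the LP objective can be written as
\[
\sum_{i \ge \SigI{\sigma}} \Val{i} \SigP{i}{\sigma}
\;=\; \Val{\SigI{\sigma}} \sum_{j \ge \SigI{\sigma}} \SigP{j}{\sigma}
+ \sum_{i > \SigI{\sigma}} \bigl(\Val{i} - \Val{i-1}\bigr) \sum_{j \ge i} \SigP{j}{\sigma},
\]
exhibiting the objective as a non-negative linear combination of per-signal tail sums. For the base case $\NUMSIG = 1$, Lemma~\ref{lem:maxtail} simultaneously maximizes every tail sum of the single signal and hence maximizes the weighted sum. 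I would also record a useful normalization: one may WLOG restrict to feasible solutions with $\SigP{i}{\sigma} = 0$ for $i < \SigI{\sigma}$, since sub-price mass in signal $\sigma$ contributes nothing to its welfare and only tightens its IC constraints, so moving it to the garbage is both feasibility- and welfare-preserving.

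For the inductive step, assuming the theorem for $\NUMSIG - 1$ signals, the key is the following exchange lemma: for any LP-feasible $\SigScheme^{\star}$, there exists a feasible $\hat{\SigScheme}$ whose first column equals greedy's first signal $\SigC{1}$ (computed with residual $\PVec$) and with $W(\hat{\SigScheme}) \ge W(\SigScheme^{\star})$. Once this is established, the remaining columns of $\hat{\SigScheme}$ form a feasible solution to the residual LP on $\PVec - \SigC{1}$ with price points $\SigI{2} > \cdots > \SigI{\NUMSIG}$, and the inductive hypothesis applied to greedy's execution on this residual concludes the proof. Writing $\Delta_i = \SigP{i}{1} - \SigP[\star]{i}{1}$, Lemma~\ref{lem:maxtail} applied with $\RPVec = \PVec$ yields the ``Hall-type'' inequalities $\sum_{j \ge i} \Delta_j \ge 0$ for every $i \ge \SigI{1}$, while the normalization gives $\Delta_i = 0$ for $i < \SigI{1}$.

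The main obstacle is proving the exchange lemma itself. The key structural observation is that welfare equals $\sum_i \Val{i} \PVal{i} - \sum_i \Val{i} g_i$, with $g_i$ the garbage mass of type $i$, so the lemma reduces to producing $\hat{\SigScheme}$ with $\sum_i \Val{i} \hat{g}_i \le \sum_i \Val{i} g_i^{\star}$. Every type $i \ge \SigI{1}$ being redistributed satisfies $i > \SigI{\sigma}$ for all $\sigma \ge 2$, so it is served in any non-garbage signal, which makes transfers between non-garbage signals welfare-neutral; only transfers into or out of the garbage affect welfare. I would construct $\hat{\SigScheme}$ by rearranging type by type in decreasing order of $i$: for $i = \NUMVAL, \NUMVAL - 1, \ldots, \SigI{1}$, install the target value $\SigP{i}{1}$ in signal 1 and absorb the net change $-\Delta_i$ across the garbage and signals $\sigma \ge 2$, always preferring non-garbage-to-non-garbage transfers over any movement into the garbage. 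Processing types in decreasing order ensures that any decrement at type $i$ in a signal $\sigma \ge 2$ occurs only after all of its larger-index entries have already been frozen, so the IC constraint of signal $\sigma$ at any $k > i$ is automatically preserved, and decrements at $k \le i$ only ever reduce both sides of IC by matching amounts. Finally, an Abel-summation calculation applied to $\sum_{j \ge i} \Delta_j \ge 0$ with non-decreasing weights $\Val{i}$ yields $\sum_i \Val{i} \Delta_i \ge 0$, which is exactly the accounting needed to conclude $\sum_i \Val{i} \hat{g}_i \le \sum_i \Val{i} g_i^{\star}$ and complete the exchange lemma.
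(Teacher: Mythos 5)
Your overall framework---induction on $\NUMSIG$, a ``first-column exchange'' lemma, with Lemma~\ref{lem:maxtail} providing tail-sum dominance for the base case and for the exchange---is a clean reorganization of the paper's argument, and the welfare accounting via garbage mass ($W = \sum_i \Val{i}\PVal{i} - \sum_i \Val{i} g_i$ after the normalization) is correct. However, the proof of the exchange lemma has a genuine gap, and it is precisely the step that the paper's proof spends nearly all of its effort on.

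The gap is the feasibility of the rearrangement. Your IC argument (``processing types in decreasing order ... decrements ... only ever reduce both sides of IC'') covers only \emph{decrements} of $\SigP{i}{\sigma}$ for $\sigma \geq 2$, which are indeed always IC-safe. But when $\Delta_i < 0$ (greedy allocates \emph{less} of type $i$ to signal $1$ than $\SigScheme^\star$ does), your construction ``prefers non-garbage-to-non-garbage transfers,'' which means \emph{incrementing} some $\hat{x}_{i,\sigma}$ for $\sigma \geq 2$. An increment at type $i$ in signal $\sigma$ raises the tail sum $\hat{x}_{[k,n],\sigma}$ for every $k \leq i$, which increases the right-hand side of the revenue constraint for signal $\sigma$ at price $k$ by $\Val{k} \cdot (\text{increment})$ while increasing the left-hand side only by $\Val{\SigI{\sigma}} \cdot (\text{increment}) \leq \Val{k} \cdot (\text{increment})$. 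This can strictly violate IC for signal $\sigma$, and nothing in your argument rules it out. (Note that $\Delta_i < 0$ genuinely occurs: Lemma~\ref{lem:maxtail} only gives tail-sum dominance $\sum_{j\geq i}\Delta_j \geq 0$, not pointwise dominance.)

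This is not a technicality you can side-step by avoiding the increments. If instead you absorb the excess $-\Delta_i > 0$ into the garbage at those types, and handle the deficit at types with $\Delta_i > g_i^\star$ by decrementing non-garbage signals (the only remaining safe move), a direct calculation shows that the served-welfare change is
$\sum_i \Val{i}\Delta_i - \sum_i \Val{i}\max(0,\Delta_i - g_i^\star)
\geq \sum_{\Delta_i < 0}\Val{i}\Delta_i$,
and this lower bound is negative and can be attained (e.g., when $g_i^\star = 0$ at every type with $\Delta_i > 0$). So the Abel-summation bound $\sum_i \Val{i}\Delta_i \geq 0$ does not carry the accounting by itself; you truly need the IC-risky increments, together with a proof that they can always be realized. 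The paper resolves exactly this tension: in Case 1 of its proof it performs a \emph{paired two-type swap} within a non-garbage signal (decrement a higher type, increment a lower type by the same amount), which is IC-safe because it only moves tail mass ``downward,'' and in Case 2 it re-runs \algoname{Construct-One-Signal} to produce a full IC-respecting redistribution vector $\vc{d}$. Your proof needs an argument of comparable strength at the point where you reassign the excess at types with $\Delta_i < 0$.
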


The proof of this theorem proceeds by showing that an optimal solution
\SigScheme[*] for the linear program \eqref{lp:welfare-maximization}
can be (gradually) transformed into the solution \SigScheme
constructed by the algorithm \algoname{Construct-Signaling-Scheme}
without decreasing its solution quality.
It is technically fairly involved, and given in
Appendix~\ref{sec:greedy-proof}.

\section{Submodularity of \PWF}

\label{sec:submodularity}

We prove that the \pwf objective function
\PWelfare{S} is a submodular function of the set $S$ of chosen price
points. (There is also a garbage signal $\GSIG \notin S$.)
Let $S = \SET{\SigI{1} > \SigI{2} > \ldots > \SigI{m}}$
be $k$ price points. 
\PWelfare{S} is defined as the optimum
solution to the LP~\eqref{lp:welfare-maximization} with the given
price points. We show the following:

\begin{theorem} \label{thm:submodularity}
If $S \subseteq T$ and $k \notin T$, then 
$\PWelfare{T \cup \SET{k}} - \PWelfare{T} 
\leq \PWelfare{S \cup \SET{k}} - \PWelfare{S}$.
\end{theorem}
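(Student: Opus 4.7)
The approach leverages the greedy algorithm of Theorem~\ref{thm:greedy-optimal} to express the marginal gain from adding a price $k$ as a function of the ``state'' that the greedy reaches just before processing $k$. For any price set $A$ with $k \notin A$, let $\vc{r}^A_k$ denote the residual probability vector after \algoname{Construct-Signaling-Scheme} processes exactly those prices in $A$ that exceed $k$, and let $A^< = \{p \in A : p < k\}$ denote the remaining prices in $A$ below $k$. Since the greedy's output from the $k$-step onward depends only on the pair $(\vc{r}^A_k, A^<)$, the marginal gain $\PWelfare{A \cup \{k\}} - \PWelfare{A}$ can be written as a function $\Phi(\vc{r}^A_k, A^<)$ of this state. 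Submodularity will follow from monotonicity of the state (in $A$) together with monotonicity of $\Phi$ (in its arguments).

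The first ingredient is \emph{state monotonicity}: if $S \subseteq T$ with $k \notin T$, then $\vc{r}^S_k \geq \vc{r}^T_k$ pointwise (with equality on types $i \leq k$, since signals at prices exceeding $k$ cannot consume those types), and trivially $S^< \subseteq T^<$. The residual inequality I would prove by induction on $|T \setminus S|$: each inductive step inserts one new price $p > k$ into the greedy sequence, introducing an extra $p$-signal that consumes non-negative probability mass, while Lemma~\ref{lem:maxtail} is used to argue that the aggregate mass consumed at each type by the subsequent signals does not decrease.

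The second ingredient is \emph{marginal monotonicity}: $\Phi(\vc{r}, B)$ is non-decreasing in $\vc{r}$ (in the pointwise order) and non-increasing in $B$ (under set inclusion). Combining this with state monotonicity, the containments $\vc{r}^S_k \geq \vc{r}^T_k$ and $S^< \subseteq T^<$ both push $\Phi$ in the same direction, yielding $\Phi(\vc{r}^S_k, S^<) \geq \Phi(\vc{r}^T_k, T^<)$, which is exactly the desired submodular inequality.

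I expect the main obstacle to be marginal monotonicity, because the marginal gain is not merely the welfare of the $k$-signal in isolation: inserting the $k$-signal shrinks the residual available to the downstream signals in $B$, so a competing ``downstream loss'' must be tracked. For monotonicity in $\vc{r}$, my plan is a coupling argument: when $\vc{r}$ grows, the extra mass is absorbed by the greedy's tail-sum-maximal $k$-signal at least as efficiently as it would be by the downstream signals alone (by Lemma~\ref{lem:maxtail}), so the marginal value of $k$ can only grow. For monotonicity in $B$, the plan is symmetric: a new downstream price inserts a signal that competes with $k$ for the high-value types, and Lemma~\ref{lem:maxtail} applied to the $k$-signal shows this competition can only reduce $k$'s marginal contribution. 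Both arguments require a careful pairwise comparison of greedy trajectories, which I anticipate to be the most delicate and technical part of the proof.
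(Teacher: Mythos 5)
Your high-level decomposition — factor the marginal gain through the greedy (Theorem~\ref{thm:greedy-optimal}), express it as a function of the ``state'' just before $k$ is processed, and then argue monotonicity of the state and of the marginal map — is a reasonable framing, and the ``state monotonicity'' step matches the paper's Lemma~\ref{lem:pointwise-dominance}. But there is a real gap at the ``marginal monotonicity'' step, and it is not a technicality that you simply haven't written out: the lemma you propose is where essentially all of the paper's work lives (Appendix~\ref{sec:submodularity-proofs}), and the form you state it in is in fact false.

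Concretely, the claim that $\Phi(\vc{r}, B)$ is non-decreasing in $\vc{r}$ under arbitrary pointwise increases fails. Take valuations $(\Val{1},\Val{2},\Val{3})=(1,2,4)$, $k=2$, $B=\{1\}$. With $\vc{r}=(0,1,2)$ the $k$-signal has welfare $6$ and the residual $(0,0,1)$ cannot support the $1$-signal (it never enters the while loop), while without the $k$-signal the $1$-signal is also empty; so $\Phi=6$. With $\vc{r}=(0.1,1,2)$ the $k$-signal is unchanged and the $1$-signal now contributes $\approx 0.233$ afterward, whereas without $k$ the $1$-signal alone gets welfare $0.4$; so $\Phi\approx 5.83 < 6$. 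The increase was in $r_1$ with $1<k$, which you note does not arise in the application (since $\vc{r}^S_k$ and $\vc{r}^T_k$ agree on types $\le k$), but it shows the lemma is not true as stated and would have to be carefully re-scoped to changes in coordinates $i>k$ only. Even after that re-scoping, the claim is not a consequence of Lemma~\ref{lem:maxtail} alone: that lemma maximizes tail sums for \emph{one} signal, whereas the downstream loss that you identify yourself is a multi-signal interaction, and tracking it requires precisely the kind of bookkeeping the paper does across Lemmas~\ref{lem:compensate-for-loss}--\ref{lem:prefix-sums-imply-combination}.

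The paper does not try to prove that the marginal map is a monotone function of the state. Instead it interpolates continuously via a cap $B$ on the $k$-signal's tail mass, and proves the sharper statement (Lemma~\ref{lem:small-increase-welfare}) that every infinitesimal increment of that cap yields at least as much welfare under $S$ as under $T$; summing over the cap from $0$ to full recovers the theorem. That parameterization lets the argument localize which revenue constraints flip, maintain seller indifference exactly, and compare the two executions term by term (via a prefix-sum rearrangement lemma). Your decomposition into ``monotone in $\vc{r}$'' and ``monotone in $B$'' would, if pushed through, amount to proving the theorem in two special cases ($T\setminus S$ entirely above $k$, and entirely below $k$) and chaining them; but each of those special cases already requires the full machinery, so the decomposition does not actually reduce the difficulty, and the proposal stops exactly where the hard part begins.
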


Theorem~\ref{thm:greedy-optimal} states that the greedy algorithm
solves the problem for $S \cup \SET{k}$ optimally, but the effects of
adding $k$ are subtle.
It is fairly easy to analyze what happens in the iteration when $k$
itself is added: that the welfare increase is larger for $S$ than for
$T$ is easily seen by a simple monotonicity argument, captured by
Lemma~\ref{lem:pointwise-dominance}.
However, the addition of $k$ has ``downstream'' effects. 
The construction of subsequent signals with price points 
$\SigI{\SIG} < k$ will now face different residual probabilities, and
the resulting reductions in those signals need to be carefully
balanced against the gains from the signal with price point $k$.

Part of the complexity arises from the rather complex construction
of the signal for price point $k$ itself. It is captured by the
algorithm \algoname{Construct-One-Signal}, which itself runs through
iterations in which different sets $I$ of indices have their
probabilities increased. In order to eliminate this source of
complexity, we will think of adding the signal with price point $k$
``gradually.'' Specifically, we consider the execution of 
\algoname{Construct-Signaling-Scheme} in which the execution of
\algoname{Construct-One-Signal} for the signal with price point $k$
may be terminated prematurely. An upper bound $B$ on the tail
probability is specified, and \algoname{Construct-One-Signal} is
stopped when $\sum_{i \geq k} \OS{i} = B$.
After signal $k$ is constructed in this modified way, subsequent
signals will be constructed normally by
\algoname{Construct-Signaling-Scheme}. 

A modification of the proof of Theorem~\ref{thm:greedy-optimal} shows
that this modified algorithm optimally solves the 
LP~\eqref{lp:welfare-maximization} with the added constraint that
$\sum_{i \geq k} \SigP{i}{\SIG[k]} \leq B$, where \SIG[k] denotes the
signal whose price point is $k$.

We write \PWelfare[(k,B)]{S} for the
\pwf achieved by the optimum solution with a set of
signal price points $S \cup \SET{k}$, and the constraint that the 
probability mass for signal $k$ is at most $B$.
Our main lemma is: 

\begin{lemma} \label{lem:small-increase-welfare}
If $S \subseteq T$, then for any $k, B, \epsilon$: 
$\PWelfare[(k,B+\epsilon)]{T} - \PWelfare[(k,B)]{T}
\leq \PWelfare[(k,B+\epsilon)]{S} - \PWelfare[(k,B)]{S}$.
\end{lemma}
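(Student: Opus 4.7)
The plan is to combine the optimality of the modified greedy algorithm noted just above the lemma statement with an inductive argument based on Lemma~\ref{lem:pointwise-dominance}. Both sides of the inequality are realized by concrete executions of Construct-Signaling-Scheme in which the call to Construct-One-Signal for signal $k$ is terminated once the tail probability reaches the given cap. In both runs the preceding calls (for price points greater than $k$) and the subsequent calls (for price points less than $k$) are as prescribed by the greedy algorithm; only the cap on signal $k$ varies, so the two quantities $\PWelfare[(k,B+\epsilon)]{T}$ and $\PWelfare[(k,B)]{T}$ (and likewise for $S$) come from executions that share every other ingredient.

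First I would reduce to an infinitesimal version. Because $\PWelfare[(k,B)]{T}$ is the value of a linear program parameterized by the right-hand side $B$ of a single $\leq$ constraint, it is a concave, piecewise-linear function of $B$, with breakpoints corresponding to discrete changes in the active sets and rates $\Rate{i}$ used inside Construct-One-Signal (both for signal $k$ itself and for the downstream calls whose residual input shifts with $B$). It therefore suffices to prove the pointwise inequality $\frac{d^+}{dB}\PWelfare[(k,B)]{T} \leq \frac{d^+}{dB}\PWelfare[(k,B)]{S}$ at every $B$ and integrate over $[B,B+\epsilon]$.

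At a fixed $B$ I would decompose the right derivative as (a) the marginal welfare absorbed into signal $k$ itself, which is a linear form in the values $\Val{i}$ weighted by the rates $\Rate{i}$ output by Construct-One-Signal at its termination state, and (b) the marginal welfare lost from the subsequent signals with price points less than $k$, whose input residuals are reduced by exactly the vector of increments consumed by signal $k$. Lemma~\ref{lem:pointwise-dominance} applied to the common prefix of greedy steps for $S$ and $T$ guarantees that the residual just before signal $k$ under $T$ is componentwise dominated by the corresponding residual under $S$, which in turn forces the active index set $I^T \subseteq I^S$. For part (a), this inclusion together with the explicit formulas for the rates $\Rate{i}$ in Construct-One-Signal gives the desired comparison directly. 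For part (b), I would induct on the number of price points in $T$ that are strictly less than $k$, invoking the lemma itself on the post-signal-$k$ residual to propagate the diminishing-returns property through the downstream portion of the scheme.

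The main obstacle is (b): the residual decrement caused by nudging $B$ upward cascades through every subsequent call to Construct-One-Signal, where active sets and rates can change discretely across breakpoints and where the calls for $T$ see a strictly smaller residual than those for $S$. The cleanest way to manage this is to strengthen the induction by proving in tandem an auxiliary statement asserting that the total welfare contributed by all signals with price points below $k$, viewed as a function of the input residual, satisfies the same diminishing-returns property as the lemma. With the base case (no price points below $k$) reducing to the straightforward comparison of rates in part (a), and with Lemma~\ref{lem:pointwise-dominance} providing the componentwise comparison of active sets in the inductive step, the combined induction closes.
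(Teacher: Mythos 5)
Your high-level strategy — pass to a small-$\epsilon$ (or right-derivative) version using piecewise linearity and concavity of the LP value in $B$, then integrate — matches the paper's choice of a sufficiently small $\epsilon$ and subsequent summation; and the use of Lemma~\ref{lem:pointwise-dominance} to get the componentwise residual comparison and the resulting containment of active sets is also exactly the paper's Lemma~\ref{lem:subsequence}. Your analysis of part (a) is essentially sound: once one knows the active set for signal $k$ under $T$ is contained in the one under $S$, a rate calculation (using the explicit $\Rate{i}$'s, which telescope) shows the marginal welfare absorbed by signal $k$ is smaller for $T$.

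The gap is in part (b), and it is not a cosmetic one. You want to compare the downstream marginal welfare losses (b)$^T$ and (b)$^S$ \emph{separately} from (a), but this separate comparison is not justified and I do not believe it is true in general. The decrement vectors applied to the downstream residuals are \emph{different} for $S$ and $T$: signal $k$ under $S$ has a strictly larger active set, so it spreads its extra $\epsilon$ of tail mass over more and higher-valued types, potentially removing more welfare-dense mass from the downstream pool; meanwhile $T$ has additional downstream signals that can also lose welfare. These two effects pull in opposite directions, and the quantity one actually needs to control is the \emph{combined} change, not each piece separately. The paper never decomposes (a) from (b): its Lemma~\ref{lem:one-signal-comparison} bounds \emph{partial sums} of the changes $\dS{\SIG}{\cdot}$ over all signals $\SIG \geq \SIG[k]$ (signal $k$ together with everything downstream) against the corresponding partial sums for $T$, and then Lemma~\ref{lem:prefix-sums-imply-combination} (an Abel-summation device) upgrades prefix-sum dominance to the weighted comparison that expresses the welfare difference. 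The whole point of those two lemmas is that signal $k$'s gain and the downstream losses must be controlled jointly.

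Your proposed fix — ``strengthen the induction by proving in tandem an auxiliary statement asserting that the total welfare contributed by all signals with price points below $k$, viewed as a function of the input residual, satisfies the same diminishing-returns property as the lemma'' — does not resolve this. The lemma is about a scalar cap $B$ on a single signal's mass, whereas the downstream perturbation is a full residual-vector change that differs between $S$ and $T$ both in magnitude and in direction; ``the same diminishing-returns property'' has no immediate meaning in that setting. Any workable auxiliary statement would have to simultaneously handle (i) different downstream price-point sets $S_{<k} \subseteq T_{<k}$, (ii) different starting residuals, and (iii) different perturbation vectors, which is a genuinely three-way comparison and does not follow by ``invoking the lemma itself.'' You would end up having to prove something with the content of Lemma~\ref{lem:one-signal-comparison} (tracking the proportionality constants from Lemma~\ref{lem:compensate-for-loss} across signals and free indices) in any case, so the decomposition buys nothing and the proposal as written leaves the hard part unaddressed.
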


Lemma~\ref{lem:small-increase-welfare} implies submodularity quite
directly, as follows.

\begin{extraproof}{Theorem~\ref{thm:submodularity}}
Let \PMS and \PMT be the optimal signaling schemes with
price point sets $S \cup \SET{k}$ and $T \cup \SET{k}$, respectively. 
By Lemma~\ref{lem:pointwise-dominance}, when constructing $\SIGS[k]$ and
$\SIGT[k]$, the residual probability for \SIGS[k] is more than that
of \SIGT[k]; therefore, 
$\pmT{\SIGT[k]}{[k,\NUMVAL]} \leq \pmS{\SIGS[k]}{[k,\NUMVAL]}$ 
by Lemma~\ref{lem:maxtail}. 

Consider gradually increasing $B$ from $0$ to 
\pmT{\SIGT[k]}{[k,\NUMVAL]} in increments of (varying) $\epsilon$, as
outlined above. Subsequently, continue increasing $B$ for \PMS only.
By adding up the inequality from Lemma~\ref{lem:small-increase-welfare}
for each such step, and noting that the subsequent increases of $B$
for \PMS can only further increase the welfare of \PMS,
we obtain that 
$\PWelfare{S\cup\SET{k}} - \PWelfare{S} \geq
\PWelfare{T\cup\SET{k}} - \PWelfare{T}$. 
\end{extraproof}

Because the objective function is submodular (and monotone by
Lemma~\ref{lem:pointwise-dominance}), the greedy algorithm is known
\cite{nemhauser:wolsey:fisher} to give a $(1-1/e)$-approximation for
the problem of maximizing \PWelfare{S}. 
Proposition~\ref{prop:welfare-sanitized} then implies
that the same greedy algorithm gives a 
$\frac{\NUMSIG-1}{\NUMSIG} \cdot (1-1/e)$ approximation for the
objective of maximizing the social welfare \Welfare{S}, proving
Theorem~\ref{thm:welfare-intro}.

\smallskip

The proof of Lemma \ref{lem:small-increase-welfare} is
technically quite involved, and given in 
Appendix~\ref{sec:submodularity-proofs}.
The idea is to first prove it for sufficiently small
$\epsilon$, which allows us to couple the executions tightly. 

In particular, by comparing the solutions to the linear
program~\eqref{lp:welfare-maximization}, we can ensure that any
constraint that becomes tight in the solution for set $T$ with
bound $B + \epsilon$, but is not tight with bound $B$ (and similarly
for $S$) would not have become tight for any $\epsilon' < \epsilon$.
This will localize the changes, and maintain the revenue indifference for the
seller.
By summing over all such iterations (there will only be finitely
many, because $\epsilon$ is chosen so that at least one more
constraint becomes tight), we eventually prove the lemma for all
$\epsilon$.

In the analysis, we are interested in four different signaling
schemes, constructed by \algoname{Construct-Signaling-Scheme} when run
with different sets of price points and upper bounds $B$.
We will assume here that $k \in S \subseteq T$.
Specifically we define:

\begin{itemize}
\item $\PMS = (\pmS{\SIG}{i})_{\SIG,i}$: the probability mass of
  type $i$ assigned to signal \SIG when the algorithm is run with
  price point set $S$ and an upper bound of $B$.
\item $\PMSP = (\pmSp{\SIG}{i})_{\SIG,i}$: probability mass for price
  point set $S$ and an upper bound of $B+\epsilon$.
\item $\PMT = (\pmT{\SIG}{i})_{\SIG,i}$: probability mass for price
  point set $T$ and an upper bound of $B$.
\item $\PMTP = (\pmTp{\SIG}{i})_{\SIG,i}$: probability mass for price
  point set $T$ and an upper bound of $B+\epsilon$.
\end{itemize}

A first step of the proof, which is both illustrative of the types of
arguments made repeatedly and implies monotonicity of the
\pwf \PWelfare{S} is the following lemma.
It shows that if more probability mass can be allocated for one
signal, then the effect can never be a decrease in the total allocated
probability mass for any type of buyer.

\begin{lemma}[Monotonicity] \label{lem:pointwise-dominance}
Let \PMS, \PMSP be optimal signaling schemes for price point set $S$,
with signal set \SIGSETS.
Then, for any $k \in S, B, \epsilon$, and any buyer type $i$, 
we have that $\opmS{\SIGSETS}{i} \leq \opmSp{\SIGSETS}{i}$.

As a corollary, for any sets of price points $S \subseteq T$,
and with \SIGSETS and \SIGSETT as the respective sets of signals, 
for any $B$ and any buyer type $i$, we have 
$\opmS{\SIGSETS}{i} \leq \opmT{\SIGSETT}{i}$.
\end{lemma}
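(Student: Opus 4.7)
The plan is to invoke Theorem~\ref{thm:greedy-optimal} so that both \PMS and \PMSP can be viewed as outputs of \algoname{Construct-Signaling-Scheme} run with the same price set $S$, differing only in the tail-sum bound ($B$ versus $B+\epsilon$) imposed on the signal \SIGS[k] whose price point is $k$. Since the algorithm processes signals in decreasing order of price, every signal $\sigma$ with $\SigIS{\sigma} > k$ is constructed from an identical residual in the two executions and hence contributes identically to $\opmS{\sigma}{i}$ and $\opmSp{\sigma}{i}$ for every type $i$. At signal \SIGS[k] itself, the inner loop of \algoname{Construct-One-Signal} only ever raises each coordinate $\OS{i}$ continuously upward from zero; halting later (at tail sum $B+\epsilon$ rather than $B$) therefore yields the pointwise inequality $\opmSp{\SIGS[k]}{i} \geq \opmS{\SIGS[k]}{i}$. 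Define the per-type deficits $\delta_i := \opmSp{\SIGS[k]}{i} - \opmS{\SIGS[k]}{i} \geq 0$; these account for the entire difference in residuals passed forward to the subsequent signals (those with $\SigIS{\sigma} < k$).

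The core step is then to control these subsequent signals by a coupled induction along the greedy order. I would prove that the cumulative loss in per-type assignment across subsequent signals is bounded by the initial residual deficit, namely $\sum_{\sigma\,:\,\SigIS{\sigma} < k}\bigl(\opmS{\sigma}{i}-\opmSp{\sigma}{i}\bigr) \leq \delta_i$ for every $i$. Combined with the equality on signals of higher price and the gain $\delta_i$ at \SIGS[k], this telescopes into $\opmSp{\SIGSETS}{i} \geq \opmS{\SIGSETS}{i}$. For the corollary, I would iterate the first part: fix an arbitrary ordering $k_1,\ldots,k_m$ of $T\setminus S$ and model the passage from the scheme for $S \cup \{k_1,\ldots,k_{j-1}\}$ to that for $S \cup \{k_1,\ldots,k_j\}$ as raising the bound on the $k_j$-signal from $0$ (which is equivalent to the signal being absent) to its unconstrained value. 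Each step preserves the per-type totals by the main lemma, and these inequalities chain to give the stated dominance.

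The hard part is the coupled induction for subsequent signals. A single \algoname{Construct-One-Signal} call is not pointwise $1$-Lipschitz in its residual: via the tightness relations of Lemma~\ref{lem:revenue-tight}, a reduction in the residual of one type can propagate through the revenue constraints and force a reduction in the assignment of other types by \emph{more} than the residual reduction of those types. I would therefore couple the two executions by tracking the sequence of binding events ($i$ leaving $I$) in lock-step, using the tail-sum characterization of Lemma~\ref{lem:maxtail} as the invariant. The intuition is that any type-$i$ mass ``lost'' in a subsequent signal because of an earlier binding must be attributable to some $\delta_j$ that is itself spent at \SIGS[k]; once this charging scheme is set up, the intra-call deficits transfer across types but the cumulative per-type deficit summed across all subsequent signals is absorbed by the original $\delta_i$. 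Organizing the bookkeeping so that the invariant holds through every binding event, and in particular handling the case where the set $I$ differs between the two executions after a binding, is the main technical hurdle.
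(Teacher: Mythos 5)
Your high-level decomposition is correct and matches the paper's framing: split the signals into those constructed before $\SIGS[k]$ (identical in both executions, since the greedy algorithm is deterministic and the residuals agree), the signal $\SIGS[k]$ itself (pointwise non-decreasing in the bound, since \algoname{Construct-One-Signal} only ever raises coordinates and a larger bound just means halting later), and the subsequent signals (where the reduced residual can shrink allocations). Your reformulation of what remains — that the cumulative per-type shrinkage $\sum_{\sigma:\SigIS{\sigma}<k}\bigl(\opmS{\sigma}{i}-\opmSp{\sigma}{i}\bigr)$ must be at most $\delta_i$ — is also correct (in fact, it is an algebraic restatement of the lemma, not a reduction to something simpler). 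And you correctly identify the obstruction: a single \algoname{Construct-One-Signal} call is not pointwise $1$-Lipschitz in its residual, so per-signal bounds fail and only the cumulative statement can work.

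However, you then stop exactly where the lemma becomes nontrivial. You sketch a ``coupled induction tracking binding events in lock-step,'' with Lemma~\ref{lem:maxtail} as an invariant, and explicitly concede that ``organizing the bookkeeping so that the invariant holds through every binding event \ldots is the main technical hurdle.'' That hurdle is the whole content of the lemma, and you have not shown that your proposed charging scheme actually closes. The paper's proof does not attempt this lock-step coupling at all. Instead it inducts on initial segments $\SIGSET[']$ of the signal list, proving $\opmS{\SIGSET[']}{i}\le\opmSp{\SIGSET[']}{i}$ for all $i$, and in the crucial case $\SIG[']>\SIG[k]$ argues by contradiction: take the minimal $i$ with $\opmS{\SIGSET[']}{i}>\opmSp{\SIGSET[']}{i}$, use the tight-revenue characterization of Lemma~\ref{lem:revenue-tight} (which applies because $i$ still has residual under the larger bound, hence remains in $I$), combine with the strong induction hypothesis on prefix sums and the coefficient-monotonicity of $\Val{\SigI{\SIG}}/(\Val{i}-\Val{\SigI{\SIG}})$ via Lemma~\ref{lem:prefix-sums-imply-combination}, locate a minimal $j>i$ with the reversed inequality, and run the same machinery at $j$ to produce $\opmS{\SIGSET[']}{[j,\NUMVAL]}<\opmS{\SIGSET[']}{[j,\NUMVAL]}$, a contradiction. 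Your plan never invokes Lemma~\ref{lem:prefix-sums-imply-combination} or anything that plays its role, and you would also need the subtle fact that the induction must run on cumulative prefix totals $\opmS{\SIGSET[']}{i}$ rather than per-signal allocations (precisely because of the non-Lipschitz behaviour you noted). As written, the core case of your argument is a description of intent rather than a proof. The corollary argument (adding a price point $k_j$ by raising its bound from $0$ upward and chaining) is correct and matches the paper's ``as a corollary'' clause.
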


\begin{emptyproof}
Let \SIG[k] be the signal with price point $k$.
Let $\SIGSET['] \subseteq \SIGSET$ be an initial segment of \SIGSET, i.e., 
$\SIG['] < \SIG$ for every $\SIG['] \in \SIGSET['], \SIG \in \SIGSET \setminus \SIGSET[']$.
We prove by induction on $\SetCard{\SIGSET[']}$ that
$\opmS{\SIGSET[']}{i} \leq \opmSp{\SIGSET[']}{i}$.
The base case $\SIGSET['] = \emptyset$ is trivial.

For the induction step, let \SIG['] be the largest signal in $\SIGSET[']$
(i.e., with smallest price point $k'$), and
distinguish three cases.
If $\SIG['] < \SIG[k]$, i.e., before \SIG[k] is constructed, the
execution of \algoname{Construct-Signaling-Scheme} 
is the same for the bounds $B$ and $B+\epsilon$, so 
$\pmS{\SIG[']}{i} = \pmSp{\SIG[']}{i}$, and the induction follows directly.

When $\SIG['] = \SIG[k]$, $\pmS{\SIG[k]}{i}$ is the result of
\algoname{Construct-One-Signal} with an upper bound of $B$, and
$\pmSp{\SIG[k]}{i}$ is that with upper bound of $B+\epsilon$. 
Until the tail sum reaches $B$, the execution is the same, and
subsequently, values can only be raised further for the execution with
the bound $B+\epsilon$.
Thus, $\pmS{\SIG[k]}{i} \le \pmSp{\SIG[k]}{i}$, and
again, the induction step follows.

For $\SIG['] > \SIG[k]$, assume for contradiction that there exists an
$i$ such that $\opmS{\SIGSET[']}{i} > \opmSp{\SIGSET[']}{i}$; fix the smallest such $i$.
Since $\PVal{i} \geq \opmS{\SIGSET[']}{i} > \opmSp{\SIGSET[']}{i}$, we get that the
index $i$ is in $I$ for the execution with upper bound $B+\epsilon$
all the way until \SIG['] is constructed, implying by
Lemma~\ref{lem:revenue-tight} (and the revenue constraint for the
upper bound of $B$) that for all $\SIG \leq \SIG[']$,
\begin{align*}
\pmS{\SIG}{[i,\NUMVAL]} & \leq
\frac{\Val{\SigI{\SIG}}}{\Val{i} - \Val{\SigI{\SIG}}} 
    \cdot \pmS{\SIG}{[\SigI{\SIG}, i)},
&
\pmSp{\SIG}{[i,\NUMVAL]} & =
\frac{\Val{\SigI{\SIG}}}{\Val{i} - \Val{\SigI{\SIG}}} 
    \cdot \pmSp{\SIG}{[\SigI{\SIG}, i)}.
\end{align*}

By summing over all $\SIG \in \SIGSET[']$, we also obtain that
\begin{align*}
\opmS{\SIGSET[']}{[i,\NUMVAL]} & \leq
\sum_{\SIG \in \SIGSET[']} \frac{\Val{\SigI{\SIG}}}{\Val{i} - \Val{\SigI{\SIG}}} 
    \cdot \pmS{\SIG}{[\SigI{\SIG}, i)},
&
\opmSp{\SIGSET[']}{[i,\NUMVAL]} & =
\sum_{\SIG \in \SIGSET[']} \frac{\Val{\SigI{\SIG}}}{\Val{i} - \Val{\SigI{\SIG}}} 
    \cdot \pmSp{\SIG}{[\SigI{\SIG}, i)}.
\end{align*}

By minimality of $i$, we have that $\opmS{\SIGSET[']}{i'} \leq \opmSp{\SIGSET[']}{i'}$
for all $i' < i$; summing these inequalities gives that
$\opmS{\SIGSET[']}{[\SigI{\SIG}, i)} \leq \opmSp{\SIGSET[']}{[\SigI{\SIG}, i)}$.
Similarly, for any initial segment $\SIGSET[''] \subsetneq \SIGSET[']$, the strong
induction hypothesis implies that $\opmS{\SIGSET['']}{i'} \leq \opmSp{\SIGSET['']}{i'}$
for all $i'$ (in particular, $i' < i$); summing those inequalities,
and combining with the one just derived proves that 
$\opmS{\SIGSET['']}{[\SigI{\SIG}, i)} \leq \opmSp{\SIGSET['']}{[\SigI{\SIG}, i)}$
for all initial segments $\SIGSET[''] \subseteq \SIGSET[']$ (including $\SIGSET['']=\SIGSET[']$).

Because $\frac{\Val{\SigI{\SIG}}}{\Val{i} - \Val{\SigI{\SIG}}}$ is
monotone non-increasing in \SIG (recall that \SigI{\SIG} is
decreasing), we can apply Lemma~\ref{lem:prefix-sums-imply-combination}
in the middle step of the following derivation:

\begin{align} \label{eq:lessthanplus}
\opmSp{\SIGSET[']}{[i,\NUMVAL]} & =
\sum_{\SIG \in \SIGSET[']} \frac{\Val{\SigI{\SIG}}}{\Val{i} - \Val{\SigI{\SIG}}} 
    \cdot \pmSp{\SIG}{[\SigI{\SIG}, i)}.
\; \stackrel{\text{Lemma~\ref{lem:prefix-sums-imply-combination}}}{\geq} \;
\sum_{\SIG \in \SIGSET[']} \frac{\Val{\SigI{\SIG}}}{\Val{i} - \Val{\SigI{\SIG}}} 
    \cdot \pmS{\SIG}{[\SigI{\SIG}, i)},
\; \geq \; \opmS{\SIGSET[']}{[i,\NUMVAL]}.
\end{align}

Because $\opmS{\SIGSET[']}{i} > \opmSp{\SIGSET[']}{i}$, but
$\opmS{\SIGSET[']}{[i,\NUMVAL]} \leq \opmSp{\SIGSET[']}{[i,\NUMVAL]}$, there must be some
$j > i$ such that $\opmS{\SIGSET[']}{j} < \opmSp{\SIGSET[']}{j}$; fix a minimal such $j$.
This time, since $\PVal{j} \geq \opmSp{\SIGSET[']}{j} > \opmS{\SIGSET[']}{j}$,
we can apply Lemma~\ref{lem:revenue-tight} (and the revenue constraint
for the process with bound $B+\epsilon$) to obtain that for all $\SIG
\leq \SIG[']$,

\begin{align*}
\Val{j} \cdot \pmS{\SIG}{[j,\NUMVAL]} 
& = \Val{\SigI{\SIG}} \cdot \pmS{\SIG}{[\SigI{\SIG}, \NUMVAL]} 
\; \ge \; \Val{i} \cdot \pmS{\SIG}{[i,\NUMVAL]}, 
&
\Val{j} \cdot \pmSp{\SIG}{[j,\NUMVAL]} 
& \le \Val{\SigI{\SIG}} \cdot \pmSp{\SIG}{[\SigI{\SIG}, \NUMVAL]} 
\; = \; \Val{i} \cdot \pmSp{\SIG}{[i,\NUMVAL]}.
\end{align*}

Solving for \pmS{\SIG}{[j,\NUMVAL]} and \pmSp{\SIG}{[j,\NUMVAL]}, 
we obtain that
$\pmS{\SIG}{[j,\NUMVAL]} \geq \frac{\Val{j}}{\Val{j} - \Val{i}} \cdot \pmS{\SIG}{[i,j)}$ 
and $\pmSp{\SIG}{[j,\NUMVAL]} \leq \frac{\Val{j}}{\Val{j} - \Val{i}} \cdot \pmSp{\SIG}{[i,j-1]}$. 
Summing over all $\SIG \in \SIGSET[']$ now gives us that
\begin{align} \label{eq:greaterthanplus}
\opmS{\SIGSET[']}{[j,\NUMVAL]} 
& \geq \frac{\Val{j}}{\Val{j} - \Val{i}} \cdot \opmS{\SIGSET[']}{[i,j)},
&
\opmSp{\SIG[']}{[j,\NUMVAL]} 
& \leq \frac{\Val{j}}{\Val{j} - \Val{i}} \cdot \opmSp{\SIGSET[']}{[i,j)}.
\end{align}

By the definition of $i$ and $j$, we have that
$\opmS{\SIGSET[']}{[i,j)} > \opmSp{\SIGSET[']}{[i,j)}$; substituting this inequality
into \eqref{eq:greaterthanplus} and canceling common terms implies that
$\opmS{\SIGSET[']}{[j,\NUMVAL]} > \opmSp{\SIGSET[']}{[j,\NUMVAL]}$.
Now, we derive a contradiction as follows:
by Inequality~\eqref{eq:lessthanplus}, we have 
\begin{align*}
\opmS{\SIGSET[']}{[j,\NUMVAL]} 
& = \opmS{\SIGSET[']}{[i,\NUMVAL]} - \opmS{\SIGSET[']}{[i,j)}
\; \stackrel{\eqref{eq:lessthanplus}}{<} \; \opmSp{\SIGSET[']}{[i,\NUMVAL]} - \opmSp{\SIGSET[']}{[i,j)}
\; = \; \opmSp{\SIGSET[']}{[j,\NUMVAL]}
\; < \; \opmS{\SIGSET[']}{[j,\NUMVAL]}. \hfill \QED
\end{align*}
\end{emptyproof}

\section{Revenue in Bilateral trade}
\label{sec:seller-revenue}

In this section, we prove Theorem~\ref{thm:revenue-intro},
giving a straightforward dynamic program to compute
a signaling scheme maximizing the seller's revenue. 
Before doing so, we exhibit an equal-revenue
distribution for which any signaling scheme with \NUMSIG
signals only recovers an $O(\NUMSIG/n)$ fraction of the \FISW.
Notice again the contrast to the case of welfare maximization,
where even one segment is enough to attain an $\Omega(1/\log n)$
fraction of the \FISW.

We define an equal-revenue distribution as follows.
Let the valuations be $\Val{i} = 2^i$ for $0 \leq i \leq n$, 
and the probabilities 
$\PVal{i} = \frac{1}{2^{i+1}}$ for $0 \leq i < n$, 
and $\PVal{n}= \frac{1}{2^n}$. 
The \FISW is $\sum_{i=0}^n \PVal{i}\Val{i} = \frac{n}{2} + 1$. 
However, for every segment $\MPV \leq \PVec$, no matter what
price \Val{i} the seller chooses, the 
revenue cannot be more than $\Val{i} \cdot \sum_{j \geq i} \PVal{j} = 1$.
Therefore, in the worst case, with \NUMSIG signals, the seller can
at best get an $O(\NUMSIG/n)$ fraction of the maximum social welfare
as his revenue, whereas a fully informed seller would be able
to extract the entire \FISW, as discussed in the introduction.

\medskip

Next, we turn our attention to the problem of computing the optimum
signaling scheme for the seller's revenue.
The key insight enabling a dynamic program is that the
seller-optimal signaling scheme partitions the buyer types into disjoint
intervals, and allocates all probability mass for a given interval to
one signal.

\begin{lemma}[Interval Structure of Seller-Optimal Signaling Scheme] 
\label{lem:intervals}
W.l.o.g., the seller-optimal signaling scheme \SigScheme has the following form: 
There are disjoint intervals $\Intv{1}, \Intv{2}, \ldots, \Intv{\NUMSIG}$
of buyer types 
such that $\bigcup_{\SIG} \Intv{\SIG} = \SET{1, \ldots, \NUMVAL}$,
and for each signal \SIG, $\SigP{i}{\SIG} = \PVal{i}$ for all
$i \in \Intv{\SIG}$ (and $\SigP{i}{\SIG} = 0$ for all $i \notin \Intv{\SIG}$).
\end{lemma}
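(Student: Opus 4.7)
My plan is to start from any seller-optimal scheme \SigScheme and show how to transform it into an interval scheme without reducing revenue. Let $v_{\SigI{\SIG}}$ denote the seller's optimal price for signal \SIG; after merging any signals that induce the same seller price, assume $\SigI{1} > \SigI{2} > \cdots > \SigI{\NUMSIG}$, with corresponding seller-optimal revenues $R_\SIG$. Set $\SigI{0} := \NUMVAL + 1$ by convention.

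The core step is an exchange argument establishing: in any optimal scheme, for every signal \SIG, every type $i$ with $\SigI{\SIG} \le i < \SigI{\SIG-1}$ must lie entirely in signal \SIG. I would prove this by contradiction. Suppose some such type has mass $\alpha > 0$ in a different signal $\SIG'$; move this $\alpha$ from $\SIG'$ to \SIG. Because $i \ge \SigI{\SIG}$, the tail sum of \SIG at $\SigI{\SIG}$ grows by $\alpha$, so revenue at the unchanged price $v_{\SigI{\SIG}}$ becomes $R_\SIG + \alpha v_{\SigI{\SIG}}$, giving a lower bound on \SIG's new seller-optimal revenue. The effect on $\SIG'$ splits into two cases by the sign of $\SigI{\SIG'} - \SigI{\SIG}$. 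If $\SigI{\SIG'} < \SigI{\SIG}$, then $i \ge \SigI{\SIG} > \SigI{\SIG'}$, so the tail sum of $\SIG'$ at $\SigI{\SIG'}$ drops by $\alpha$ and the new revenue there is $R_{\SIG'} - \alpha v_{\SigI{\SIG'}}$; the total increases by at least $\alpha(v_{\SigI{\SIG}} - v_{\SigI{\SIG'}}) > 0$. If instead $\SigI{\SIG'} > \SigI{\SIG}$, then $i < \SigI{\SIG-1} \le \SigI{\SIG'}$, so the tail sum of $\SIG'$ at $\SigI{\SIG'}$ is unchanged and its revenue at the old price does not drop at all; the gain is at least $\alpha v_{\SigI{\SIG}} > 0$. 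Either way we contradict optimality of \SigScheme, so the claimed assignment holds.

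It remains to handle types $i < \SigI{\NUMSIG}$, which are below every seller price and contribute zero revenue wherever they are placed. I would reassign the entirety of every such type's mass to signal \NUMSIG. This leaves every tail sum at every price $\ge \SigI{\NUMSIG}$ unchanged, so each signal's revenue at its old optimal price is preserved; the seller's new optimum for each signal is therefore at least as large as before. After this cleanup, the intervals $\Intv{\SIG} := [\SigI{\SIG}, \SigI{\SIG-1} - 1]$ for $\SIG < \NUMSIG$ together with $\Intv{\NUMSIG} := [1, \SigI{\NUMSIG-1} - 1]$ form a disjoint partition of $\{1, \ldots, \NUMVAL\}$, and signal \SIG contains exactly the mass of types in $\Intv{\SIG}$, as the lemma requires.

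The main subtlety is that each exchange can shift the seller's \emph{optimal} price for a signal; I handle this throughout by lower-bounding the new seller-optimal revenue by the revenue at a specific (old) price point, which suffices because the seller's new optimum is a maximum over all prices, and then by verifying in the cleanup step that revenue at every other price can only weakly decrease for signals losing low-type mass, so no re-optimization can hurt us.
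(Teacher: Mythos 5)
Your proof is correct, and it is essentially the same exchange-argument approach as the paper's, just organized differently. The paper's proof proceeds in three incremental transformations: (i) move every type $i < \SigI{\SIG}$ out of a signal $\SIG < \NUMSIG$ into signal $\NUMSIG$ (such types contribute nothing to revenue under $\SIG$); (ii) consolidate each remaining type entirely into the highest-priced signal that carries any of its mass, using the exchange gain $\SigP{i}{\SIG[']}(\Val{\SigI{\SIG}} - \Val{\SigI{\SIG[']}}) \geq 0$; (iii) resolve remaining ``crossings'' with the same exchange. Your version collapses (ii) and (iii) into a single, slightly sharper claim --- every type $i \in [\SigI{\SIG}, \SigI{\SIG-1})$ must already sit entirely in signal $\SIG$ in any optimum, with a strictly positive gain yielding a contradiction --- and does the low-types cleanup (your second step, the paper's step (i)) afterward. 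Your case split on the sign of $\SigI{\SIG'} - \SigI{\SIG}$ and your explicit care about lower-bounding the seller's re-optimized revenue at the old price are both sound; the only point you treat slightly more briefly than warranted is that after the cleanup the scheme remains optimal even if the seller's best price for signal $\NUMSIG$ drops (you do address this correctly by noting that tail sums at prices $\geq \SigI{\NUMSIG}$ are unchanged, so revenue can only weakly increase). Net, your argument is a valid and arguably slightly tighter presentation of the same idea.
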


\begin{proof}
Let $\SigI{1} > \SigI{2} > \ldots > \SigI{\NUMSIG}$ be the price
points of the signals \SIG under \SigScheme. We will show how to
transform \SigScheme to the claimed form without decreasing the
seller's revenue.

First, if $\SigP{i}{\SIG} > 0$ for some $\SIG < \NUMSIG, i < \SigI{\SIG}$,
then the buyers of type $i$ will not buy when signal \SIG is sent,
contributing nothing to the seller's revenue. Therefore, setting
$\SigP{i}{\SIG} = 0$ instead does not lower the seller's revenue, and
increasing \SigP{i}{\NUMSIG} by the same amount again cannot decrease the
seller's revenue. Hence, we may assume that for all signals
$\SIG < \NUMSIG$, we have $\SigP{i}{\SIG} > 0$ only for $i \geq \SigI{\SIG}$.

Next, if $\SigP{i}{\SIG} > 0$, then $\SigP{i}{\SIG} = \PVal{i}$.
We distinguish two cases: if there is unallocated probability mass of
type $i$, then \SigP{i}{\SIG} can simply be raised.
If $\SigP{i}{\SIG[']} > 0$ for $\SIG['] > \SIG$, we can lower
\SigP{i}{\SIG[']} to 0 while raising \SigP{i}{\SIG} by the same
amount. Because $\SIG < \SIG['] \le \NUMSIG$,  
we have that $i \geq \SigI{\SIG}$, so the seller's revenue
increases by 
$\SigP{i}{\SIG[']} \cdot (\Val{\SigI{\SIG}} - \Val{\SigI{\SIG[']}}) \geq 0$.

So far, we have shown that the signals partition the buyer types into
sets such that for each buyer type, all of its probability mass goes
to its unique designated signal. It remains to show that the
partitions are intervals. If not, then there would be two signals
$\SIG['] > \SIG$ and price points $i < i'$ such that
$\SigP{i}{\SIG} = \PVal{i}, \SigP{i'}{\SIG[']} = \PVal{i'}$.
Then, reallocating the probability mass \SigP{i'}{\SIG[']}
to signal \SIG instead increases the seller's revenue by at least
$\SigP{i'}{\SIG[']} \cdot (\Val{\SigI{\SIG}} - \Val{\SigI{\SIG[']}}) \geq 0$.
\end{proof}

The dynamic program for segmentation into intervals is now standard.
Let \Revenue{i}{m} denote the optimal revenue a seller can obtain from
buyer types $\SET{i,i+1,\dots, \NUMVAL}$ with $m$ signals, 
when the lowest price is \Val{i}. 
\Revenue{i}{m} satisfies the recurrence
$\Revenue{i}{0} = 0$ and 
$\Revenue{i}{m} = \max_{i < i' \leq n} 
  \big( \Revenue{i'}{m-1} + \Val{i} \cdot \sum_{j=i}^{i'-1} \PVal{j} \big)$.
The maximum attainable revenue can be found by exhaustive search
of \Revenue{i}{\NUMSIG} over all $i$.



\section{Hardness of General Persuasion}

\label{sec:hardness}

In this section, we present the proof of
Theorem~\ref{thm:hardness-intro} from the introduction. 
More accurately, the following theorem shows the hardness of maximizing \psu
  within any constant.

\begin{theorem} \label{thm:hardness}
Unless $\PP = \NP$, for any constant $c > 0$, 
there is no polynomial-time algorithm for the following problem. 
Given a Bayesian persuasion game \GAME and cardinality constraint
\NUMSIG on the number of signals, 
construct a signaling scheme \SigScheme using at most \NUMSIG signals 
such that the \psu \Util[-]{\SigScheme} under \SigScheme
is at least $c \cdot \Util[-]{\SigScheme[*]}$, where
\SigScheme[*] is the signaling scheme maximizing \Util[-]{\SigScheme}.
\end{theorem}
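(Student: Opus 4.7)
The plan is to reduce from Maximum Independent Set, which by Håstad's theorem is NP-hard to approximate within any constant factor under $\PP \neq \NP$. I would route the reduction through an intermediate combinatorial problem, the Hypergraph Edge Guessing Game (HEGG): given a weighted hypergraph $H = (V, E)$ with edge weights summing to $1$, partition $E$ into at most \NUMSIG groups and choose one representative vertex per group so as to maximize the total weight of hyperedges that contain their group's representative.

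The reduction from HEGG to Bayesian Persuasion is essentially transparent. Identify states of nature $\NATURE$ with hyperedges of $E$ (with probabilities equal to edge weights), identify receiver actions $\RECACT$ with vertices $V$, and set
\begin{equation*}
\ANUtil{R}{e}{v} \;=\; \ANUtil{S}{e}{v} \;=\; \mathbbm{1}[v \in e].
\end{equation*}
Under any signaling scheme with $\NUMSIG+1$ signals (one designated as \GSIG), the receiver's best response to each non-garbage signal \SIG selects a vertex $v_{\SIG}$ maximizing the posterior probability of being in the realized hyperedge, and the \psu equals $\sum_{\SIG \neq \GSIG} \Pr[\SIG] \cdot \Pr[v_{\SIG} \in e \mid \SIG]$, which is exactly the HEGG objective for the induced partition (randomized schemes can be replaced by deterministic partitions via convexity, since the per-signal objective $\max_{v} \Pr[v \in e \mid \SIG]$ is convex in the posterior).

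For the reduction from Independent Set to HEGG, I would construct a hypergraph whose high-value HEGG solutions correspond to large independent sets in $G = (V, E)$. The design takes $V$ as the hypergraph's vertex set and, for each graph-edge $\{u, w\} \in E$, introduces gadget hyperedges that make it impossible to collect reward from both $u$ and $w$ simultaneously as representatives of different groups. The remaining ``reward'' hyperedges are calibrated so that the total value of a HEGG solution scales linearly with the size of the largest independent set spanned by its representative vertices. A $c$-approximation for HEGG then yields an $\Omega(c)$-approximation for Independent Set, and setting \NUMSIG sufficiently large ensures that the garbage signal, which carries at most a $1/(\NUMSIG + 1)$ fraction of the probability mass, contributes a loss dwarfed by the desired constant $c$.

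The main obstacle is to engineer the Independent Set-to-HEGG gadget so that the any-constant gap of \IS is transported faithfully through both reductions. This requires (i) verifying that the receiver, breaking ties in favor of the sender, indeed selects the intended representative vertices rather than exploiting gadget artifacts; (ii) proving a robust soundness statement --- every signaling scheme of \psu at least $c \cdot \OPT$ decodes to a HEGG partition of value $\Omega(c \cdot \OPT)$, and thence to an independent set of proportional size; and (iii) controlling the contribution of the garbage signal so that its slack does not erode the constant-factor gap. The delicate accounting in step (ii), rather than the gadget itself or the persuasion reduction, is where the bulk of the technical work lies.
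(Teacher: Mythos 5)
Your overall roadmap matches the paper --- reduce from the \IS gap problem through a hypergraph ``edge guessing game'' into Bayesian Persuasion --- but the specific HEGG you define does not have the hardness you need, and the flaw is in the utility function.

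You set $\ANUtil{R}{e}{v} = \ANUtil{S}{e}{v} = \mathbbm{1}[v \in e]$ and give the receiver only vertex-guessing actions. With perfectly aligned utilities and no competing action, the sender's optimization collapses: the \psu of the best scheme with \NUMSIG non-garbage signals is exactly $\max_{|V'| \le \NUMSIG} \sum_{e:\, e \cap V' \neq \emptyset} w_e$, i.e.\ weighted \emph{maximum coverage} with a cardinality budget of \NUMSIG vertices. (Once you commit to a representative vertex per signal, each hyperedge is best routed to any signal whose representative hits it, so the partition formulation reduces to plain coverage.) But max-coverage admits a $(1-1/e)$-approximation by greedy, so no choice of gadget hyperedges can create an inapproximability gap beyond a constant factor. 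Your step~(ii), the ``robust soundness'' that decodes a near-optimal signaling scheme into a large independent set, cannot be made to work for this objective.

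The paper avoids this by two coupled design choices you omit. First, the receiver may also \emph{guess an edge}, earning $1$ if correct and giving the sender $0$. Second, the utility for guessing a vertex $v$ incident on $e^*$ is $1/d_v$ (for both players), not $1$. Together these force the sender to make the posterior under each non-garbage signal \emph{uniform over the edges incident on one vertex} --- otherwise the receiver defects to edge-guessing --- and the $1/d_v$ factor cancels the degree, so the per-signal sender utility becomes $y_v$ rather than $d_v\, y_v$. The resulting LP is
\[
\max \sum_v y_v \quad \text{s.t.}\quad \sum_{v\in e} y_v \le 1 \ \ \forall e, \quad \|\vc{y}\|_0 \le \NUMSIG-1, \quad \vc{y}\ge \vc{0},
\]
which is a fractional, degree-normalized \IS-like program, not coverage. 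The paper's \IS reduction is also structurally different from what you sketch: rather than per-edge gadgets, it takes as hyperedges \emph{all $(r+1)$-cliques} of $G$ and proves soundness via Ramsey's theorem (a large set with all $y_v > 1/(r+1)$ must be $(r+1)$-clique-free, hence contains a large independent set). You would need to redesign the receiver's action space and the utility normalization along these lines before the rest of your plan can go through.
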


Because the sender utility and \psu are within a factor of 
$\frac{\NUMSIG-1}{\NUMSIG}$ of each other, this implies the same
hardness result for the sender utility, proving Theorem~\ref{thm:hardness-intro}.

We prove Theorem~\ref{thm:hardness} by establishing hardness for
a game we call the \HEGGFULL (\HEGG). 
There is a hypergraph $H=(V,E)$ which is commonly known to the sender
and receiver. 
The state of nature is a hyperedge $e^* \in E$, drawn
from the uniform distribution.

The receiver has two types of actions available: trying to guess the
hyperedge, or ``hedging her bets'' by guessing a vertex 
$v \in V$. If she guesses an edge $e$, then she gets 1 if her
guess was correct ($e=e^*$), and 0 otherwise. 
If she guesses a vertex $v$, she gets $1/d_v$ (the degree of $v$) if
$v$ is incident on $e^*$, and 0 otherwise.

The sender's utility is determined by the receiver's guess.
If the receiver guesses an edge, the sender gets utility 0,
regardless of whether the guess is correct.
If the receiver guesses a vertex $v$, the sender has utility $1/d_v$
(the same as the receiver) if $v$ is incident on $e^*$, and 0 otherwise.

Since the sender has access to $e^*$, it is his goal to design a signaling scheme
that narrows down the possible states of nature for the receiver
enough that she can get an incident vertex, but not so much as to
induce her to guess a hyperedge. 
This is accomplished by making the posterior distribution conditioned
on any signal uniform across edges incident on a particular vertex.
Ideally, we would like this to be the case for all signals, but this
may simply be impossible. However, we can achieve it for all but one
signal.

\begin{definition} \label{def:vertex-centric}
A signaling scheme \SigScheme is \emph{vertex-centric} if
for all signals \SIG except at most one, there exists a node
$v = v(\SIG)$ such that $\SigP{v}{e} =\SigP{v}{e'}$ 
for all hyperedges $e, e' \ni v$, 
and $\SigP{v}{e} = 0$ for all hyperedges $e \not\ni v$.
\end{definition}

That is, in a vertex-centric signaling scheme, all but one signal
induce a uniform posterior distribution over edges incident on one
vertex. 

\begin{lemma} \label{lem:uniform-posterior}
For any signaling scheme \SigScheme, 
there is a vertex-centric signaling scheme \SigScheme['] with
$\Util[-]{\SigScheme[']} \geq \Util[-]{\SigScheme}$, 
and which can be constructed from \SigScheme in polynomial time.
\end{lemma}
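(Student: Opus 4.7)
The plan is to exploit a clean structural observation: in any signal $\sigma$ whose receiver best-response is a vertex $v = v(\sigma)$, the induced posterior $\mu_\sigma$ is already uniform over the hyperedges incident to $v$. The reason is that for $v$ to be a best response, the receiver's expected utility from guessing $v$, namely $(1/d_v) \sum_{e' \ni v} \mu_\sigma(e')$, must weakly dominate the utility from guessing any specific hyperedge $e$, namely $\mu_\sigma(e)$. In particular, applying this inequality with $e$ chosen to be the edge in $\{e' : v \in e'\}$ that maximizes $\mu_\sigma$, we read ``average $\geq$ max,'' which forces $\mu_\sigma$ to take a single common value on all edges incident to $v$. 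Equivalently, there is a constant $c_\sigma$ such that $x_{e,\sigma} = c_\sigma$ for every $e \ni v$.

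Given this observation, the construction of $X'$ is transparent. For each non-garbage signal $\sigma$ of $X$, I distinguish two cases. If the receiver's best response to $\sigma$ is a hyperedge rather than a vertex, then $\sigma$ contributes zero to $\Util[-]{X}$, so I simply move all of $\sigma$'s probability mass to the garbage signal $\GSIG$. If instead the best response is a vertex $v$, then by the observation above $x_{e,\sigma} = c_\sigma$ for every $e \ni v$; I convert $\sigma$ into a vertex-centric signal by retaining the uniform per-edge mass $c_\sigma$ on each $e \ni v$ and relocating the leftover mass on edges $e \not\ni v$ to $\GSIG$. The resulting scheme $X'$ is vertex-centric with $\GSIG$ serving as the single permitted exception.

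To finish, I verify the two required properties. For utility preservation, the original contribution of $\sigma$ to $\Util[-]{X}$ was $(1/d_v) \sum_{e \ni v} x_{e,\sigma} = (1/d_v) \cdot d_v \cdot c_\sigma = c_\sigma$, which equals precisely the contribution of the new vertex-centric signal with per-edge mass $c_\sigma$. Mass relocated to $\GSIG$ does not enter $\Util[-]{X'}$, and absorbing an edge-best-response signal into $\GSIG$ only discards a term that was already zero, giving $\Util[-]{X'} \geq \Util[-]{X}$. For the runtime, determining each signal's best response is polynomial (it suffices to enumerate the $O(|V| + |E|)$ receiver actions and compare their induced expectations), and the rearrangement of probability mass is a constant-time update per signal. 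The entire substance of the argument lies in the uniform-posterior observation; once that is in hand, the remainder of the proof is bookkeeping, so I do not expect any genuine obstacle.
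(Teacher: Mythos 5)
Your proof is correct and takes essentially the same approach as the paper: both rest on the observation that if a vertex $v$ is the receiver's best response then averaging over incident edges must weakly dominate picking the most likely one, which forces the posterior (equivalently the unnormalized mass $\SigP{e}{\SIG}$) to be constant across $e \ni v$; and both then shuffle the stray mass (edge-best-response signals, and mass on non-incident edges) into the garbage signal without affecting the \psu. The paper simply presents the steps in the opposite order (clean up first, then deduce uniformity), while you lead with the uniformity observation, but the content is identical.
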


\begin{proof}
Consider any signaling scheme \SigScheme, characterized by the
probabilities \SigP{e}{\SIG} that the state of the world is $e$ and
the sender sends the signal \SIG. 
(Recall that these are not conditional probabilities.)
For each signal \SIG, there is a unique (after tiebreaking) action
that the receiver takes, either a hyperedge $e$ or a vertex $v$. 
If the receiver chooses a hyperedge $e$, the sender's utility is 0;
for a vertex $v$, it is 
$\SUtil{\SIG} = \frac{1}{d_v} \cdot \sum_{e \text{ incident on } v} \SigP{e}{\SIG}$.

Let \GSIG be the designated garbage signal; without loss of generality
(by renaming), it minimizes \SUtil{\GSIG}.
First, we may assume w.l.o.g.~that the receiver
does not choose a hyperedge for any signal $\SIG \neq \GSIG$. 
Otherwise, since the sender's utility $\SUtil{\SIG} = 0$,
we could reallocate all probability mass from \SIG to \GSIG without 
changing the \psu; under the new signal (which is never sent), the
receiver w.l.o.g.~plays a vertex.

Consider any signal $\SIG \neq \GSIG$, and let $v$ be the vertex the
receiver chooses in response to \SIG. 
First, if there is any hyperedge $e$ not incident on $v$ with
$\SigP{e}{\SIG} > 0$, we can safely lower \SigP{e}{\SIG} to 0
(reassigning the probability mass to \GSIG), without
changing the receiver's action (because $e$ was not incident on $v$,
this change cannot make $v$ less attractive), and without 
affecting \SUtil{\SIG}.

Let $d=d_v$ be the degree of $v$, and $e_1, e_2, \ldots, e_d$ the
hyperedges incident on $v$, sorted such that
$\SigP{e_1}{\SIG} \geq \SigP{e_2}{\SIG} \geq \ldots \geq \SigP{e_d}{\SIG}$.
If $\SigP{e_1}{\SIG} > \SigP{e_d}{\SIG}$, then the receiver's expected
utility from choosing $e_1$ is \SigP{e_1}{\SIG}, whereas her utility
from choosing $v$ is $\frac{1}{d} \sum_{i=1}^d \SigP{e_i}{\SIG} < \SigP{e_1}{\SIG}$.
This would contradict the receiver's playing $v$.

Notice that the changes do not affect the utility under any signal
except the garbage signal, so the \psu stays the same.
\end{proof}

Now consider the following optimization problem: find a vertex-centric
signaling scheme \SigScheme with a dedicated garbage signal \GSIG that 
maximizes the \psu $\Util[-]{\SigScheme}$.
By definition of a vertex-centric signaling scheme, 
$\SigP{e}{v} = \SigP{e'}{v}$ for all hyperedges $e,e'$ incident on $v$;
we denote this quantity by $y_v$.
Then, the probability of sending the signal inducing the receiver to
choose $v$ is $\sum_{e \ni v} \SigP{e}{v} = d_v y_v$, and the
resulting sender utility conditioned on sending it is $1/d_v$.
A vertex-centric signaling scheme is entirely determined by the
$\NUMSIG-1$ vertices and their associated probabilities $y_v$; hence,
the optimization problem can be expressed as follows.

\begin{LP}{Maximize}{\Norm[1]{\vc{y}}}
\sum_{v \in e} y_v \leq \frac{1}{\SetCard{E}} \quad \text{ for all } e \in E,\\
\Norm[0]{\vc{y}} \leq \NUMSIG-1,\\
\vc{y} \geq \vc{0}.
\end{LP}

The first constraint captures that the total probability of all
signals sent when the state of the world is $e$ can be at most the
probability that the state of the world is $e$, which is $1/\SetCard{E}$.
Rescaling all $y_v$ values by a factor $\SetCard{E}$ and removing that
constant factor from the objective gives us the following equivalent characterization.

\begin{LP}[signaling-scheme-LP]{Maximize}{\Norm[1]{\vc{y}}}
\sum_{v \in e} y_v \leq 1 \quad \text{ for all } e \in E,\\
\Norm[0]{\vc{y}} \leq \NUMSIG-1,\\
\vc{y} \geq \vc{0}.
\end{LP}

Notice that Program~\eqref{signaling-scheme-LP} would exactly be an
Independent Set characterization if the $y_v$ were restricted to be
integral. The following lemma shows that the upper bound on the
support of $\vc{y}$ is enough to ensure that the optimal solution
cannot be approximated to within any constant (when the hyperedges are
large enough). 

\begin{lemma} \label{lem:approximation-of-optimiation}
For any constant $r \geq 1$, unless $\PP = \NP$, the optimum solution of
Program~\eqref{signaling-scheme-LP} cannot be approximated to within
a factor better than $1/r$.
\end{lemma}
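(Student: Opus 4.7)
The plan is to prove this lemma by a reduction from a gap version of Maximum Independent Set. By the inapproximability results of H\aa stad and Zuckerman, for any constant $C \geq 1$ it is NP-hard to distinguish graphs $G$ with independent set of size at least $k$ from graphs whose maximum independent set is at most $k/C$; I will pick $C = r \cdot c$ for a suitable absolute constant $c$ to be determined.

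Given such a graph $G = (V_0, E_0)$, I would construct a hypergraph $H$ whose LP, under a carefully chosen sparsity bound $M-1$, has value at least $k$ in the YES case and at most $k/r$ in the NO case. The cleanest route seems to be a ``blowup'' construction: take $V(H)$ to contain $V_0$ (and possibly auxiliary vertices), and for each edge of $G_0$ introduce a hyperedge of size $\Theta(r)$ that fully contains the corresponding pair $\{u,v\}$. The sparsity bound is set so that any feasible support of size $M-1$ can be interpreted as a set $S \subseteq V_0$ of comparable size. In the YES case, setting $y_v = 1$ on an IS $I$ of size $k$ satisfies every hyperedge constraint (at most one of $u,v$ is in $I$) and yields objective $k$.

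The main obstacle is the NO direction, where one must bound the LP value on arbitrary subsets of size $M-1$. The natural LP relaxation of independent set has a large integrality gap on its own, so the support constraint must be used in an essential way. Here the key observation is that for a hypergraph whose hyperedges have size $r$, the fractional strong independent set LP has integrality gap at most $r$: given any feasible $y$ on a support $S$, one obtains a strong independent set of size at least $\|y\|_1 / r$ by a simple thresholding or contention-resolution argument on the $r$-uniform sub-hypergraph induced on $S$. Combined with the gap assumption on $G$, this yields $\|y\|_1 \leq r \cdot (k/C) = k/c$ in the NO instance, so choosing $c$ so that $k/c \leq k/r$ closes the gap.

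I expect the technical heart of the proof to be the rounding step that converts a fractional, sparse LP solution into an actual independent set in $G$, while losing only a factor of $r$. Setting up hyperedges with dummy vertices must be done carefully so that dummies cannot be exploited to create high-value sparse solutions with no corresponding IS in $G$; one natural way is either to forbid dummies altogether (taking $H = G$ interpreted as a $2$-uniform hypergraph together with hyperedges coming from a bounded-degree structure of $G$) or to share dummies across many hyperedges so they collectively contribute only $O(1)$ to the LP. Once the rounding and the gap-preservation steps are in place, the reduction is polynomial in $|V_0|$ and the constants $r, C$, and the lemma follows.
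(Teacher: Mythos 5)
Your high-level plan---reduce from a gap version of \IS, construct a hypergraph, and argue the LP value tracks the IS size---matches the paper, and your YES direction (place $y_v=1$ on an independent set) is the same. But the NO direction, which you correctly flag as the technical heart, rests on a claim that is false, and the hypergraph you propose does not give you the structure needed to fix it.

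The claimed ``key observation''---that for an $r$-uniform hypergraph, any feasible $\vc{y}$ on a support $S$ yields a strong independent set of size at least $\|\vc{y}\|_1/r$---does not hold. Take the complete $r$-uniform hypergraph on $N$ vertices with $y_v = 1/r$ for every $v$: the LP value is $N/r$, while the maximum strong independent set has size one (any two vertices lie in a common hyperedge), so the integrality gap is $\Theta(N/r)$, not $O(r)$. Restricting attention to the support does not help, since the support can be all of $V$. Your fallback of taking $H=G$ as a $2$-uniform hypergraph also fails: thresholding at $1/2$ gives an independent set $S$, but the accounting $\|\vc{y}\|_1 \leq |S| + \tfrac{1}{2}(M-1-|S|)$ only yields $|S| \geq 2\|\vc{y}\|_1 - (M-1)$, which is vacuous precisely in the regime $\|\vc{y}\|_1 \leq (M-1)/r$ you need to rule out once $r\geq 2$. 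Dummy vertices cannot be made to work either, for the reason you yourself note.

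The paper's construction sidesteps all of this by choosing the hyperedges of $H$ to be exactly the $(r+1)$-cliques of $G$, on the same vertex set and with no auxiliary vertices. Then, given a feasible $\vc{y}$ with $\|\vc{y}\|_1 \geq \NUMSUPP/r$, thresholding at $1/(r+1)$ yields a set $S$ with $|S| \geq \NUMSUPP/r^2$, and the LP constraints force $G[S]$ to be $K_{r+1}$-free. Ramsey's theorem (via $R(r+1,b) = O(b^r)$) then extracts an independent set of size $\Omega(|S|^{1/r})$ in $G$. Because this extraction step loses a \emph{polynomial} factor, the reduction must start from the polynomial-gap hardness of \IS (gap $n^{1-\epsilon}$ vs.\ $n^\epsilon$, with $\epsilon = 1/(r+2)$); your plan to use a constant-gap version cannot absorb the Ramsey loss. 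The $(r+1)$-clique structure and the Ramsey argument are exactly the missing ingredients in your proposal.
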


\begin{proof}
We give a reduction from the gap version of \IS.
Given a graph $G=(V,E)$ and any constant $\epsilon > 0$,
and the promise that the largest independent set of $G$ has size
either less than $n^\epsilon$ or more than $n^{1-\epsilon}$, 
it is \NP-hard to answer ``No'' in the former case and ``Yes'' in the
latter~\cite{hastad:clique}.
For our reduction, we specifically choose $\epsilon = \frac{1}{r+2}$.

Given $G$, we create a hypergraph $H=(V,E')$ on the same node set, 
whose hyperedges are exactly the cliques of size $r+1$ in $G$, i.e.,
$E' = \Set{S\subseteq V}{S \text{ is a clique of size } r+1 \text{ in } G}$. 
The constraint on the support size of $\vc{y}$ is $\NUMSUPP = n^{1-\epsilon}$.
Notice that the reduction is computed in time $O(n^{r+1})$, which is
polynomial in $n$ for constant $r$.
We will show that if $G$ has an independent set of size $n^{1-\epsilon}$, 
then the objective value of Program~\eqref{signaling-scheme-LP} is $\NUMSUPP$,
whereas if $G$ has no independent set of size $n^\epsilon$, then the
objective value is less than $\NUMSUPP/r$. 

First, suppose that $G$ has an independent set $S$ of size $\NUMSUPP$.
Consider the solution $\vc{y}$ to Program~\eqref{signaling-scheme-LP}
which sets $y_v = 1$ for all $v \in S$, and $y_v = 0$ for all others.
Because $S$ is independent in $G$, it contains at most one vertex from
each $(r+1)$-clique; hence, the proposed solution is valid, and it
achieves an objective value of $\NUMSUPP$.

Conversely, let $\vc{y}$ be a solution 
to Program~\eqref{signaling-scheme-LP}, and assume that its objective
value is at least $\frac{1}{r} \cdot \NUMSUPP$. 
Let $S$ be the set of all indices $v$ such that $y_v > \frac{1}{r+1}$.
Then, by the assumed lower bound on the objective value, 
$|S| \geq \frac{\NUMSUPP}{r^2}$.

Consider the subgraph $G[S]$ induced by $S$ in $G$. 
By the constraint for each hyperedge, $G[S]$ contains no $(r+1)$-clique;
otherwise, the corresponding $y_v$ would add up to more than 1.
Now, Ramsey's Theorem implies that $G[S]$ contains an independent set
of size $\Omega(n^{1/r})$, as follows.
Recall that the Ramsey Number $R(r+1,b)$ is the minimum size $s$ such
that each graph of size $s$ contains a clique of size $r+1$ or an
independent set of size $b$. 
Because $R(r+1,b) \leq \binom{r+b-1}{r} \in O(b^r)$
\cite{van2001course}, and $G[S]$ is a
graph of size at least $\frac{\NUMSUPP}{r^2}$ not containing any $(r+1)$-clique,
it must contain an independent set of size at least 
$\Omega((\NUMSUPP/r^2)^{1/r}) = \Omega(n^{(1-\epsilon)/r}) = \omega(n^{\epsilon})$.
This completes the proof.
\end{proof}

\begin{extraproof}{Theorem~\ref{thm:hardness}}
The proof is now straightforward. Given an instance of the \IS
problem, we construct the instance of the \HEGG according to the proof
of Lemma~\ref{lem:approximation-of-optimiation}, setting the allowed
number of signals to $\NUMSIG = 1+\NUMSUPP$.
Feasible solutions to Program~\eqref{signaling-scheme-LP} 
exactly capture vertex-centered signaling schemes,
and the objective value is the \psu (scaled by $\SetCard{E'}$).
\end{extraproof}



\section{Conclusion}
\label{sec:conclusion}

Our work raises several natural questions for future work.
First, while we provide a constant-factor approximation algorithm and
a QPTAS for social welfare in the pricing game, we did not actually
establish \NP-hardness. 
Is there a polynomial-time algorithm for maximizing social welfare
subject to limited communication?
Could at least a PTAS or an FPTAS be obtained?
For the more general persuasion problem with limited communication, we
establish that no approximation of social welfare to within any
constant is possible. Can this result be strengthened to logarithmic
or polynomial hardness?

In the present article, we are focusing on maximizing seller
revenue and social welfare. 
\citet{Bergemann14} also consider maximizing buyer's utility. 
It is not hard to see that given the price points for each signal, 
the buyer's utility can be maximized by a linear program very similar
to \eqref{lp:welfare-maximization}.
However, it is not clear that the overall objective function is still
submodular, or whether a similar greedy algorithm to the one from
Section~\ref{sec:greedy} optimally solves the corresponding LP.

Beyond revenue and social or buyer welfare, one could consider other
objectives for the principal. While for full generality of the
persuasion problem, our results preclude constant-factor approximation
guarantees, it would be of interest to identify other natural classes
in which limits on communication have mild consequences, and in which
good signaling schemes with limited communication can be designed
efficiently. 


\xhdr{Acknowledgments.}

This work is supported in part by NSF grant CCF-1423618. We would also
like to thank Alex Eager for useful conversations and anonymous
reviewers for their constructive comments and suggestions.

\bibliographystyle{plainnat}

\bibliography{../davids-bibliography/names,../davids-bibliography/conferences,../davids-bibliography/bibliography,../davids-bibliography/publications,../bibliography/paper-specific,../bibliography/shaddin}


 \appendix
 \section{Proof of Theorem~\ref{thm:greedy-optimal}}
\label{sec:greedy-proof}
In this section, we prove Theorem~\ref{thm:greedy-optimal}.

\begin{rtheorem}{Theorem}{\ref{thm:greedy-optimal}}
The algorithm \algoname{Construct-Signaling-Scheme} solves 
the linear program \eqref{lp:welfare-maximization} optimally.
\end{rtheorem}

\begin{proof}
Let \SigScheme[*] be an optimal solution for the linear program
\eqref{lp:welfare-maximization}, and \SigScheme the signaling scheme
constructed by the algorithm \algoname{Construct-Signaling-Scheme}.
We will show that \SigScheme[*] can be (gradually) transformed into
\SigScheme without decreasing its solution quality, proving optimality
of \SigScheme.

First, we may assume without loss of generality that 
$\SigP[*]{i}{\SIG} = 0$ for all $i < \SigI{\SIG}$,
since setting them to 0 affects neither the objective value nor the
constraints.
Let \SigC{\SIG}, \SigC[*]{\SIG} denote column \SIG of \SigScheme,
\SigScheme[*], i.e., the vector of probabilities that constitute
signal \SIG.

Assume that $\SigScheme \neq \SigScheme[*]$.
Let \SIG be the smallest index (i.e., with largest price) such
that $\SigC{\SIG} \neq \SigC[*]{\SIG}$.
Let $i$ be minimal such that $\SigP{i}{\SIG} \neq \SigP[*]{i}{\SIG}$.
For notational convenience, since we will mostly focus on 
the signal \SIG, we write 
$\OSVec = \SigC{\SIG}$ and $\OSVec[*] = \SigC[*]{\SIG}$.
Let $\RPVec = \PVec - \sum_{\SIG['] < \SIG} \SigC{\SIG[']}
= \PVec - \sum_{\SIG['] < \SIG} \SigC[*]{\SIG[']}$ 
be the vector of residual probabilities at the time that signal \SIG was
greedily constructed. 
We now distinguish two cases:

\xhdr{Case 1: $\OS{i} < \OS[*]{i}$:}
By Lemma~\ref{lem:maxtail}, $\TS{i} \geq \TS[*]{i}$, 
and because $\OS{i} < \OS[*]{i}$, we get that $\TS{i+1} > \TS[*]{i+1}$.
In particular, there must be an index $i' > i$ such that
$\OS{i'} > \OS[*]{i'}$; let $i'$ be the smallest such index.
Let $\delta = \min (\OS{i'} - \OS[*]{i'}, \OS[*]{i} - \OS{i}) > 0$.

We next show that under \SigScheme[*], \emph{all} signals combined
must use all probability mass of type $i'$, i.e.,
$\sum_{\SIG['] = 1}^{\NUMSIG} \SigP[*]{i'}{\SIG[']} = \PVal{i'}$.
If this were not the case, then define 
$\epsilon = \min (\delta, \PVal{i'} - \sum_{\SIG['] = 1}^{\NUMSIG} \SigP[*]{i'}{\SIG[']})
> 0$, and consider modifying \SigScheme[*] by updating
$\OS[']{i'} = \OS[*]{i'} + \epsilon$ and
$\OS[']{i} = \OS[*]{i} - \epsilon$ 
(and leaving $\OS[']{j} = \OS[*]{j}$ for all $j \neq i, i'$).
By choice of $\epsilon$, this new solution \SigScheme['] does not
violate the non-negativity or total probability constraints, and we claim that 
(1) it satisfies the revenue constraint \eqref{eqn:tailsum-inequality}
for all $j$, and 
(2) its welfare is strictly higher than that of \SigScheme[*].
        
To check the revenue constraints, notice first that the seller's
revenue under \SigScheme['] for indices $j \leq i$ and $j > i'$ is
unchanged, so \eqref{eqn:tailsum-inequality} still holds for such
$j$. It remains to consider $j \in \SET{i+1, \ldots, i'}$.
Fix one such $j$.
By definition of $i$ and $i'$, we have that $\OS{j'} \leq \OS[']{j'}$ for
$\SigI{\SIG} \leq j' < i'$; in particular, we can infer for our $j$
that

\begin{align}
\PMY[']{[\SigI{\SIG},j)} 
& = \sum_{j'=\SigI{\SIG}}^{j-1} \OS[']{j'}
\; \geq \; \sum_{j'=\SigI{\SIG}}^{j-1} \OS{j'}
\; = \; \PMY{[\SigI{\SIG},j)}.
\label{eqn:tailsums-after-modification}
\end{align}

Because $\TS{i} \geq \TS[*]{i}$ and $\OS{i} \leq \OS[*]{i} - \epsilon$,
and the definition of $i'$, we get that for all $j \in \SET{i+1, \ldots, i'}$,
\begin{align}
\TS{j} & \geq \TS[*]{j} + \epsilon \; = \; \TS[']{j}.
\label{eqn:intermediate-tailsums}
\end{align}
Combining these two inequalities with the fact that \SigScheme
satisfies all revenue constraints, in particular 
$\Val{j} \cdot \TS{j} \leq \Val{\SigI{\SIG}} \cdot \TS{\SigI{\SIG}}$, 
we get that

\begin{align*}
\Val{\SigI{\SIG}} \cdot \PMY[']{[\SigI{\SIG},j)}
& \stackrel{\eqref{eqn:tailsums-after-modification}}{\geq}     
\Val{\SigI{\SIG}} \cdot \PMY{[\SigI{\SIG},j)}  
\; \geq \; (\Val{j} - \Val{\SigI{\SIG}}) \cdot \TS{j} 
\; \stackrel{\eqref{eqn:intermediate-tailsums}}{\geq} \;  
(\Val{j} - \Val{\SigI{\SIG}}) \cdot \TS[']{j}.
\end{align*}

Adding $\Val{\SigI{\SIG}} \cdot \TS[']{j}$ to both sides now shows
that \SigScheme['] satisfies the revenue inequality for $j$.
However, notice that the objective value has increased by
$\epsilon(\Val{i'} - \Val{i})$, contradicting the optimality of
\SigScheme[*]. Hence, we have shown that
$\sum_{\SIG['] = 1}^{\NUMSIG} \SigP[*]{i'}{\SIG[']} = \PVal{i'}$.
We will show that there is another signal $\SIG['] > \SIG$ such that
we can redistribute probability mass between signals $\SIG, \SIG[']$
without affecting either the objective or the constraints, while
making \SigScheme[*] and \SigScheme more similar.

Because $\OS{i'} > \OS[*]{i'}$ and
$\sum_{\SIG['] = 1}^{\NUMSIG} \SigP[*]{i'}{\SIG[']} = \PVal{i'}$,
there must be a signal $\SIG['] > \SIG$ such that
$\SigP[*]{i'}{\SIG[']} > \SigP{i'}{\SIG[']}$.
Fix \SIG['] to be the smallest such index, and define
$\epsilon = \min (\SigP[*]{i'}{\SIG[']}, \delta)$.
Consider the modified signaling scheme \SigScheme['] with
\begin{align*}
\OS[']{i} & = \OS[*]{i} - \epsilon, \\
\OS[']{i'} & = \OS[*]{i'} + \epsilon, \\
\SigP[']{i}{\SIG[']}  & = \SigP[*]{i}{\SIG[']} + \epsilon, \\
\SigP[']{i'}{\SIG[']} & = \SigP[*]{i'}{\SIG[']} - \epsilon, \\
\SigP[']{j}{\SIG['']} & = \SigP[*]{j}{\SIG['']} \quad \text{ for all
  other } j, \SIG[''].
\end{align*}

Because this assignment only redistributes probability mass,
the probability mass constraints cannot be violated, and the social
welfare stays the same.
Non-negativity follows from the choice of $\epsilon$.
That \OSVec['] satisfies the revenue constraints follows exactly the
same proof as in the previous definition of \OSVec['], because the
adjustment is the same.
Finally, for all $j \in \SET{i+1, \ldots, i'}$, the tail sums of
probabilities have decreased under \SIG['], implying that so has the
revenue for those prices. This means that all revenue constraints are
also satisfied for \SIG['].

This modification makes \OS{i} and \OS[*]{i} more similar, and
repeating this procedure with another signal \SIG['] as needed, they
will eventually become the same.

\xhdr{Case 2: $\OS{i} > \OS[*]{i}$:}
Analogous to the first case, we will begin by showing that 
$\sum_{\SIG['] = 1}^{\NUMSIG} \SigP[*]{i}{\SIG[']} = \PVal{i}$.
For contradiction, assume that 
$\sum_{\SIG['] = 1}^{\NUMSIG} \SigP[*]{i}{\SIG[']} < \PVal{i}$, and 
let $\epsilon = \min (\OS{i} - \OS[*]{i},
\PVal{i} - \sum_{\SIG['] = 1}^{\NUMSIG} \SigP[*]{i}{\SIG[']})$.
Define an improved signaling scheme by setting
$\OS[']{i} = \OS[*]{i} + \epsilon$, and 
$\OS[']{j} = \OS[*]{j}$ for all $j \neq i$.

The probability mass constraints are still satisfied, and nothing has
changed for indices $j > i$, so those revenue constraints are still
satisfied. Next, consider an index $j \in \SET{\SigI{\SIG}+1, \ldots, i}$.
Because $i$ is the first index with $\OS{i} \neq \OS[*]{i}$, we get that
$\PMY[*]{[\SigI{\SIG},j)}  = \PMY{[\SigI{\SIG},j)} $.

By Lemma~\ref{lem:maxtail}, 
$\TS{i+1} \geq \TS[*]{i+1}$, and because
$\OS{i} \geq \OS[*]{i} + \epsilon$, we obtain that
$\TS{j} \geq \TS[*]{j} + \epsilon = \TS[']{j}$.
Combining this and the previous inequality with the 
revenue constraint for \OSVec at index $j$ (namely, that
$\Val{\SigI{\SIG}} \cdot \TS{\SigI{\SIG}} \geq \Val{j} \cdot \TS{j}$),
we conclude --- analogously to the previous case --- that 
\begin{align*}
\Val{\SigI{\SIG}} \cdot \PMY[']{[\SigI{\SIG},j)}
& =  \Val{\SigI{\SIG}} \cdot \PMY{[\SigI{\SIG},j)} 
\; \geq \; (\Val{j} - \Val{\SigI{\SIG}}) \cdot \TS{j} 
\; \geq \; (\Val{j} - \Val{\SigI{\SIG}}) \cdot \TS[']{j}.
\end{align*}

Adding $\Val{\SigI{\SIG}} \cdot \TS[']{j}$ to both sides now
establishes that the revenue constraint is satisfied at index $j$.
Because the objective value strictly increased, we obtain a
contradiction to the optimality of \SigScheme[*].
Hence, from now on, we assume that
$\sum_{\SIG['] = 1}^{\NUMSIG} \SigP[*]{i}{\SIG[']} = \PVal{i}$.

As in Case 1, there must be some signal $\SIG['] > \SIG$ with
$\SigP[*]{i}{\SIG[']} > 0$. Our goal is again to increase
\OS[*]{i} to make it closer to \OS{i}, and do so by reassigning
probability mass from signal \SIG['].
However, in this case, doing so involves a more careful reallocation
of probability mass to ensure all revenue constraints are satisfied.
In fact, we will use the algorithm \algoname{Construct-One-Signal}
to construct a vector $\vc{d}$ describing the probability
reallocation, and then set 
\begin{align*}
\OSVec['] & = \OSVec[*] + \vc{d}, \\
\SigC[']{\SIG[']} & = \SigC[*]{\SIG[']} - \vc{d}.
\end{align*}

The ``residual probability vector'' $\vc{r}$ in this case is defined as
$r_i = \min (\SigP[*]{i}{\SIG[']}, \OS{i} - \OS[*]{i})$ and
$r_j = \SigP[*]{j}{\SIG[']}$ for $j \neq i$.
It captures the fact that we can at most reallocate all of the
probability mass of \SigC[*]{\SIG[']}, but also must ensure that the
new signal satisfies $\OS[']{i} \leq \OS{i}$.
Since the modified version of signal \SIG must satisfy the revenue
constraints with target price \Val{\SigI{\SIG}}, we make this price
the target of the construction of $\vc{d}$, by setting
$u_i = \Val{\SigI{\SIG}}$, 
$u_j = \Val{j}$ for $j > i$, 
and $u_j = 0$ for $j < i$.
Then, the change vector is defined as
$\vc{d} = \algoname{Construct-One-Signal}(\vc{r}, \vc{u}, i)$. 
We now want to show that the signals
$\OSVec['], \SigC[']{\SIG[']}$ defined above satisfy all constraints.

First, because probability mass only gets moved around between
signals, the welfare stays the same, and the probability and
non-negativity constraints cannot get violated because 
\algoname{Construct-One-Signal} at most uses $r_j$ units of
probability in coordinate $j$.
We therefore focus on verifying the revenue constraints.

We begin with \SigC[']{\SIG[']}, which only saw its probabilities decrease.
The seller's revenue at the target index $\SigI{\SIG[']}$ decreased by
$\Val{\SigI{\SIG[']}} \cdot d_{[i,\NUMVAL]}$.
Consider any index $j > \SigI{\SIG[']}$.
The seller's revenue at index $j$ decreased by $\Val{j} \cdot d_{[j,\NUMVAL]}$.
If $j \notin I$ at the termination of the algorithm,
then $d_j = r_j$, and $\SigP[']{j}{\SIG[']} = 0$, meaning that
price $j$ is no more attractive than $j+1$ or $j-1$ to the seller.
Otherwise, Lemma~\ref{lem:revenue-tight} implies that
\begin{align*}
\Val{j} \cdot d_{[j,\NUMVAL]} 
& = u_j \cdot d_{[j,\NUMVAL]}
\; \stackrel{\text{Lemma~\ref{lem:revenue-tight}}}{=} \; u_i \cdot d_{[i,\NUMVAL]}
\; = \; \Val{\SigI{\SIG}} \cdot d_{[i,\NUMVAL]}
\; \geq \; \Val{\SigI{\SIG[']}} \cdot d_{[i,\NUMVAL]}.
\end{align*}
In particular, $j$ cannot have become more attractive to the seller
than \SigI{\SIG[']}.

We next verify that the revenue constraints are also satisfied for the
signal \OSVec[']. Here, the seller's revenue for price point $j$
increased by $\Val{j} \cdot d_{[j,\NUMVAL]}$.
For all $j > i$, the algorithm \algoname{Construct-One-Signal} ensures that 
$\Val{j} \cdot d_{[j,\NUMVAL]} 
= u_j \cdot d_{[j,\NUMVAL]} 
\leq u_i \cdot d_{[i,\NUMVAL]} 
= \Val{\SigI{\SIG}} \cdot d_{[i,\NUMVAL]}$,
so the increase in revenue is no larger for price point $j$ than for
\SigI{\SIG}.
Thus, we have obtained the revenue constraint
$\Val{\SigI{\SIG}} \cdot \TS[']{\SigI{\SIG}} \geq \Val{j} \cdot \TS[']{j}$.
By rewriting $\TS[']{\SigI{\SIG}} = \TS[']{j} + (\TS[']{\SigI{\SIG}} - \TS[']{j})$
and rearranging, we obtain the useful form
\begin{align}
\TS[']{j} & \leq
\frac{\Val{\SigI{\SIG}} \cdot \PMY[']{[\SigI{\SIG},j)}}{%
\Val{j} - \Val{\SigI{\SIG}}}.
\label{eqn:tailsum-bound-large}
\end{align}

The slightly tricky part is the indices
$j \in \SET{\SigI{\SIG} + 1, \ldots, i}$.
While the probabilities \OS[']{j} for $j < i$ do not increase,
the tail probabilities \TS[']{j} do by virtue of increases in
\OS[']{j} for $j \geq i$; hence, we need to also consider these price points.
Fix such a $j \in \SET{\SigI{\SIG} + 1, \ldots, i}$.
We will first show that $\TS[']{j} \leq \TS{j}$.

Let $i' > i$ be a smallest index with
$\OS{i'} < \PVal{i'} - \sum_{\SIG[''] < \SIG} \SigP{i'}{\SIG['']}$.
(If no such $i'$ exists, then let $i' = \NUMVAL + 1$.)
We will first show that $\TS[']{i'} \leq \TS{i'}$.
This holds trivially when $i' = \NUMVAL + 1$.
Otherwise, we apply Lemma~\ref{lem:revenue-tight} to the construction
of \OSVec, and obtain that that 
$\TS{i'} = \frac{\Val{\SigI{\SIG}} \cdot \PMY{[\SigI{\SIG},i')}}{%
\Val{i'} - \Val{\SigI{\SIG}}}$.
Now, notice that for all $j' \in \SET{\SigI{\SIG}, \ldots, i'-1}$, 
we have that $\OS[']{j'} \leq \OS{j'}$, for different reasons.
\begin{enumerate}
\item For $j' > i$, this follows because the definition of $i'$
  implies that
$\OS{j'} = \PVal{j'} - \sum_{\SIG[''] < \SIG} \SigP{j'}{\SIG['']}$
is as large as it can possibly be.
\item For $j' = i$, it follows because
$\OS[']{i} = \OS[*]{i} + \epsilon \leq \OS{i}$.
\item For $j' < i$, it follows because
$\OS[']{j'} = \OS[*]{j'} = \OS{j'}$.
\end{enumerate}
This implies that 
$\PMY[']{[\SigI{\SIG},i')} \leq \PMY{[\SigI{\SIG},i')}$,
and hence --- using Inequality~\eqref{eqn:tailsum-bound-large} ---
that $\TS[']{i'} \leq \TS{i'}$.
Finally, the previous three cases show that for the fixed $j$,
\begin{align*}
\TS[']{j} 
& = \TS[']{i'} + \sum_{j' = j}^{i' - 1} \OS[']{j'}
\; \leq \; \TS{i'} + \sum_{j' = j}^{i' - 1} \OS{j'}
\; = \; \TS{j}.
\end{align*}

Having shown that $\TS[']{j} \leq \TS{j}$, we next apply
Lemma~\ref{lem:revenue-tight} to \OSVec at price point $j$
to obtain that
$\TS{j} \leq \frac{\Val{\SigI{\SIG}} \cdot \PMY{[\SigI{\SIG},j)} }{%
\Val{j} - \Val{\SigI{\SIG}}}$.
Combining this with the fact that
$\PMY{[\SigI{\SIG},j)} = \PMY[']{[\SigI{\SIG},j)}$
for $j < i$ by the third case of the above case distinction,
we obtain that
$\TS[']{j} \leq \frac{\Val{\SigI{\SIG}} \cdot \PMY[']{[\SigI{\SIG},j)}}{%
\Val{j} - \Val{\SigI{\SIG}}}$,
which is an equivalent way of rewriting the revenue constraint.

Again, this modification makes \OS{i} and \OS[*]{i} more similar, and
repeating this procedure with additional signals \SIG['] as needed, they
will eventually become the same.
\end{proof}

 \section{Proof of Lemma~\ref{lem:small-increase-welfare}}
\label{sec:submodularity-proofs}

In this section, we provide a proof of
Lemma~\ref{lem:small-increase-welfare}, restated here for convenience.

\begin{rtheorem}{Lemma}{\ref{lem:small-increase-welfare}}
If $S \subseteq T$, then for any $k, B, \epsilon$: 
$\PWelfare[(k,B+\epsilon)]{T} - \PWelfare[(k,B)]{T}
\leq \PWelfare[(k,B+\epsilon)]{S} - \PWelfare[(k,B)]{S}$.
\end{rtheorem}

We will prove this lemma for sufficiently small
$\epsilon$, which allows us to couple the executions tightly; the
inequalities can then be added to imply the lemma for arbitrary
$\epsilon$.

By comparing the solutions to the linear
program~\eqref{lp:welfare-maximization}, we can ensure that any
constraint that becomes tight in the solution for set $T$ with
bound $B + \epsilon$, but is not tight with bound $B$ (and similarly
for $S$) would not have become tight for any $\epsilon' < \epsilon$.
This will localize the changes, and use revenue indifference for the
seller.
By summing over all such iterations (there will only be finitely
many, because $\epsilon$ is chosen so that at least one more
constraint becomes tight), we eventually prove the lemma.

In the analysis, we are interested in four different signaling
schemes, constructed by \algoname{Construct-Signaling-Scheme} when run
with different sets of price points and upper bounds $B$.
We will assume here that $k \in S \subseteq T$.
Specifically we define:

\begin{itemize}
\item $\PMS = (\pmS{\SIG}{i})_{\SIG,i}$: the probability mass of
  type $i$ assigned to signal \SIG when the algorithm is run with
  price point set $S$ and an upper bound of $B$.
\item $\PMSP = (\pmSp{\SIG}{i})_{\SIG,i}$: probability mass for price
  point set $S$ and an upper bound of $B+\epsilon$.
\item $\PMT = (\pmT{\SIG}{i})_{\SIG,i}$: probability mass for price
  point set $T$ and an upper bound of $B$.
\item $\PMTP = (\pmTp{\SIG}{i})_{\SIG,i}$: probability mass for price
  point set $T$ and an upper bound of $B+\epsilon$.
\end{itemize}

To avoid notational confusion, we will use 
\SIGS to denote signals under the signaling scheme for $S \cup \SET{k}$
and \SigIS{\SIG} to denote their price points;
signals under the signaling scheme for $T \cup \SET{k}$ are denoted by
\SIGT, and their price points by \SigIT{\SIGT}.

Since we are interested in the change in the signaling scheme as we
increase the bounds from $B$ to $B+\epsilon$, we define
$\dS{\SIGS}{i} = \pmS{\SIGS}{i} - \pmSp{\SIGS}{i}$ and
$\dT{\SIGT}{i} = \pmT{\SIGT}{i} - \pmTp{\SIGT}{i}$.

In order to understand the \dS{\SIG}{i} better, consider the effect of
changing the total probability constraint for the signal with price
point $k$ from $B$ to $B+\epsilon$. When the signal with price point
$k$ is constructed, we may now add some more probability mass for
types $i \geq k$. 
Subsequently, by Lemma~\ref{lem:pointwise-dominance}, the algorithm
continues with less (or equal) residual probability mass for all
types. This might in turn mean that for later signals \SIG with price
points $k' < k$, the probability mass for some type $i$ might get used
up earlier. In turn, this will speed up the addition of probability
mass for types $i' \in \SET{k'+1, \ldots, i-1}$.
In a sense, what happens is that the additional probability mass for
the signal with price point $k$ ``displaces'' some of that mass from
other signals, increasing different probability masses.

Consider some signal $\SIG \in \SIGSET$.
Let $\LSig{\SIGSET}{\SIG} = \Set{\SIG['] \in \SIGSET}{\SIG['] \leq \SIG}$ be the set of
signals constructed up to \SIG. 
We are interested in which price points may change their overall
allocation of probability in the signals constructed up to \SIG as a
result of the increase from $B$ to $B+\epsilon$.
Notice that the candidates are only those that did not have their
probability mass already used up when the construction reached \SIG
with an upper bound of $B$. 
To capture this, we define
$\SigI{\SIG} < \nfS[\SIG]{1} < \nfS[\SIG]{2} < \ldots < \nfS[\SIG]{\nfSn[\SIG]}$ 
to be the indices which still had probability mass available
after signal \SIG was constructed, i.e., such that
$\opmS{\LSig{\SIGSET}{\SIG}}{\nfS[\SIG]{j}} <
\PVal{\nfS[\SIG]{j}}$.
For notational convenience, we define
$\nfS[\SIG]{\nfSn[\SIG]+1} = \NUMVAL+1$ and $\Val{\NUMVAL+1} = \infty$.
Similarly, define
$\SigIT{\SIGT} < \nfT[\SIGT]{1} < \nfT[\SIGT]{2} < \ldots < \nfT[\SIGT]{\nfTn[\SIGT]}$
to be the indices with
$\opmT{\LSig{\SIGSETT}{\SIGT}}{\nfT[\SIGT]{j}} < \PVal{\nfT[\SIGT]{j}}$.

We are particularly interested in indices whose overall total
probability mass (at the end of \algoname{Construct-Signaling-Scheme}) 
can increase. For ease of notation, we therefore define
$\SIG[^S] = \max \Set{\SIG}{k_{\SIG}\in S},\SIGT[^T] = \max \Set{\SIGT}{\SigIT{\SIGT} \in T}$,
and $\nfS{j} = \nfS[{\SIG[^S]}]{j}, \nfT{j} = \nfT[{\SIGT[^T]}]{j}$;
note that this implies $\pmS{\SIGSETS}{\nfS{j}} < \PVal{i}$
for all $j$, and similarly for \nfT{j}.
One type $i < k$ could also see an increase, namely, if the
displaced probability by an increase for a type $i' > k$ which
eventually became saturated causes an increase in probability mass for
some later signal with target $k' < k$. 
The only such target types would be 
$\nfS{0} = \max \Set{i \leq k}{\opmS{\SIGSETS}{i} < \PVal{i}}$
and 
$\nfT{0} = \max \Set{i \leq k}{\opmT{\SIGSETT}{i} < \PVal{i}}$,
respectively.

Notice that by making $\epsilon$ small enough, we can ensure that
$\opmSp{\LSig{\SIGSETS}{\SIG}}{\nfS[\SIG]{j}} < \PVal{\nfS[\SIG]{j}}$
  and $\opmTp{\LSig{\SIGSETT}{\SIGT}}{\nfT[\SIGT]{j}} < \PVal{\nfT[\SIGT]{j}}$
for all indices \nfS[\SIG]{j}, \nfT[\SIGT]{j} and signals $\SIGS,\SIGT$. 

Define $\epsilon$ to be the supremum of all such values.
This choice of $\epsilon$ ensures that at least one of the above
inequalities becomes tight, but for any 
$\epsilon' < \epsilon$, we have 
$\opmSp{\LSig{\SIGSETS}{\SIG}}{\nfS[\SIG]{j}} < \PVal{\nfS[\SIG]{j}}$
and $\opmTp{\LSig{\SIGSETT}{\SIGT}}{\nfT[\SIGT]{j}} < \PVal{\nfT[\SIGT]{j}}$. 
With the chosen $\epsilon$, for the index (or indices) where the
inequality becomes tight, we still have a tight revenue constraint,
because in the execution of \algoname{Construct-One-Signal}, the index
was removed from $I$ at the same time as $\SigI{\SIG}$, in the last
round of the iteration.

By choosing the proper $\epsilon$, at least one of 
$\nfS[\SIG]{j}$ and $\nfT[\SIGT]{j}$ is removed from the index set
for the next larger value $B' = B+\epsilon$. 
Because there are only finitely many candidate indices, finitely many
updates will reach the $B$ such that $\pmT{\SIGT[k]}{k} = \PVal{k}$
when running with a bound of $B$. Beyond that value of $B$, the signal
\SIGT[k] cannot be further raised.
 
As we argued above, the increase in probability mass for the signal
with price index $k$ will reduce the probability mass available for
other signals constructed subsequently in the algorithm
\algoname{Construct-Signaling-Scheme}. 
Let \SIG be such a signal, with price point $\SigI{\SIG} < k$.
A lack of available probability mass for low-value buyer types when
\SIG is constructed may make its target price \Val{\SigI{\SIG}} less
attractive to the seller; to compensate, the signal must reduce the
amount of probability mass for high-value buyers it uses.
The following lemma captures the necessary reduction.

\begin{lemma} \label{lem:compensate-for-loss}
Fix some signal \SIG, 
and let $i < i', j < j'$ be indices such that each of 
$i, i', j, j'$ is either equal to $k'$ or to one of the
\nfS[\SIG]{j''}. Then,

\begin{align} 
\frac{\dS{\SIG}{[j, j')}}{\inversediffi{j}{j'}}
& = \frac{\dS{\SIG}{[i,i')}}{\inversediffi{i}{i'}}.
\label{eqn:keep-seller-indifferent}
\end{align}

An analogous characterization holds for the
\nfT[\SIGT]{j''} and \dT{\SIGT}{[j, j')}.
\end{lemma}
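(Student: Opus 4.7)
The plan is to exploit the characterization of tight revenue constraints from Lemma~\ref{lem:revenue-tight}. For the signal \SIG with target price $k' = \SigI{\SIG}$, recall that the set of ``available'' indices $\SET{k'} \cup \SET{\nfS[\SIG]{j''}}_{j''}$ consists precisely of those indices that remain in $I$ when \algoname{Construct-One-Signal} terminates. For every such index $r$, Lemma~\ref{lem:revenue-tight} tells us that the revenue constraint is tight, i.e., $\Val{r} \cdot \pmS{\SIG}{[r,\NUMVAL]} = \Val{k'} \cdot \pmS{\SIG}{[k',\NUMVAL]}$. The first key observation is that, by the choice of $\epsilon$ made in the preceding discussion (small enough that no new index becomes saturated in the scheme with bound $B+\epsilon$), exactly the same collection of indices is tight in the revenue constraint for both \PMS and \PMSP. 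Thus we also have $\Val{r} \cdot \pmSp{\SIG}{[r,\NUMVAL]} = \Val{k'} \cdot \pmSp{\SIG}{[k',\NUMVAL]}$.

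Subtracting the two tight equalities yields, for every $r$ in the allowed set,
\[
\Val{r} \cdot \dS{\SIG}{[r,\NUMVAL]} \;=\; \Val{k'} \cdot \dS{\SIG}{[k',\NUMVAL]}.
\]
Denoting the constant $C_{\SIG} := \Val{k'} \cdot \dS{\SIG}{[k',\NUMVAL]}$ (which depends only on \SIG), we may rewrite this as $\dS{\SIG}{[r,\NUMVAL]} = C_{\SIG}/\Val{r}$. Consequently, for any two allowed indices $i < i'$,
\[
\dS{\SIG}{[i,i')} \;=\; \dS{\SIG}{[i,\NUMVAL]} - \dS{\SIG}{[i',\NUMVAL]} \;=\; C_{\SIG} \cdot \left(\tfrac{1}{\Val{i}} - \tfrac{1}{\Val{i'}}\right),
\]
so that $\dS{\SIG}{[i,i')}/\inversediffi{i}{i'} = C_{\SIG}$. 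Since the right-hand side is independent of the particular pair chosen, applying this to both $(i,i')$ and $(j,j')$ gives equation \eqref{eqn:keep-seller-indifferent}.

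The analogous statement for \PMT and \PMTP follows by an identical argument with $T$ in place of $S$, using the fact that the indices $\nfT[\SIGT]{j''}$ retain their tight revenue constraints under both bounds for sufficiently small $\epsilon$.

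The only delicate point is the preservation of tightness across the two executions; everything else is a direct algebraic manipulation of the tail-sum equalities. This is precisely why $\epsilon$ was defined in the surrounding proof as the supremum of values for which no ``new'' constraint becomes tight in either scheme — this ensures that the set of tight revenue constraints is identical for \PMS and \PMSP, and likewise for \PMT and \PMTP, which is what allows subtraction of the two equalities to go through cleanly.
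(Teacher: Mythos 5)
Your proposal is correct and follows essentially the same route as the paper's proof: invoke Lemma~\ref{lem:revenue-tight} to get tight revenue equalities at the indices $k'$ and $\nfS[\SIG]{j''}$ under both $\PMS$ and $\PMSP$ (using the choice of $\epsilon$), subtract, and do the telescoping algebra with the $1/\Val{\cdot}$ differences. The only small difference is that the paper is a bit more explicit about why the boundary case is fine (when $\epsilon$ is exactly the supremum, an index that just becomes saturated was removed from $I$ in the last iteration of \algoname{Construct-One-Signal}, hence still carries a tight revenue constraint), but your argument captures the same idea.
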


\begin{proof}
Let $k'=\SigI{\SIG}$.
Because $\opmS{\LSig{\SIGSET}{\SIG}}{\nfS[\SIG]{j}} < \PVal{\nfS[\SIG]{j}}$,
we can apply Lemma~\ref{lem:revenue-tight} to the iteration in which \SIG
was constructed, and infer that for all $j \leq \nfSn[\SIG]$,
\begin{align*}
\Val{k'} \cdot \pmS{\SIG}{[k',\NUMVAL]}  
& = \Val{\nfS[\SIG]{j}} \cdot \pmS{\SIG}{[\nfS[\SIG]{j},\NUMVAL]}, 
&
\Val{k'} \pmSp{\SIG}{[k',\NUMVAL]}  
& = \Val{\nfS[\SIG]{j}} \cdot \pmSp{\SIG}{[\nfS[\SIG]{j},\NUMVAL]},
\end{align*}
whence 
$\Val{k'} \cdot \dS{\SIG}{[k',\NUMVAL]} 
= \Val{\nfS[\SIG]{j}} \cdot \dS{\SIG}{[\nfS[\SIG]{j},\NUMVAL]}$ follows.

\begin{align*}
\dS{\SIG}{[\nfS[\SIG]{j}, \nfS[\SIG]{j'})}
& =   \dS{\SIG}{[\nfS[\SIG]{j},\NUMVAL]}
    - \dS{\SIG}{[\nfS[\SIG]{j'}, \NUMVAL]}
\; = \; \frac{\Val{k'}}{\Val{\nfS[\SIG]{j}}} \cdot \dS{\SIG}{[k',\NUMVAL]} 
      - \frac{\Val{k'}}{\Val{\nfS[\SIG]{j'}}} \cdot \dS{\SIG}{[k',\NUMVAL]} 
\; = \; \big( \frac{1}{\Val{\nfS[\SIG]{j}}} - \frac{1}{\Val{\nfS[\SIG]{j'}}} \big)
      \cdot \Val{k'} \cdot \dS{\SIG}{[k',\NUMVAL]} 
\end{align*}

Hence, $\frac{\dS{\SIG}{[\nfS[\SIG]{j}, \nfS[\SIG]{j'})}}{%
\inversediffi{\nfS[\SIG]{j}}{\nfS[\SIG]{j'}}} 
= \Val{k'} \cdot \dS{\SIG}{[k',\NUMVAL+1)}$ for all 
$j < j' \leq \nfSn[\SIG]+1$, and it is easy to see that this
calculation applies for $j = k'$ as well.
\end{proof}

\begin{lemma} \label{lem:subsequence}
The $\nfS{j}, \nfT{j}, \nfS[\SIG]{j}, \nfT[\SIGT]{j}$ satisfy the
following subsequence properties.
\begin{enumerate}
\item Each index \nfT{j} also appears as an element in the sequence
  of \nfS{j}. 
\label{lem:subsequence:unindexed}
\item Let $k'$ be a price point and $\SIGS, \SIGT$ signals such that
\SIGS has price point $k'$ under \PMS and \SIGT has price point $k'$
under \PMT.
Then, each index \nfT[{\SIGT}]{j} also appears as
an element in the sequence of \nfS[\SIG]{j}.
\label{lem:subsequence:XvsY}
\item If $\SIG > \SIG[']$, then each index
$\nfS[\SIG]{j} \geq \SigIS{\SIG[']}$
also appears as an element in the sequence of \nfS[{\SIG[']}]{j}.
Similarly for  \nfT[\SIGT]{j} and \nfT[{\SIGT[']}]{j}.
\label{lem:subsequence:differentJ}
\end{enumerate}
\end{lemma}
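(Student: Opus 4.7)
The plan is to derive all three parts from the monotonicity of the greedy allocation, building on Lemma~\ref{lem:pointwise-dominance}. The unifying observation is: whenever the cumulative probability allocated to type $i$ at an initial segment of one execution dominates that of the analogous segment of another execution, an index that is unsaturated on the dominating side must also be unsaturated on the dominated side, so it appears in the corresponding $\nfS[\cdot]{\cdot}$ or $\nfT[\cdot]{\cdot}$ sequence.

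For part~(\ref{lem:subsequence:differentJ}), the domination is purely temporal and internal to one execution. Since $\SIG > \SIG[']$, we have $\LSig{\SIGSETS}{\SIG[']} \subseteq \LSig{\SIGSETS}{\SIG}$, so nonnegativity of entries gives $\opmS{\LSig{\SIGSETS}{\SIG[']}}{i} \leq \opmS{\LSig{\SIGSETS}{\SIG}}{i}$ for every $i$. Hence $\opmS{\LSig{\SIGSETS}{\SIG}}{\nfS[\SIG]{j}} < \PVal{\nfS[\SIG]{j}}$ propagates to the analogous inequality at stage $\SIG[']$. The hypothesis $\nfS[\SIG]{j} \geq \SigI{\SIG[']}$ combined with the fact that each price point is fully saturated the moment its own signal is constructed (so $\opmS{\LSig{\SIGSETS}{\SIG}}{\SigI{\SIG[']}} = \PVal{\SigI{\SIG[']}}$ for all $\SIG \geq \SIG[']$, ruling out equality) places $\nfS[\SIG]{j}$ strictly above $\SigI{\SIG[']}$, making it a legitimate element of the $\nfS[{\SIG[']}]{j'}$ sequence.

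Parts~(\ref{lem:subsequence:unindexed}) and~(\ref{lem:subsequence:XvsY}) hinge on a structural analogue that I plan to establish as a key auxiliary claim: for every common price point $k'$ of the $S$- and $T$-schemes, with corresponding signals $\SIGS$ and $\SIGT$,
\[
\opmS{\LSig{\SIGSETS}{\SIGS}}{i} \;\leq\; \opmT{\LSig{\SIGSETT}{\SIGT}}{i} \quad \text{for every } i.
\]
Granted this, part~(\ref{lem:subsequence:XvsY}) follows from the unsaturation-transfer argument: $\opmT{\LSig{\SIGSETT}{\SIGT}}{\nfT[\SIGT]{j}} < \PVal{\nfT[\SIGT]{j}}$ forces the corresponding $S$-side inequality, and since $\nfT[\SIGT]{j} > k' = \SigI{\SIGS}$, it lies in the $\nfS[\SIGS]{\cdot}$ sequence. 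Part~(\ref{lem:subsequence:unindexed}) will follow by instantiating $k'$ at the smallest price point of $S$ (which, since $S \subseteq T$, is also in $T$), using the above claim for indices $\nfT{j}$ that fall strictly above this $k'$, and separately handling $\nfT{0}$ via the corollary of Lemma~\ref{lem:pointwise-dominance} together with the defining maximality of $\nfS{0}$.

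The main obstacle is establishing the displayed structural inequality, which is an extension of Lemma~\ref{lem:pointwise-dominance} from the ``same price points, different bounds'' setting to the ``different price points, same bound'' setting. The plan is to induct in price-decreasing order through the union $S \cup T = T$ of price points, stepping both executions at prices in $S$ and stepping only the $T$-execution at prices in $T \setminus S$. The latter steps preserve the inequality trivially, since the $S$-side does nothing while the $T$-side adds nonnegative mass. The former steps are the delicate part, and require replaying the contradiction argument in the proof of Lemma~\ref{lem:pointwise-dominance} --- choosing a minimal offending index, invoking Lemma~\ref{lem:revenue-tight} to force tight revenue equalities on both sides, and comparing weighted sums of prefix sums of the two signals with the monotone coefficients $\Val{\SigI{\SIG}}/(\Val{i}-\Val{\SigI{\SIG}})$ --- but now across two distinct signal sets. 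Verifying that this argument survives the asymmetry between $S$ and $T$ is where the bulk of the technical work will lie.
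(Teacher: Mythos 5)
Your approach is essentially the paper's: all three parts reduce to a monotonicity statement about the greedy allocation, with the transfer being that unsaturation on the pointwise-larger side forces unsaturation on the pointwise-smaller side. Two remarks on where your route differs from (and where it could be streamlined to match) the paper.

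Your part~(\ref{lem:subsequence:differentJ}) is cleaner than the paper's: you only need $\LSig{\SIGSETS}{\SIG[']} \subseteq \LSig{\SIGSETS}{\SIG}$ together with nonnegativity of entries, whereas the paper phrases it as an application of Lemma~\ref{lem:pointwise-dominance}. You also correctly handle the index-range requirement by observing that $\SigI{\SIG[']}$ is always fully saturated once its own signal has been built. That part is fine.

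For parts~(\ref{lem:subsequence:unindexed}) and~(\ref{lem:subsequence:XvsY}), the ``structural inequality'' you plan to establish is precisely the corollary to Lemma~\ref{lem:pointwise-dominance}, applied at a common price point $k'$. You do not need to replay the contradiction argument. The key observation you are missing is that greedy prefixes correspond exactly to full greedy runs on truncated price point sets: because \algoname{Construct-Signaling-Scheme} processes price points in decreasing order, the signals $\LSig{\SIGSETS}{\SIGS}$ constructed for $S$ up to price $k'$ are identical to the signals constructed for the price point set $\Set{k'' \in S}{k'' \geq k'}$, and likewise for $T$. Since $\Set{k'' \in S}{k'' \geq k'} \subseteq \Set{k'' \in T}{k'' \geq k'}$, the corollary of Lemma~\ref{lem:pointwise-dominance} (which the paper records but whose proof is the one-price-point-at-a-time application of the main statement with bound $0$ raised to $\infty$) gives exactly $\opmS{\LSig{\SIGSETS}{\SIGS}}{i} \leq \opmT{\LSig{\SIGSETT}{\SIGT}}{i}$; the case $k' > k$ where the bound $B$ is irrelevant is handled trivially, and the case $k' \leq k$ is the ``for any $B$'' clause. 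So the ``bulk of the technical work'' you anticipate is not required; once you invoke the corollary on truncated sets, your unsaturation-transfer argument closes parts~(\ref{lem:subsequence:unindexed}) and~(\ref{lem:subsequence:XvsY}) in one line each, as the paper does. The only remaining bookkeeping for part~(\ref{lem:subsequence:unindexed}) is, as you note, to check that each unsaturated index lands in the right range of the $\nfS{j}$ sequence, including the separately defined $\nfS{0}$.
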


\begin{emptyproof}
Let \SIGSET[S], \SIGSET[T] denote the sets of signals corresponding to
the price point sets $S, T$, respectively.
\begin{enumerate}
\item By Lemma~\ref{lem:pointwise-dominance}, applied to 
$S \subseteq T$, we get that
$ \pmS{\SIGSET[S]}{i} \leq \pmT{\SIGSET[T]}{i}$, 
so whenever 
$ \pmT{\SIGSET[T]}{i} < \PVal{\SIG}$, 
we also have
$ \pmS{\SIGSET[S]}{i} < \PVal{\SIG}$.
\item Analogous to the first part after applying
Lemma~\ref{lem:pointwise-dominance} with
$\SIGSET = \LSig{\SIGSET[S]}{\SIG}$ and 
$\SIGSETT= \LSig{\SIGSET[T]}{\SIGT}.$ The corresponding price
point sets are $\Set{\SigIS{\SIG} \in S}{\SigIS{\SIG}\ge k'}
\subseteq \Set{\SigIT{\SIGT}\in T}{\SigIT{\SIGT}\ge k'}$, respectively.
\item Analogous to the first part after applying
Lemma~\ref{lem:pointwise-dominance} with 
$\SIGSET = \LSig{\SIGSET[S]}{\SIG[']}$ and $\SIGSET['] = \LSig{\SIGSET[S]}{\SIG}$. \QED
\end{enumerate}
\end{emptyproof}

\begin{lemma} \label{lem:non-negative-delta}
Let $\SIG['] > \SIG[k]$ be a signal constructed after the one with
price point $k$.
Then, the total probability mass under signal \SIG[']
for the ``initial saturated segment''
$[\SigIS{\SIG[']}, \nfS[{\SIG[']}]{1})$
cannot increase when $B$ is raised to $B + \epsilon$. 
That is, $\dS{\SIG[']}{[\SigIS{\SIG[']}, \nfS[{\SIG[']}]{1})} \geq 0$.
If furthermore, $\nfS[{\SIG[']}]{1} \leq k$, 
then $\dS{\SIG[']}{[\SigIS{\SIG[']},\nfS[{\SIG[']}]{1})} = 0$.
An analogous statement holds for
$\dT{\SIGT[']}{[\SigIT{\SIGT[']}, \nfT[{\SIGT[']}]{1})}$ in place of
$\dS{\SIG[']}{[\SigIS{\SIG[']}, \nfS[{\SIG[']}]{1})}$.
\end{lemma}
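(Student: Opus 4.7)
For Part 1, I plan to use Lemma~\ref{lem:pointwise-dominance} directly. Let $J := [\SigIS{\SIG[']}, \nfS[{\SIG[']}]{1})$ denote the initial saturated segment. By definition of $\nfS$, each $i\in J$ satisfies $\opmS{\LSig{\SIGSET}{\SIG[']}}{i}=\PVal{i}$. Monotonicity together with the probability cap then forces $\opmSp{\LSig{\SIGSET}{\SIG[']}}{i}=\PVal{i}$ as well. Subtracting off the contribution of $\SIG[']$ gives
\[
  \pmS{\SIG[']}{i} - \pmSp{\SIG[']}{i}
  \;=\; \opmSp{\LSig{\SIGSET}{\SIG[']-1}}{i} - \opmS{\LSig{\SIGSET}{\SIG[']-1}}{i},
\]
and the right-hand side is nonnegative by a second application of Lemma~\ref{lem:pointwise-dominance}. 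Summing over $i\in J$ proves Part 1.

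For Part 2, the hypothesis $\nfS[{\SIG[']}]{1}\le k$ forces every $i\in J$ strictly below $k$, so no signal with price point at least $k$ places any mass on $J$. I plan to reduce $\dS{\SIG[']}{J}=0$ to an equality of residuals using Lemma~\ref{lem:revenue-tight}: applied at $\SigIS{\SIG[']}$ and at $\nfS[{\SIG[']}]{1}$ (which lies in $I$ at termination of \algoname{Construct-One-Signal}), it rewrites $\pmS{\SIG[']}{J}$ as a positive multiple of $\TS{\SigIS{\SIG[']}}$, and expresses $\TS{\SigIS{\SIG[']}}$ itself as a positive multiple of $\RP{J}$ at the start of $\SIG[']$. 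So $\dS{\SIG[']}{J}=0$ is equivalent to the starting residuals on $J$ being the same in $\PMS$ and $\PMSP$, which reduces to showing $\dS{\SIG}{J}=0$ for every intermediate signal $\SIG\in(\SIG[k],\SIG[']\,)$.

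I plan to prove this via strong induction on the middle signals, with inductive claim ``for every signal $\SIG>\SIG[k]$ with $\nfS[\SIG]{1}\le k$, $\dS{\SIG}{[\SigIS{\SIG},\nfS[\SIG]{1})}=0$''. The base case (the signal immediately after $\SIG[k]$) is immediate, because no earlier signal touches types below $k$, so the starting residuals on $[\SigIS{\SIG},\nfS[\SIG]{1})$ coincide with the prior masses $\PVal{i}$ in both executions. For the inductive step, fix a qualifying $\SIG$ and any earlier middle signal $\sigma_0\in(\SIG[k],\SIG)$. If $\SigIS{\sigma_0}\ge\nfS[\SIG]{1}$, then $\sigma_0$ is silent on the block. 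Otherwise, Lemma~\ref{lem:subsequence} (part~3) places $\nfS[\SIG]{1}$ in the sequence $\nfS[\sigma_0]{\cdot}$, which in particular forces $\nfS[\sigma_0]{1}\le\nfS[\SIG]{1}\le k$; the inductive hypothesis then gives $\dS{\sigma_0}{[\SigIS{\sigma_0},\nfS[\sigma_0]{1})}=0$, and Lemma~\ref{lem:compensate-for-loss} propagates this single vanishing to every $\dS{\sigma_0}{[i,i')}$ with endpoints in $\{\SigIS{\sigma_0}\}\cup\nfS[\sigma_0]{\cdot}$, in particular to $\dS{\sigma_0}{[\SigIS{\SIG},\nfS[\SIG]{1})}$. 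Summing over $\sigma_0$ closes the induction; specializing to $\SIG=\SIG[']$ yields Part 2. The $T$-analogue is proved by running the identical argument with $\SIGSETT$ in place of $\SIGSETS$. The most delicate point I anticipate is confirming that every $\nfS[\SIG]{j}$ stays in the termination set $I$ of \algoname{Construct-One-Signal} under $\PMSP$, so that Lemmas~\ref{lem:revenue-tight} and \ref{lem:compensate-for-loss} apply uniformly in both executions; this preservation is precisely what the supremum choice of $\epsilon$ stipulated in the preamble to Lemma~\ref{lem:compensate-for-loss} was designed to guarantee.
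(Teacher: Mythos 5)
Your proof is correct and takes essentially the same approach as the paper's: Part~1 via two applications of Lemma~\ref{lem:pointwise-dominance}, and Part~2 via induction on the signals using Lemma~\ref{lem:subsequence}(part~3) and Lemma~\ref{lem:compensate-for-loss}. The detour through Lemma~\ref{lem:revenue-tight} to set up the Part~2 reduction is unnecessary --- the paper obtains the same reduction directly from the identity $\dS{\SIG[']}{[\SigIS{\SIG[']},\nfS[{\SIG[']}]{1})} = -\sum_{\SIG<\SIG[']}\dS{\SIG}{[\SigIS{\SIG[']},\nfS[{\SIG[']}]{1})}$ established in Part~1 --- but this does not affect correctness.
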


\begin{proof}
By definition of \nfS[{\SIG[']}]{1}, all the probability mass for
indices $i \in [\SigIS{\SIG[']}, \nfS[{\SIG[']}]{1})$ is used up by
signals $1, \ldots, \SIG[']$ under both \PMS and \PMSP. Hence,
\begin{align*}
\sum_{\SIG \leq \SIG[']} \pmS{\SIG}{[\SigIS{\SIG[']},\nfS[{\SIG[']}]{1})} 
& =  \PVal{[\SigIS{\SIG[']},\nfS[{\SIG[']}]{1})}
\; = \;
\sum_{\SIG \leq \SIG[']} \pmSp{\SIG}{[\SigIS{\SIG[']},\nfS[{\SIG[']}]{1})}.
\end{align*}
Taking the difference and solving gives us that
$\dS{\SIG[']}{[\SigIS{\SIG[']},\nfS[{\SIG[']}]{1})} 
= - \sum_{\SIG < \SIG[']}
\dS{\SIG}{[\SigIS{\SIG[']},\nfS[{\SIG[']}]{1})}
\ge 0$ by monotonicity (Lemma~\ref{lem:pointwise-dominance}).

We prove the second part of the lemma by induction over
$\SIG['] > \SIG[k]$.
For the base case, let \SIG['] be the least \SIG such that 
$\nfS[\SIG]{1} \leq k$.
Consider a signal $\SIG < \SIG[']$.
If $\SigIS{\SIG} \leq \nfS[{\SIG[']}]{1}$, then 
part~\ref{lem:subsequence:differentJ} of Lemma~\ref{lem:subsequence}
would imply that \nfS[{\SIG[']}]{1} also appears as a
\nfS[\SIG]{j} for some $j$; in particular, it would imply that 
$\nfS[\SIG]{1} \leq \nfS[{\SIG[']}]{1} \leq k$, contradicting the
minimality of \SIG['].
Hence, $\SigIS{\SIG} > \nfS[{\SIG[']}]{1}$, meaning that no
probability mass is allocated to signals $\SIG < \SIG[']$ for types
$[\SigIS{\SIG}, \nfS[{\SIG[']}]{1})$.
Because by definition, all probability mass for such types is used up
by signals up to \emph{and including} signal \SIG['], we conclude that
$\pmS{\SIG[']}{[\SigIS{\SIG[']}, \nfS[{\SIG[']}]{1})} =
\PVal{[\SigIS{\SIG[']}, \nfS[{\SIG[']}]{1})} = 
\pmSp{\SIG[']}{[\SigIS{\SIG[']}, \nfS[{\SIG[']}]{1})}$,
and therefore
$\dS{\SIG[']}{[\SigIS{\SIG[']}, \nfS[{\SIG[']}]{1})} = 0$.

For the induction step, we use our result from the first part that
$\dS{\SIG[']}{[\SigIS{\SIG[']},\nfS[{\SIG[']}]{1})} 
= - \sum_{\SIG < \SIG[']}
\dS{\SIG}{[\SigIS{\SIG[']},\nfS[{\SIG[']}]{1})}$.
We will show that $\dS{\SIG}{[\SigIS{\SIG[']},\nfS[{\SIG[']}]{1})} = 0$
for all $\SIG < \SIG[']$. 
First, when $\SigIS{\SIG} > \nfS[{\SIG[']}]{1}$, we get that
$\pmS{\SIG}{[\SigIS{\SIG[']},\nfS[{\SIG[']}]{1})}
= \pmSp{\SIG}{[\SigIS{\SIG[']},\nfS[{\SIG[']}]{1})} = 0$, 
implying that $\dS{\SIG}{[\SigIS{\SIG[']},\nfS[{\SIG[']}]{1})} = 0$.
Otherwise, when $\SigIS{\SIG} \leq \nfS[{\SIG[']}]{1}$,
we first observe that 
$\pmS{\SIG}{[\SigIS{\SIG[']},\SigIS{\SIG})}
= \pmSp{\SIG}{[\SigIS{\SIG[']},\SigIS{\SIG})} = 0$,
so
$\dS{\SIG}{[\SigIS{\SIG[']},\nfS[{\SIG[']}]{1})} =
\dS{\SIG}{[\SigIS{\SIG},\nfS[{\SIG[']}]{1})}$.
Lemma~\ref{lem:subsequence} implies that
$\nfS[\SIG]{1} \leq \nfS[{\SIG[']}]{1}$, so we can apply
Lemma~\ref{lem:compensate-for-loss} and the induction hypothesis
to show that
\begin{align*}
\dS{\SIG}{[\SigIS{\SIG},\nfS[{\SIG[']}]{1})}
& \stackrel{\text{Lemma~\ref{lem:compensate-for-loss}}}{=}
\dS{\SIG}{[\SigIS{\SIG},\nfS[{\SIG}]{1})}
 \cdot(\inversediffi{\SigIS{\SIG}}{\nfS[{\SIG[']}]{1}})/(\inversediffi{\SigIS{\SIG}}{\nfS[{\SIG}]{1}})
\; \stackrel{\text{I.H.}}{=} \; 0,
\end{align*}
completing the proof.
\end{proof}

We next get to the key lemma for submodularity. It compares the
effects of the increase of $\epsilon$ under $S$ and $T$ on the same
signal \SIG.
At a very high level, it says that the effects on \SIG
in terms of the allocated probability mass of low-value types is more
severe for $T$ than for $S$. The precise form is a bit subtle.
To avoid notational confusion, we will use 
\SIGS to denote signals under the signaling scheme for $S \cup \SET{k}$
and \SigIS{\SIG} to denote their price points;
signals under the signaling scheme for $T \cup \SET{k}$ are denoted by
\SIGT, and their price points by \SigIT{\SIGT}.
Recall that signals are sorted by decreasing price points, 
so that $\SigIS{\SIGS+1} < \SigIS{\SIGS}, \SigIT{\SIGT+1} < \SigIT{\SIGT}$.
We frequently want to find, for a given signal \SIGT, the signal \SIGS
with closest greater (or equal) price point to \SigIT{\SIGT}.
Hence, for any signal \SIGT under $T \cup \SET{k}$, we define
$\ITtoIS{\SIGT} = \max \Set{\SIGS}{\SigIS{\SIGS} \geq \SigIT{\SIGT}}$. 

\begin{lemma} \label{lem:one-signal-comparison}
Let \SIG[k] be the signal with price point $k$ for $S \cup \SET{k}$,
and \SIGT[k] the signal with price point $k$ for $T \cup \SET{k}$. 
Let $\SIGT['] \geq \SIGT[k]$ be any signal, with price point
$k' = \SigIT{\SIGT[']}$. Then,
\begin{align*}
\sum_{\SIG=\SIG[k]}^{\ITtoIS{\SIGT[']}} \dS{\SIG}{[\SigI{\SIG},\nfT[{\SIGT[']}]{j})}
& \leq 
\sum_{\SIGT=\SIGT[k]}^{\SIGT[']} \dT{\SIGT}{[\SigIT{\SIGT},\nfT[{\SIGT[']}]{j})}.
\end{align*}
\end{lemma}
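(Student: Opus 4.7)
The plan is to proceed by induction on $\SIGT[']$, starting from the base case $\SIGT[']=\SIGT[k]$ and stepping upward one signal at a time. For the base case, $\ITtoIS{\SIGT[k]}=\SIG[k]$ since $k\in S$, so both sides collapse to single terms, $\dS{\SIG[k]}{[k,\nfT[{\SIGT[k]}]{j})}$ and $\dT{\SIGT[k]}{[k,\nfT[{\SIGT[k]}]{j})}$. By Lemma~\ref{lem:subsequence}(2), the index $\nfT[{\SIGT[k]}]{j}$ also appears in the sequence $\nfS[{\SIG[k]}]{\cdot}$, so Lemma~\ref{lem:compensate-for-loss} applies on both sides and rewrites each term as $\Val{k}\cdot(\inversediffi{k}{\nfT[{\SIGT[k]}]{j}})$ times the respective total tail delta $\dS{\SIG[k]}{[k,\NUMVAL]}$ or $\dT{\SIGT[k]}{[k,\NUMVAL]}$. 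Both tail deltas equal $\epsilon$ by the construction of the truncated \algoname{Construct-One-Signal}, so the base case holds with equality.

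For the inductive step, assume the claim for the $T$-side predecessor of $\SIGT[']$ and write $\SIGS[*]=\ITtoIS{\SIGT[']}$. The first case is $\SigIT{\SIGT[']}\in S$: then $\SigIS{\SIGS[*]}=\SigIT{\SIGT[']}$ and $\ITtoIS{\cdot}$ advances by exactly one, so the transition to $\SIGT[']$ adds exactly one term to each side while shifting all upper endpoints from those of the predecessor to $\nfT[{\SIGT[']}]{\cdot}$. Using Lemma~\ref{lem:compensate-for-loss} on each piece, with endpoint validity guaranteed by Lemma~\ref{lem:subsequence}, the comparison reduces to bounding $\dS{\SIGS[*]}{[\SigIS{\SIGS[*]},\NUMVAL]}$ against $\dT{\SIGT[']}{[\SigIT{\SIGT[']},\NUMVAL]}$; this is handled by conservation of probability across signals together with the inductive hypothesis applied to higher-priced signals, whose outputs determine the residual probabilities feeding into these tails. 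The remaining case is $\SigIT{\SIGT[']}\in T\setminus S$: no new term is added to the LHS while the RHS gains $\dT{\SIGT[']}{[\SigIT{\SIGT[']},\nfT[{\SIGT[']}]{j})}$. An argument analogous to Lemma~\ref{lem:non-negative-delta} shows that the initial saturated-segment contribution on the $T$ side is non-negative, and the endpoint shift is again handled piecewise via Lemma~\ref{lem:compensate-for-loss}.

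The main obstacle is the upper-endpoint shift between induction steps. Because constructing signal $\SIGT[']$ may saturate additional types, the sequences $\nfT[\cdot]{\cdot}$ at consecutive signals can differ non-trivially, so the LHS and RHS at successive induction steps are evaluated over different intervals and the inductive hypothesis does not plug in directly. The strategy for handling this is to decompose each interval $[\SigIS{\SIG},\nfT[{\SIGT[']}]{j})$ into sub-intervals whose endpoints lie in the $\nfS[\SIG]{\cdot}$ sequence, using the nested subsequence structure of Lemma~\ref{lem:subsequence}; apply Lemma~\ref{lem:compensate-for-loss} on each sub-interval to convert it into a proportionality involving the tail delta $\dS{\SIG}{[\SigIS{\SIG},\NUMVAL]}$; and sum, combining with the monotonicity given by Lemma~\ref{lem:pointwise-dominance}. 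This fine-grained bookkeeping, together with matching $S$-side and $T$-side endpoints via Lemma~\ref{lem:subsequence}, is the technically heaviest part of the proof.
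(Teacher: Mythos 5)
Your proposal captures the high-level structure (induction on \SIGT['], case split on whether $\SigIT{\SIGT[']}\in S$, the subsequence lemma to align endpoints, Lemma~\ref{lem:compensate-for-loss} to rescale intervals), but it has a genuine gap at exactly the point you flag as the ``main obstacle.''

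In the inductive step, you claim that when $\SigIT{\SIGT[']}\in S$ the comparison ``reduces to bounding $\dS{\SIGS[*]}{[\SigIS{\SIGS[*]},\NUMVAL]}$ against $\dT{\SIGT[']}{[\SigIT{\SIGT[']},\NUMVAL]}$.'' No such term-by-term comparison is available, and this is precisely why the paper's argument is delicate. What the induction hypothesis actually gives you is a family of \emph{prefix-sum} inequalities: for each $\SIGS[''] \leq \SIGS[']$, the cumulative $S$-side delta up to $\SIGS['']$ is dominated by the cumulative $T$-side delta over the corresponding block of $T$-signals. When you then rescale each term via Lemma~\ref{lem:compensate-for-loss} to shift the upper endpoint from $\nfT[{\SIGT[']}]{1}$ to $\nfT[{\SIGT[']}]{j}$, the rescaling factor $(\inversediffi{\SigIS{\SIG}}{\nfT[{\SIGT[']}]{j}})/(\inversediffi{\SigIS{\SIG}}{\nfT[{\SIGT[']}]{1}})$ depends on \SIG and is \emph{monotone decreasing} in \SIG. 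Converting prefix-sum dominance into dominance of the rescaled sum requires exactly the Abel-summation / rearrangement argument encapsulated in Lemma~\ref{lem:prefix-sums-imply-combination}, which your sketch never invokes or replaces. Without it, ``apply Lemma~\ref{lem:compensate-for-loss} on each sub-interval and sum'' does not go through, since the summands you produce are weighted differently and you have no per-term bound.

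You are also missing the subcase distinction the paper makes when $\SigIT{\SIGT[']}\in S$: whether $\nfS[{\ITtoIS{\SIGT[']}}]{1}\leq k$ or $>k$. When it is $\leq k$, the new $S$-side term vanishes outright (via Lemma~\ref{lem:non-negative-delta} and the rescaling lemma), and the step follows by adding a non-negative $T$-side term to the IH — no prefix-sums argument is needed. When it is $>k$, the argument first fixes $j=1$ (where the $T$-side sum is exactly $0$ by conservation of probability in the saturated block, and the $S$-side is $\leq 0$ by monotonicity) and only then lifts to $j>1$ via the prefix-sums lemma. Your sketch conflates these, and also has a minor sign slip in the base case (the total tail deltas equal $-\epsilon$, not $\epsilon$, under the paper's definition $\dS{\SIGS}{i}=\pmS{\SIGS}{i}-\pmSp{\SIGS}{i}$).
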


\begin{emptyproof}
We prove the statement by induction on \SIGT['].
The base case $\SIGT['] = \SIGT[k]$ is true because
Lemma~\ref{lem:compensate-for-loss},  applied with
$i = j = k$, $i' = \nfT[{\SIGT[k]}]{j}$ and $j' = \NUMVAL + 1$
implies that
$\dS{\SIGS[k]}{[k, \nfT[{\SIGT[k]}]{j})} 
= \dT{\SIGT[k]}{[k, \nfT[{\SIGT[k]}]{j})} 
= -\epsilon \Val{k} \cdot(\inversediffi{k}{\nfT[{\SIGT[k]}]{j}})$.
We now focus on the induction step, and distinguish three cases.

\begin{enumerate}
\item If $\SigIT{\SIGT[']} \notin S$, then
$\sum_{\SIG=\SIG[k]}^{\ITtoIS{\SIGT[']}} \dS{\SIG}{[\SigI{\SIG},\nfT[{\SIGT[']}]{j})}
= \sum_{\SIG=\SIG[k]}^{\ITtoIS{\SIGT[']-1}} \dS{\SIG}{[\SigI{\SIG},\nfT[{\SIGT[']}]{j})}$.
By induction hypothesis, the latter is at most
$\sum_{\SIGT=\SIGT[k]}^{\SIGT[']-1} \dT{\SIGT}{[\SigIT{\SIGT},\nfT[{\SIGT[']}]{j})}$.
By Lemmas~\ref{lem:compensate-for-loss} and \ref{lem:non-negative-delta},
$\dT{\SIGT[']}{[\SigIT{\SIGT[']}, \nfT[{\SIGT[']}]{j})} \geq 0$, and
adding this inequality proves the induction step.

\item If $\SigIT{\SIGT[']} \in S$ and $\nfS[{\ITtoIS{\SIGT[']}}]{1} \leq k$,
then write $\SIGS['] = \ITtoIS{\SIGT[']}$.
By Lemma~\ref{lem:subsequence}, $\nfT[{\SIGT[']}]{j} = \nfS[{\SIGS[']}]{j'}$ 
for some $j'$, so we can apply Lemma~\ref{lem:compensate-for-loss} 
and Lemma~\ref{lem:non-negative-delta} to conclude that
$\dS{\SIG[']}{[\SigI{\SIG[']},\nfT[{\SIGT[']}]{j})} =
\dS{\SIG[']}{[\SigI{\SIG[']},\nfS[{\SIG[']}]{1})} 
  \cdot(\inversediffi{\SigI{\SIG[']}}{\nfT[{\SIGT[']}]{j}}) 
   / (\inversediffi{\SigI{\SIG[']}}{\nfS[{\SIG[']}]{1}}) =0$. 
As in the previous case, we get that
$\dT{\SIGT[']}{[\SigIT{\SIGT[']}, \nfT[{\SIGT[']}]{j})} \geq 0$,
and adding both terms to the inequality obtained from the induction
hypothesis now completes the inductive step.

\item Otherwise, we are in the case that $\SigIT{\SIGT[']} \in S$
and $\nfS[{\ITtoIS{\SIGT[']}}]{1} > k$; 
we again write $\SIGS['] = \ITtoIS{\SIGT[']}$.
First, we are going to show that the lemma holds for \SIG['] with $j=1$.
By Lemma~\ref{lem:pointwise-dominance}, we obtain that
\begin{align*}
\sum_{\SIGS=\SIGS[k]}^{\SIGS[']} 
     \dS{\SIGS}{[\SigIS{\SIGS},\nfT[{\SIGT[']}]{1})} 
& =  \sum_{\SIGS=\SIGS[k]}^{\SIGS[']}
\big( \pmS{\SIGS}{[\SigIS{\SIGS},\nfT[{\SIGT[']}]{1})} 
    - \pmSp{\SIGS}{[\SigIS{\SIGS},\nfT[{\SIGT[']}]{1})} \big)
\; \leq \; 0,
\end{align*}
whereas by definition of \nfT[{\SIGT[']}]{1},
\begin{align*}
  \sum_{\SIGT=\SIGT[k]}^{\SIGT[']} 
     \dT{\SIGT}{[\SigIT{\SIGT},\nfT[{\SIGT[']}]{1})} 
& = \sum_{\SIGT=\SIGT[k]}^{\SIGT[']}
  \big( \pmT{\SIGT}{[\SigIT{\SIGT},\nfT[{\SIGT[']}]{1})}
  -\pmTp{\SIGT}{[\SigIT{\SIGT},\nfT[{\SIGT[']}]{1})} \big)
\; = \;  \PVal{[\SigIT{\SIGT},\nfT[{\SIGT[']}]{1})}
      - \PVal{[\SigIT{\SIGT},\nfT[{\SIGT[']}]{1})} 
\; = \; 0.
\end{align*}
Thus, we have shown that 
\begin{align*}
\sum_{\SIGS=\SIGS[k]}^{\SIGS[']} \dS{\SIGS}{[\SigIS{\SIGS},\nfT[{\SIGT[']}]{1})}
& \leq \sum_{\SIGT=\SIGT[k]}^{\SIGT[']} \dT{\SIGT}{[\SigIT{\SIGT},\nfT[{\SIGT[']}]{1})}
\; = \; \sum_{\SIGS=\SIGS[k]}^{\SIGS[']}
  \sum_{\SIGT: \SigIS{\SIG} \leq \SigIT{\SIGT} < \SigIS{\SIG-1}, \SigIT{\SIGT} \leq k}
     \dT{\SIGT}{[\SigIT{\SIGT},\nfT[{\SIGT[']}]{1})}.
\end{align*}
The induction hypothesis implies the same inequality with
$\SIGS[''] < \SIGS[']$ in place of \SIGS['], so that we have for all
$\SIGS[''] \leq \SIGS[']$:
\begin{align}
\sum_{\SIGS=\SIGS[k]}^{\SIGS['']} \dS{\SIGS}{[\SigIS{\SIGS},\nfT[{\SIGT[']}]{1})}
& \leq \sum_{\SIGT:\SigIS{\SIG['']}\le \SigIT{\SIGT} \le k} 
            \dT{\SIGT}{[\SigIT{\SIGT},\nfT[{\SIGT[']}]{1})}
\; = \; \sum_{\SIGS=\SIGS[k]}^{\SIGS['']}
  \sum_{\SIGT: \SigIS{\SIG} \leq \SigIT{\SIGT} < \SigIS{\SIG-1}, \SigIT{\SIGT} \leq k}
     \dT{\SIGT}{[\SigIT{\SIGT},\nfT[{\SIGT[']}]{1})}.
\label{eqn:prefix-comparison}
\end{align}

To extend the result to $j > 1$, 
consider any signal $\SIG \in \SET{\SIG[k]+1, \ldots, \SIG[']}$.
By Part (\ref{lem:subsequence:XvsY}) of Lemma~\ref{lem:subsequence}, 
\nfT[{\SIGT[']}]{1} is equal to \nfS[{\SIG[']}]{j} for some $j$.
Because $\nfS[{\SIG[']}]{1} > k \geq \SigI{\SIG}$, 
by Part (\ref{lem:subsequence:differentJ}) of Lemma~\ref{lem:subsequence}, 
both \nfS[{\SIG[']}]{1} and \nfT[{\SIG[']}]{1} occur as \nfS[\SIG]{j},
\nfS[\SIG]{j'} for some $j, j'$.
We are therefore allowed to apply Lemma~\ref{lem:compensate-for-loss},
and we can write
\begin{align*}
  \sum_{\SIG=\SIG[k]}^{\SIG[']} \dS{\SIG}{[\SigI{\SIG},\nfT[{\SIGT[']}]{j})}  
& = \sum_{\SIG=\SIG[k]}^{\SIG[']}
 \dS{\SIG}{[\SigI{\SIG},\nfT[{\SIGT[']}]{1})}
  \cdot(\inversediffi{\SigIS{\SIG}}{\nfT[{\SIGT[']}]{j}})
     / (\inversediffi{\SigIS{\SIG}}{\nfT[{\SIGT[']}]{1}}), \\
\sum_{\SIGT=\SIGT[k]}^{\SIGT[']}
\dT{\SIGT}{[\SigIT{\SIGT},\nfT[{\SIGT[']}]{j})}  
& = \sum_{\SIG=\SIG[k]}^{\SIG[']}  
  \sum_{\SIGT: \SigIS{\SIG} \leq \SigIT{\SIGT} < \SigIS{\SIG-1}, \SigIT{\SIGT} \leq k}
  \dT{\SIGT}{[\SigIT{\SIGT},\nfT[{\SIGT[']}]{1})}
  \cdot (\inversediffi{\SigIT{\SIGT}}{\nfT[{\SIGT[']}]{j}})
     /  (\inversediffi{\SigIT{\SIGT}}{\nfT[{\SIGT[']}]{1}})\\
& \geq \sum_{\SIG=\SIG[k]}^{\SIG[']}  
    (\inversediffi{\SigIS{\SIG}}{\nfT[{\SIGT[']}]{j}})
  / (\inversediffi{\SigIS{\SIGS}}{\nfT[{\SIGT[']}]{1}})
  \cdot \sum_{\SIGT: \SigIS{\SIG} \leq \SigIT{\SIGT} < \SigIS{\SIG-1}, \SigIT{\SIGT} \leq k}
     \dT{\SIGT}{[\SigIT{\SIGT},\nfT[{\SIGT[']}]{1})}.
\end{align*}
Because $\nfT[{\SIGT[']}]{j} \geq \nfT[{\SIGT[']}]{1}$, the function 
$x \mapsto (\frac{1}{x} - \frac{1}{\Val{\nfT[{\SIGT[']}]{j}}}) /
(\frac{1}{x} - \frac{1}{\Val{\nfT[{\SIGT[']}]{1}}})$ is increasing in $x$ for
$x < \Val{\nfT[{\SIGT[']}]{1}}$. 
Furthermore, by Inequality~\eqref{eqn:prefix-comparison},
we have domination of all prefix sums, so we can
apply Lemma~\ref{lem:prefix-sums-imply-combination} 
with $a_{\SIG} = \dS{\SIG}{[\SigI{\SIG},\nfT[{\SIGT[']}]{1})}$,
$b_{\SIG} = \sum_{\SIGT: \SigIS{\SIG} \leq \SigIT{\SIGT} < \SigIS{\SIG-1}, \SigIT{\SIGT} \leq k}
     \dT{\SIGT}{[\SigIT{\SIGT},\nfT[{\SIGT[']}]{1})}$, and
$c_{\SIG} = (\inversediffi{\SigIS{\SIG}}{\nfT[{\SIGT[']}]{j}})
/(\inversediffi{\SigIS{\SIG}}{\nfT[{\SIGT[']}]{1}})$ to conclude that
$\sum_{\SIG=\SIG[k]}^{\SIG[']} \dS{\SIG}{[\SigI{\SIG},\nfT[{\SIGT[']}]{j})}  
\leq \sum_{\SIGT=\SIGT[k]}^{\SIGT[']} \dT{\SIGT}{[\SigIT{\SIGT},\nfT[{\SIGT[']}]{j})}$,
completing the inductive step. \QED
\end{enumerate}
\end{emptyproof}

\begin{lemma} \label{lem:prefix-sums-imply-combination}
Let $a_1, \ldots, a_n$ and $b_1, \ldots, b_n$ be any numbers such that
for all indices $i \leq n$, the prefixes satisfy that
$\sum_{j=1}^i a_j \leq \sum_{j=1}^i b_j$.
Then, for any coefficients $c_1 \geq c_2 \geq \ldots \geq c_n \geq 0$,
we have that
$\sum_{j=1}^n c_j a_j \leq \sum_{j=1}^n c_j b_j$.
\end{lemma}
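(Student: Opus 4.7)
The plan is to prove this by Abel summation (summation by parts), which is the standard tool for converting a statement about a weighted sum of $a_j$'s into one about the prefix sums $A_i = \sum_{j=1}^i a_j$.

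First, I would introduce the prefix sums $A_i = \sum_{j=1}^{i} a_j$ and $B_i = \sum_{j=1}^{i} b_j$, with the convention $A_0 = B_0 = 0$. By hypothesis, $A_i \leq B_i$ for every $i \in \{1, \ldots, n\}$.

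Next, I would rewrite the weighted sum via summation by parts. Using $a_j = A_j - A_{j-1}$ and rearranging, one obtains the identity
\[
\sum_{j=1}^{n} c_j a_j = c_n A_n + \sum_{j=1}^{n-1} (c_j - c_{j+1}) A_j,
\]
and the analogous identity for the $b_j$'s and $B_j$'s. The key observation is that the coefficients on the prefix sums in this rewriting, namely $c_n$ and $c_j - c_{j+1}$ for $j < n$, are all non-negative by the monotonicity assumption $c_1 \geq c_2 \geq \cdots \geq c_n \geq 0$.

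Finally, since the rewritten expression is a non-negative linear combination of the $A_j$'s, I can apply the pointwise bound $A_j \leq B_j$ term by term to obtain
\[
\sum_{j=1}^{n} c_j a_j = c_n A_n + \sum_{j=1}^{n-1} (c_j - c_{j+1}) A_j \leq c_n B_n + \sum_{j=1}^{n-1} (c_j - c_{j+1}) B_j = \sum_{j=1}^{n} c_j b_j,
\]
which is the desired inequality. There is no real obstacle here; the only thing to be careful about is correctly bookkeeping the boundary term $c_n A_n$ in the Abel transformation and verifying that each coefficient $c_j - c_{j+1}$ (and $c_n$) is non-negative, both of which are immediate from the hypotheses.
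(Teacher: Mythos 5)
Your argument is correct and is essentially the same as the paper's: both are Abel summation, converting $\sum_j c_j a_j$ into a non-negative linear combination of prefix sums and then applying the pointwise dominance $A_j \le B_j$. The paper sets $c_{n+1}=0$ and writes $c_j = \sum_{i\ge j}(c_i-c_{i+1})$ before swapping the order of summation, which absorbs the boundary term $c_n A_n$ that you keep explicit, but the two computations are identical.
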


\begin{proof}
By defining $c_{n+1} = 0$, we can write
$c_j = \sum_{i=j}^n (c_i - c_{i+1})$. 
Now, we get that
\begin{align*}
\sum_{j=1}^n c_j a_j
& = \sum_{j=1}^n \sum_{i=j}^n a_j (c_i - c_{i+1})
\; = \; \sum_{i=1}^n (c_i - c_{i+1}) \sum_{j=1}^i a_j
\; \leq \; \sum_{i=1}^n (c_i - c_{i+1}) \sum_{j=1}^i b_j
& = & \sum_{j=1}^n c_j b_j.
\end{align*}
The inequality followed by the assumption on the prefixes and the
non-negativity of all the $c_i - c_{i+1}$ terms.
\end{proof}

\begin{extraproof}{Lemma~\ref{lem:small-increase-welfare}}
Let \SIG[k] be the signal with price point $k$ to which probability
mass was added.
This addition can only lead to an
increase in social welfare via an increase in buyer types that were
previously not allocated to any signal, meaning that we are interested
only in types $i$ with $\sum_{\SIGS} \pmS{\SIGS}{i} < \PVal{i}$.
In addition to the indices \nfS{j} defined earlier for $j \geq 0$, define
$\nfS{0} > \nfS{-1} > \nfS{-2} > \dots$ to be all of the indices $i < \nfS{0}$ 
with $\pmS{\SIGS}{\SIGSETS} < \PVal{i}$.

We will first show that those types $i < \nfS{0}$ do not actually lead
to any welfare changes under \PMS.
Thereto, consider some type \nfS{j} with $j < 0$.
Under \PMS, buyers of this type can only be allocated to a signal \SIGS with
$\SigIS{\SIGS} \leq \nfS{j}$. 
For all such signals \SIGS, by definition, we have that
$\nfS[{\SIGS}]{1} \leq \nfS{j} < \nfS{0} \leq k$, so
Lemma~\ref{lem:non-negative-delta} implies that
$\dS{\SIGS}{[\SigIS{\SIGS},\nfS[{\SIGS}]{1})} = 0$. 
By Part (\ref{lem:subsequence:differentJ}) of Lemma~\ref{lem:subsequence},
\nfS{j} and \nfS{j+1} also occur as \nfS[\SIGS]{j'}, \nfS[\SIGS]{j''},
so we can apply Lemma~\ref{lem:compensate-for-loss} to conclude that
$\dS{\SIGS}{[\nfS{j}, \nfS{j+1})} = 0$, and by summing obtain that for
all $i < 0$,
\begin{align*}
\sum_{\SIGS} \dS{\SIGS}{\nfS{j}}
& \stackrel{(*)}{=} \sum_{\SIGS} \dS{\SIGS}{[\nfS{j}, \nfS{j+1})}
\; = \; \sum_{\SIGS: \SigIS{\SIGS} \leq \nfS{j}} \dS{\SIGS}{[\nfS{j},\nfS{j+1})}
\; = \; 0
\end{align*}
In the step labeled (*), we are using the fact that 
\nfS{i} is the only index in the range $[\nfS{i},\nfS{i+1})$ at which
the probability mass can increase, and the probability mass for no
other type will decrease. 

By Part~\ref{lem:subsequence:unindexed} of Lemma~\ref{lem:subsequence},
the \nfT{j} form a subsequence of the \nfS{j}.
To compare the effects of changes in the welfare, 
we partition the \nfS{j} into the segments formed by the \nfT{j};
formally, we define $\StoT{\nfS{j}} = \max \Set{\nfT{j'}}{\nfT{j'} \leq \nfS{j}}$.
Let \START be the index $j$ such that $\nfS{\START} = \nfT{1}$.

We can now express the change in social welfare when adding $k$ to $S$
as follows:
\begin{align}
\PWelfare{\PMS} - \PWelfare{\PMSP}
& = \sum_{j} \Val{\nfS{j}} \sum_{\SIGS \geq \SIG[k]} \dS{\SIGS}{\nfS{j}} 
\nonumber \\
& = \sum_{j \geq \START} \Val{\nfS{j}} \cdot \sum_{\SIGS \geq \SIGS[k]} \dS{\SIGS}{[\nfS{j},\nfS{j+1})}
+  \sum_{j=0}^{\START-1} \Val{\nfS{i}} \sum_{\SIGS \geq \SIGS[k]} \dS{\SIGS}{\nfS{j}} 
\nonumber \\ 
& \stackrel{(*)}{=} \sum_{j \geq \START} \Val{\nfS{j}} \cdot 
    \sum_{\SIGS \geq \SIGS[k]} \dS{\SIGS}{[\SigIS{\SIGS},\nfS{1})}
  \cdot (\inversediffi{\nfS{j}}{\nfS{j+1}}) / (\inversediffi{\SigIS{\SIGS}}{\nfS{1}})
+  \sum_{j=0}^{\START-1} \Val{\nfS{i}} \sum_{\SIGS \geq \SIGS[k]} \dS{\SIGS}{\nfS{j}} 
\nonumber \\ 
& = \big( \sum_{j \geq \START} \Val{\nfS{j}}
  \cdot (\inversediffi{\nfS{j}}{\nfS{j+1}}) \big) \cdot
  \big( \sum_{\SIGS \geq \SIGS[k]} \dS{\SIGS}{[\SigIS{\SIGS},\nfS{1})}
  /(\inversediffi{\SigIS{\SIGS}}{\nfS{1}}) \big)
+  \sum_{j=0}^{\START-1} \Val{\nfS{i}} \sum_{\SIGS \geq \SIGS[k]} \dS{\SIGS}{\nfS{j}}.
\label{eqn:wefare-difference}
\end{align}
In the step labeled (*), we applied Lemma~\ref{lem:compensate-for-loss}.
We were allowed to do so, because 
$\nfS{j+1} \geq \nfS{j} \geq \nfT{1} \geq \nfS{1} \geq k \geq \SigIS{\SIG}$,
allowing us to apply 
Part (\ref{lem:subsequence:differentJ}) of Lemma~\ref{lem:subsequence}.

We first analyze the second term 
$\sum_{j=0}^{\START-1} \Val{\nfS{i}} \sum_{\SIGS \geq \SIGS[k]} \dS{\SIGS}{\nfS{j}}$.
If \nfT{0} is defined, then
\begin{align*}
\sum_{j=0}^{\START-1} \Val{\nfS{j}} \sum_{\SIGS \geq \SIGS[k]} \dS{\SIGS}{\nfS{j}}
& \leq \Val{\nfT{0}} \cdot 
  \sum_{\SIGS \geq \SIGS[k]} \sum_{j=1}^{\START-1} \dS{\SIGS}{\nfS{j}}
\; = \; \Val{\nfT{0}} \cdot 
  \sum_{\SIGS \geq \SIGS[k]} \dS{\SIGS}{[\SigIS{\SIGS},\nfT{1})}
\; \stackrel{\text{Lemma~\ref{lem:one-signal-comparison}}}{\leq} \; 
  \Val{\nfT{0}} \cdot \sum_{\SIGT \geq \SIGT[k]} \dT{\SIGT}{[\SigIT{\SIGT},\nfT{1})}.
\end{align*}
Otherwise (\nfT{0} is not defined), we will simply use the bound that
$\sum_{\SIGS \geq \SIGS[k]} \dS{\SIGS}{[\SigIS{\SIGS},\nfS{1})} \leq 0$
by Lemma~\ref{lem:pointwise-dominance}.
For notational convenience, in this case, we will write $\Val{\nfT{0}} = 0$.

We next consider the first factor of the first term of
Equation~\eqref{eqn:wefare-difference}. 
Because $\nfS{j} \geq \StoT{\nfS{j}}$, we obtain that
\begin{align*}
\sum_{j \geq \START} \Val{\nfS{j}} \cdot (\inversediffi{\nfS{j}}{\nfS{j+1}})
& \geq
\sum_{j \geq \START} \Val{\StoT{\nfS{j}}} \cdot (\inversediffi{\nfS{j}}{\nfS{j+1}})
\; = \; \sum_{j' \geq 1} \Val{\nfT{j'}} \cdot
\sum_{j: \StoT{\nfS{j}} = \nfT{j'}} (\inversediffi{\nfS{j}}{\nfS{j+1}})\\
& = \sum_{j' \geq 1} \Val{\nfT{j'}} \cdot (\inversediffi{\nfT{j'}}{\nfT{j'+1}}).
\end{align*}

To analyze the second factor of the first term in 
Equation~\eqref{eqn:wefare-difference}, we first apply
Lemma~\ref{lem:compensate-for-loss} to rewrite
$\sum_{\SIGS \geq \SIGS[k]} 
  \dS{\SIGS}{[\SigIS{\SIGS},\nfS{1})} / (\inversediffi{\SigIS{\SIGS}}{\nfS{1}})
= \sum_{\SIGS \geq \SIGS[k]} 
  \dS{\SIGS}{[\SigIS{\SIGS},\nfT{1})} / (\inversediffi{\SigIS{\SIGS}}{\nfT{1}})$.
We also write
\begin{align*}
\sum_{\SIGT \geq \SIGT[k]} \dT{\SIGT}{[\SigIT{\SIGT},\nfT{1})}
    /(\inversediffi{\SigIT{\SIGT}}{\nfT{1}})
& = \sum_{\SIGS \geq \SIGS[k]}   \sum_{\SIGT: \ITtoISSUM{\SIGS}}
       \dT{\SIGT}{[\SigIT{\SIGT},\nfT{1})}
     /(\inversediffi{\SigIT{\SIGT}}{\nfT{1}})
\;\\ &\geq \; \sum_{\SIGS \geq \SIGS[k]} 
     1/(\inversediffi{\SigIS{\SIGS}}{\nfT{1}})
    \sum_{\SIGT: \ITtoISSUM{\SIGS}} \dT{\SIGT}{[\SigIT{\SIGT},\nfT{1})}.
\end{align*}
By Lemma~\ref{lem:one-signal-comparison}, we have that
$\sum_{\SIG=\SIG[k]}^{\ITtoIS{\SIGT[']}} \dS{\SIG}{[\SigI{\SIG},\nfT[{\SIGT[']}]{j})}
\leq \sum_{\SIGT=\SIGT[k]}^{\SIGT[']} \dT{\SIGT}{[\SigIT{\SIGT},\nfT[{\SIGT[']}]{j})}$
for all $\SIGT['] \geq \SIGT[k]$.

We can therefore apply Lemma~\ref{lem:prefix-sums-imply-combination}
with $a_{\SIGS} = \dS{\SIG}{[\SigI{\SIG},\nfT[{\SIGT[']}]{1})}$,
$b_{\SIGS} = \sum_{\SIGT: \ITtoISSUM{\SIGS}} \dT{\SIGT}{[\SigIT{\SIGT},\nfT{1})}$,
and $c_{\SIGS} = 1/(\inversediffi{\SigIS{\SIGS}}{\nfT{1}})$, and
conclude that
\begin{align*}
\sum_{\SIGS \geq \SIGS[k]} 
  \dS{\SIGS}{[\SigIS{\SIGS},\nfS{1})} / (\inversediffi{\SigIS{\SIGS}}{\nfS{1}})
& = \sum_{\SIGS \geq \SIGS[k]} 
  \dS{\SIGS}{[\SigIS{\SIGS},\nfT{1})} / (\inversediffi{\SigIS{\SIGS}}{\nfT{1}})
\; \leq \; \sum_{\SIGS \geq \SIGS[k]} 
     1/(\inversediffi{\SigIS{\SIGS}}{\nfT{1}}) \cdot
  \sum_{\SIGT: \ITtoISSUM{\SIGS}} \dT{\SIGT}{[\SigIT{\SIGT},\nfT{1})}\\
& \leq \sum_{\SIGT \geq \SIGT[k]} \dT{\SIGT}{[\SigIT{\SIGT},\nfT{1})}
    /(\inversediffi{\SigIT{\SIGT}}{\nfT{1}}).
\end{align*}

Recalling that 
$\sum_{\SIGS \geq \SIGS[k]} 
  \dS{\SIGS}{[\SigIS{\SIGS},\nfS{1})} / (\inversediffi{\SigIS{\SIGS}}{\nfS{1}})
\leq 0$ by Lemma~\ref{lem:pointwise-dominance}, we now put all of
these inequalities together, yielding that
\begin{align*}
\PWelfare{\PMS} - \PWelfare{\PMSP} 
& \leq 
\big(\sum_{j' \geq 1} \Val{\nfT{j'}} \cdot (\inversediffi{\nfT{j'}}{\nfT{j'+1}}) \big)
\cdot 
\big( \sum_{\SIGT \geq \SIGT[k]} \dT{\SIGT}{[\SigIT{\SIGT},\nfT{1})}
    / (\inversediffi{\SigIT{\SIGT}}{\nfT{1}}) \big)
+ \Val{\nfT{0}} \cdot \sum_{\SIGT \geq \SIGT[k]} \dT{\SIGT}{[\SigIT{\SIGT},\nfT{1})} \\
& = 
\sum_{j' \geq 1} \Val{\nfT{j'}} \cdot 
\sum_{\SIGT \geq \SIGT[k]} \dT{\SIGT}{[\SigIT{\SIGT},\nfT{1})}
\cdot (\inversediffi{\nfT{j'}}{\nfT{j'+1}})
    / (\inversediffi{\SigIT{\SIGT}}{\nfT{1}})
+ \Val{\nfT{0}} \cdot \sum_{\SIGT \geq \SIGT[k]} \dT{\SIGT}{[\SigIT{\SIGT},\nfT{1})}\\
& \stackrel{\text{Lemma~\ref{lem:compensate-for-loss}}}{=}
\sum_{j' \geq 1} \Val{\nfT{j'}} \cdot 
\sum_{\SIGT \geq \SIGT[k]} \dT{\SIGT}{[\nfT{j'}, \nfT{j'+1})}
+ \Val{\nfT{0}} \cdot \sum_{\SIGT \geq \SIGT[k]} \dT{\SIGT}{[\SigIT{\SIGT},\nfT{1})} \\
& = \sum_{j' \geq 0} \Val{\nfT{j'}} \cdot 
 \sum_{\SIGT \geq \SIGT[k]} \dT{\SIGT}{\nfT{j'}}\\
& =  \PWelfare{\PMT} - \PWelfare{\PMTP}.
\end{align*}

Finally, by noticing that the respective increases in \pwf
are the negatives of the terms here 
(i.e., $\PWelfare{\PMSP} - \PWelfare{\PMS}$ 
and $\PWelfare{\PMTP} - \PWelfare{\PMT}$, respectively), we complete the
proof of Lemma~\ref{lem:small-increase-welfare}.
\end{extraproof}

\end{document}